\pdfoutput=1 
\PassOptionsToPackage{unicode}{hyperref}
\PassOptionsToPackage{hyphens}{url}
\PassOptionsToPackage{dvipsnames,svgnames,x11names}{xcolor}
\documentclass[
  12pt]{article}

\usepackage{amsmath,amssymb}
\usepackage{iftex}
\ifPDFTeX
  \usepackage[T1]{fontenc}
  \usepackage[utf8]{inputenc}
  \usepackage{textcomp} 
\else 
  \usepackage{unicode-math}
  \defaultfontfeatures{Scale=MatchLowercase}
  \defaultfontfeatures[\rmfamily]{Ligatures=TeX,Scale=1}
\fi
\usepackage{lmodern}
\ifPDFTeX\else
\fi
\IfFileExists{upquote.sty}{\usepackage{upquote}}{}
\IfFileExists{microtype.sty}{
  \usepackage[]{microtype}
  \UseMicrotypeSet[protrusion]{basicmath} 
}{}
\makeatletter
\@ifundefined{KOMAClassName}{
  \IfFileExists{parskip.sty}{%
    \usepackage{parskip}
  }{
    \setlength{\parindent}{0pt}
    \setlength{\parskip}{6pt plus 2pt minus 1pt}}
}{
  \KOMAoptions{parskip=half}}
\makeatother
\usepackage{xcolor}
\makeatletter
\ifx\paragraph\undefined\else
  \let\oldparagraph\paragraph
  \renewcommand{\paragraph}{
    \@ifstar
      \xxxParagraphStar
      \xxxParagraphNoStar
  }
  \newcommand{\xxxParagraphStar}[1]{\oldparagraph*{#1}\mbox{}}
  \newcommand{\xxxParagraphNoStar}[1]{\oldparagraph{#1}\mbox{}}
\fi
\ifx\subparagraph\undefined\else
  \let\oldsubparagraph\subparagraph
  \renewcommand{\subparagraph}{
    \@ifstar
      \xxxSubParagraphStar
      \xxxSubParagraphNoStar
  }
  \newcommand{\xxxSubParagraphStar}[1]{\oldsubparagraph*{#1}\mbox{}}
  \newcommand{\xxxSubParagraphNoStar}[1]{\oldsubparagraph{#1}\mbox{}}
\fi
\makeatother

\usepackage{longtable,booktabs,array}
\usepackage{calc} 
\usepackage{etoolbox}
\makeatletter
\patchcmd\longtable{\par}{\if@noskipsec\mbox{}\fi\par}{}{}
\makeatother
\IfFileExists{footnotehyper.sty}{\usepackage{footnotehyper}}{\usepackage{footnote}}
\makesavenoteenv{longtable}
\usepackage{graphicx}
\makeatletter
\def\maxwidth{\ifdim\Gin@nat@width>\linewidth\linewidth\else\Gin@nat@width\fi}
\def\maxheight{\ifdim\Gin@nat@height>\textheight\textheight\else\Gin@nat@height\fi}
\makeatother
\setkeys{Gin}{width=\maxwidth,height=\maxheight,keepaspectratio}
\makeatletter
\def\fps@figure{htbp}
\makeatother

\makeatletter
\@ifpackageloaded{caption}{}{\usepackage{caption}}
\AtBeginDocument{%
\ifdefined\contentsname
  \renewcommand*\contentsname{Table of contents}
\else
  \newcommand\contentsname{Table of contents}
\fi
\ifdefined\listfigurename
  \renewcommand*\listfigurename{List of Figures}
\else
  \newcommand\listfigurename{List of Figures}
\fi
\ifdefined\listtablename
  \renewcommand*\listtablename{List of Tables}
\else
  \newcommand\listtablename{List of Tables}
\fi
\ifdefined\figurename
  \renewcommand*\figurename{Figure}
\else
  \newcommand\figurename{Figure}
\fi
\ifdefined\tablename
  \renewcommand*\tablename{Table}
\else
  \newcommand\tablename{Table}
\fi
}
\@ifpackageloaded{float}{}{\usepackage{float}}
\floatstyle{ruled}
\@ifundefined{c@chapter}{\newfloat{codelisting}{h}{lop}}{\newfloat{codelisting}{h}{lop}[chapter]}
\floatname{codelisting}{Listing}

\makeatother
\makeatletter
\@ifpackageloaded{caption}{}{\usepackage{caption}}
\@ifpackageloaded{subcaption}{}{\usepackage{subcaption}}
\makeatother

\ifLuaTeX
  \usepackage{selnolig}  
\fi
\usepackage[]{natbib}
\usepackage{bookmark}

\IfFileExists{xurl.sty}{\usepackage{xurl}}{} 
\hypersetup{
  pdftitle={Title},
  pdfauthor={Author 1; Author 2},
  pdfkeywords={3 to 6 keywords, that do not appear in the title},
  colorlinks=true,
  linkcolor={blue},
  filecolor={Maroon},
  citecolor={Blue},
  urlcolor={Blue},
  pdfcreator={LaTeX via pandoc}}

\usepackage{setspace}
\usepackage{enumitem}
\usepackage{url}
\usepackage{float}
\usepackage{booktabs}
\usepackage{tikz}
\usetikzlibrary{shapes.geometric,arrows.meta,positioning}
\tikzset{
  boxW/.store in=\boxW, boxW = 17.5cm,
  boxH/.store in=\boxH, boxH=7.5cm,
  diaW/.store in=\diaW, diaW=3.5cm,
  diaH/.store in=\diaH, diaH = 1.5cm,
  block/.style   = {rectangle, rounded corners = 1.2mm, draw=black, very thick,
                    align = center, inner sep=5pt, fill=white,
                    text width=\boxW, minimum height=\boxH},
  decision/.style= {diamond, draw=black, very thick, align = center, fill=white,
                    inner ysep=-5pt, text width=\diaW,
                    minimum width=\diaW, minimum height=\diaH},
  result/.style  = {rectangle, rounded corners = 1.2mm, draw=black, very thick,
                    align=left, inner sep=5pt, fill=white,
                    text width=\boxW, minimum height=\boxH},
  arrow/.style   = {-{Latex[length=2mm]}, very thick},
  tinycap/.style = {font=\footnotesize}
}

\PassOptionsToPackage{table}{xcolor}
\usepackage{tabularray}
\usepackage{colortbl}
\usepackage{tcolorbox}
\usepackage{amsfonts,amsthm}
\usepackage{mathtools}
\usepackage{bbm}
\usepackage{bm}
\usepackage{csquotes}
\usepackage[noabbrev,capitalise]{cleveref}
\usepackage{xr-hyper}
\usepackage{listings}

\newtheoremstyle{newstyle}
{10pt} 
{10pt} 
{\itshape\onehalfspacing} 
{} 
{\bfseries} 
{.} 
{ } 
{} 

\theoremstyle{newstyle}

\newtheorem{prop}{Proposition}

\newtheorem{lem}{Lemma}
\newtheorem{cor}{Corollary}

\newtheorem{assm}{Assumption}
\newtheorem{rmk}{Remark}

\DeclareMathOperator{\E}{\rm{E}}
\DeclareMathOperator{\R}{\mathbbm{R}}

\DeclareMathOperator{\Var}{\rm{Var}}

\DeclareMathOperator*{\argmin}{arg\,min}

\DeclarePairedDelimiter\abs{\lvert}{\rvert}
\newcommand{\given}[1][]{\;#1\mid\;}

\allowdisplaybreaks

\usepackage[compact,bottomtitles]{titlesec}
\titlespacing*{\section}{0pt}{.8ex plus 0.1ex minus 1ex}{0.5ex plus 0.1ex minus 1ex}
\titlespacing*{\subsection}{0pt}{.6ex plus 0.1ex minus 1ex}{0.4ex plus 0.1ex minus 1ex}
\titlespacing*{\subsubsection}{0pt}{.4ex plus 0.1ex minus 1ex}{0.3ex plus 0.1ex minus 1ex}
\newcommand{\anon}{1}

\begin{document}

\def\spacingset#1{\renewcommand{\baselinestretch}%
{#1}\small\normalsize} \spacingset{1}

\providecommand{\AlphaLevel}{0.05}
\providecommand{\PctSectorMissing}{34}
\providecommand{\PctBeatMissing}{80}
\providecommand{\BaselineTauPP}{2.30}
\providecommand{\BaselineSE}{0.0026}
\providecommand{\GammaRhoZeroInsig}{1.06}
\providecommand{\RhoMinOffsetAtGammaStar}{0.05}
\providecommand{\PlateauGammaMin}{1.06}
\providecommand{\PlateauGammaMax}{1.19}
\providecommand{\PlateauRho}{0.05}
\providecommand{\GammaAllInsigMin}{1.33}
\providecommand{\GammaHighestRhoSigMax}{1.23}
\providecommand{\RhoSigMaxHighest}{0.80}
\providecommand{\RhoNonsigMinHighest}{0.85}
\providecommand{\GammaLowestRhoSigMax}{1.32}
\providecommand{\RhoSigMaxLowest}{0.35}
\providecommand{\RhoNonsigMinLowest}{0.40}
\providecommand{\ChangePointLower}{1.32}
\providecommand{\ChangePointUpper}{1.33}
\providecommand{\GeoChangePointXiZeroRhoLower}{1.37}
\providecommand{\GeoCdfAtChangePointXiZeroRhoLower}{23.2}
\providecommand{\GeoCdfComplementXiZeroRhoLower}{77}
\providecommand{\GeoChangePointXiZeroRhoUpper}{1.36}
\providecommand{\GeoCdfAtChangePointXiZeroRhoUpper}{23.1}
\providecommand{\GeoCdfComplementXiZeroRhoUpper}{77}
\providecommand{\GeoChangePointXiTwentyFiveRhoLower}{1.37}
\providecommand{\GeoCdfAtChangePointXiTwentyFiveRhoLower}{28.2}
\providecommand{\GeoCdfComplementXiTwentyFiveRhoLower}{72}
\providecommand{\GeoChangePointXiTwentyFiveRhoUpper}{1.37}
\providecommand{\GeoCdfAtChangePointXiTwentyFiveRhoUpper}{28.0}
\providecommand{\GeoCdfComplementXiTwentyFiveRhoUpper}{72}


\if1\anon
{
  \title{\bf Sequential Sensitivity Analysis for Multiple Assumptions: A Framework for Understanding Racial Disparity in Police Use of Force}
  \author{Thomas Leavitt\hspace{.2cm}\\
    Marxe School of Public and International Affairs \\ Baruch College, City University of New York (CUNY)\\
    and \\
    Jake Bowers \\
    Political Science and Statistics\\University of Illinois at Urbana-Champaign\\
    and \\
    Luke Miratrix \\
    Graduate School of Education and Statistics\\Harvard University}
  \maketitle
} \fi

\if0\anon
{
  \bigskip
  \bigskip
  \bigskip
  \begin{center}
    {\LARGE\bf Sequential Sensitivity Analysis for Multiple Assumptions: A Framework for Understanding Racial Disparity in Police Use of Force}
\end{center}
  \medskip
} \fi

\begin{abstract}
\noindent Inferring racial discrimination in police use of force --- the average causal effect of civilian race on use of force --- requires two assumptions about policing prior to potential use of force: that officers do not discriminate in whom they would stop (no discrimination in stops) and that, conditional on patrol context, the probability that an encounter is with a minority rather than a white civilian does not vary across encounters (no bias in encounters). As \citet{knoxetal2020} show, violations of the first can mask racial disparity in force. Whether it reflects discrimination in force also depends on the second. Existing sensitivity analyses address one assumption at a time. We develop a framework that varies both sequentially and apply it to NYPD Stop, Question, and Frisk data (2003--2013). Under plausible levels of discrimination in stops, we find substantial racial \textit{disparity} in force. However, the conclusion that this disparity reflects \textit{discrimination} is fragile to modest departures from no bias in encounters that census-based calibration suggests are demographically feasible. By jointly addressing both confounding channels, the framework reveals how they interact in ways that separate analyses cannot, contributing to understanding what generates racial disparities and how they might be addressed.
\end{abstract}


\newpage
\spacingset{1.8} 
\setlength{\abovedisplayskip}{6pt plus 2pt minus 2pt}
\setlength{\belowdisplayskip}{6pt plus 2pt minus 2pt}
\setlength{\abovedisplayshortskip}{4pt plus 2pt minus 2pt}
\setlength{\belowdisplayshortskip}{4pt plus 2pt minus 2pt}

\section{Introduction}

What explains racial disparity in police use of force? Two mechanisms compete. The first is racial discrimination: an officer would use force against a minority civilian but not against a white civilian in an otherwise identical encounter. The second is variation across officers in their chances of encountering a minority civilian, variation that may be associated with how readily officers use force. Officers move through space, and the probability of encountering a minority civilian depends on the patrol context and the officer's choices within that context.

\citet{fryer2018,fryer2019} finds little racial disparity in use of force after conditioning on patrol-context features, concluding that racial discrimination in force is negligible. This inference rests on two assumptions. The first --- which we call No-Bias-in-Encounters --- holds that within the same context, each encounter has equal probability of being with a minority civilian rather than a white civilian, that \enquote{civilian race is `as good as randomly assigned'} \citep[][p.~229]{fryer2018}. The second is that officers do not discriminate in whom they stop.

Police administrative data include only encounters that escalate to a stop, so even under No-Bias-in-Encounters, racial discrimination in whom officers would stop can hide racial disparity in use of force. \citet{knoxetal2020} develop a sensitivity analysis to reason about the consequences of the second assumption: holding No-Bias-in-Encounters fixed, their framework varies the assumed level of discrimination in stops and traces how conclusions about discrimination in force respond. At plausible levels of discrimination in stops, the analysis reveals a substantial racial disparity in force, supporting the conclusion that under No-Bias-in-Encounters officers would use force at a higher rate against minority civilians than against white civilians in the same encounters.

But No-Bias-in-Encounters is itself a strong assumption that police data cannot verify and that is unlikely to hold exactly. Departures from it can produce the same disparity in use of force without discrimination by officers. The two confounding channels --- sample selection and encounter assignment --- have been studied in separate methodological traditions \citep{manski1999,rosenbaum1999a,rosenbaum1999b}, but police data present them together; causal inference about racial discrimination in use of force requires the two assumptions jointly.

Building on \citet{knoxetal2020}, we develop such a joint sensitivity framework. The framework proceeds in two steps. The first posits a level of racial discrimination in stops and uses it to handle the sample selection problem; the second assesses sensitivity to bias in encounters on the resulting data. These two steps cannot be separated: conducting the encounter-bias analysis first requires specifying \textit{some} level of racial discrimination in stops, and the sensitivity to encounter bias depends on \textit{how much} discrimination one specifies.

We apply the framework to the New York Police Department (NYPD) Stop, Question, and Frisk (SQF) data, examining how conclusions about racial discrimination in use of force depend on departures from No-Bias-in-Encounters across plausible levels of discrimination in stops. The conclusion that officers discriminate in force against minority civilians proves sensitive to small departures from No-Bias-in-Encounters --- a finding about what would follow from such hypothetical departures, not whether they occur.

To aid interpretation, we calibrate the sensitivity analysis against census data on the racial composition of NYPD patrol sub-areas. Within each patrol context, local demographics place a ceiling on how minority encounter probabilities can vary across officers: an officer can direct their patrol toward areas of different racial composition but cannot change the composition of any one area. Restricting attention to demographically feasible departures from No-Bias-in-Encounters strengthens the conclusion only modestly --- most patrol areas have enough demographic variation to permit even conclusion-altering departures.

\section{Data and Substantive Motivation} \label{sec: data}

The NYPD public administrative data on SQF stops begin in 2003 and continue through the present. Each record corresponds to an encounter that escalated to a stop. For each stop, the data include the officer's perception of the civilian's race using the NYPD's administrative categories. We restrict attention to stops with civilians categorized as white, Black, or Hispanic. The data also record whether the officer used force, the patrol context (precinct, and when available, sector and beat), temporal information, and other descriptors of the encounter and civilian.

Because these data include only encounters that escalate to a stop, they capture only a selected subset. Most police--civilian encounters would not result in a stop regardless of the civilian's race: passing on the sidewalk, a voluntary ``request for information,'' or a common-law inquiry short of reasonable suspicion. Nested within all encounters are those with \textit{potentially stoppable civilians}: encounters that could generate a stop, possibly depending on the civilian's race. If officers discriminate in whom they stop, only a subset of these potentially stoppable encounters appear in the SQF data; among those stops, only a further subset escalate to use of force. \Cref{fig: encounter hierarchy} illustrates this hierarchy.
\begin{figure}[H]
\centering
\includegraphics[width=0.58\linewidth]{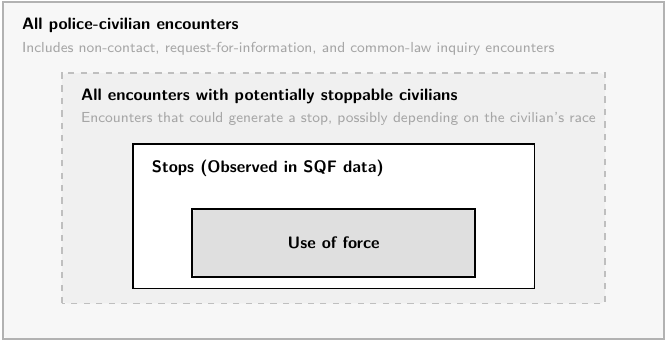}%
\caption{Nested populations of police--civilian interactions. The dashed region shows encounters with potentially stoppable civilians: encounters that could generate a stop, possibly depending on civilian race. Our analysis focuses on this region.}
\label{fig: encounter hierarchy}
\end{figure}

We ask how civilian race affects officer use of force among potentially stoppable encounters. These encounters --- the dashed region of \Cref{fig: encounter hierarchy} --- are where officers may intrude on civilian liberty, by initiating a stop or, more severely, by using force. Encounters outside this region cannot escalate to a stop, so including them would add structural zeros and mechanically attenuate the effect.

The observed data arise from a two-stage process. The first stage, the \textit{assignment mechanism}, governs how each officer-civilian encounter comes to be with either a minority civilian or a white civilian. The second stage, \textit{sample selection}, determines which of those encounters escalate to a stop and therefore appear in police administrative data.

\subsection{Encounter Assignment Process} \label{sec: encounter assignment}

We draw on opportunity models of crime from criminology and related fields \citep[for a review, see][]{wilcoxcullen2018}. On each day, an officer is assigned to a patrol context --- bureau (Patrol Services, Housing, or Transit), precinct, sector, beat, and tour --- and may have multiple encounters. We conceive those encounters as arising from a stochastic path-intersection process: officers and civilians move through the same environment over time, and an encounter occurs when their paths intersect. Each civilian enters an encounter already bearing a racial category, produced by the civilian's racialization within the social and historical context that gives rise to the NYPD's categories. Following the distinction between exposure and perception in \citet[][pp.~259--260]{hukohler-hausmann2025}, we take the race of the civilian in each encounter to be this prior racial status, not the officer's perception during the encounter \citep{greinerrubin2011}. In a ``veil of darkness'' setting \citep{groggerridgeway2006,piersonetal2020}, for example, we still classify an encounter as one with a minority civilian whenever the officer's path intersects that of a minority individual.

In the SQF administrative data, however, civilian race is the officer's post-stop classification on the UF-250 form (reproduced in Supplement Section~S.7), reflecting the officer's perception. We therefore treat this classification as a proxy for the civilian's racial status, not as the race itself. If the stopping decision may depend on race, race cannot be defined by a classification that exists only for civilians who are stopped. Racial discrimination in stops instead requires a notion of civilian race defined before, and independently of, the post-stop marking. The officer's classification and perception need not coincide with this prior categorization; we treat such discrepancies as limitations of the data rather than of the path-intersection framework, and set them aside here. Officer perception may still shape where officers patrol, condition responses to civilians of different races, and mediate the effect of civilian race on use of force, but those processes concern how policing decisions respond to race, not the definition of race in the encounter.

With civilian race so defined, we formalize \textit{encounter}. An \textit{encounter} is a position in the officer's patrol sequence --- the first, the second, and so on --- not the civilian who occupies it. Depending on the paths officers and civilians take, the same position could be occupied by a civilian of a different race. This position is the unit at which we can envision the ``counterfactual substitution of an individual with a different racial identity into the encounter, while holding the encounter's objective context ... fixed'' \citep[][p.~621]{knoxetal2020}.

The encounter's objective context includes the patrol context, fixed across realizations of the path-intersection process: the officer, the tour, the beat (within sector and precinct), and the number of encounters. The civilian profile occupying that encounter may still vary in features beyond race. Two Black male civilians, one at a public-housing entrance and one in a private-home driveway, share racial status but present distinct profiles because location is a nonracial attribute, analogous to an attribute in a conjoint profile \citep{hainmuelleretal2014}. Because the path-intersection process realizes race and nonracial attributes jointly, civilian race may be associated with location within the beat, dress, behavior, and other characteristics.

Holding the officer and patrol context fixed, each civilian's nonracial profile elicits one of four counterfactual stopping behaviors from the officer: stop regardless of race, stop only if minority, stop only if white, or do not stop. Following \citet{knoxetal2020}, we classify nonracial profiles into these four principal strata; the profiles that elicit each behavior may differ across officers and contexts. For example, an officer might stop an assault regardless of race (\textit{Always-Stop}), stop loitering near a housing entrance only if the civilian is a minority (\textit{Only-Minority-Stop}), stop another profile only if the civilian were white (\textit{Only-White-Stop}), or not stop at all (\textit{Never-Stop}). After this classification (formalized in \Cref{sec: strata}), civilians of different races who could occupy a given encounter may differ in nonracial profiles, but only when those differences do not alter the officer's counterfactual stopping behavior. The profile-level classification warrants treating principal strata as attributes of the encounter rather than of the civilian, as in \citet{knoxetal2020}: the civilians who could occupy a given encounter face the same officer in the same patrol context, and their profiles belong to the same principal stratum.

To define the causal effect of race, we ask what would happen in the same encounter with a civilian of a different race. The framework above characterizes this counterfactual: substituting a different-race civilian into the encounter --- with the patrol context and the principal stratum fixed --- defines the effect in the ``all else equal except for race'' convention \citep{heckman1998}. We acknowledge that this convention is contested on both conceptual and normative grounds \citep[see, e.g.,][]{hu2025}. Rather than defend a particular causal contrast, we argue in Supplement Section~S.1 that defining the contrast and inferring its effect are separable tasks: the confounding channels our sensitivity analysis addresses arise regardless of which contrast one adopts. We therefore emphasize the value of a framework that jointly addresses these channels while bracketing debates over which causal contrast to target.

The probability that an encounter is with a minority rather than a white civilian depends on both patrol context and officer behavior. Potentially stoppable civilian populations vary across contexts, and officer-specific characteristics --- patrol style, tolerance for risk, familiarity with an area, and implicit or explicit racial attitudes --- may further shape where within a context officers patrol. An officer who associates minority status with criminality may concentrate patrol at buildings or blocks with larger minority populations, raising the probability of a minority-civilian encounter relative to another officer in the same context.

\subsection{Stop Selection Process} \label{sec: stop selection}

If officers discriminate in whom they stop, some potentially stoppable encounters --- the dashed region of \Cref{fig: encounter hierarchy} --- go unrecorded, creating a sample selection problem. \citet{knoxetal2020} address this with two assumptions about the sample-selection process, formalized in \Cref{sec: setup}: No-Force-Without-Stop (if no stop occurs, force cannot occur) and No-Only-White-Stops (no encounter would result in a stop only if the civilian were white). Under these assumptions, the data miss only nonforce encounters with white civilians who would have been stopped had they been minority.

\section{Formal Setup}\label{sec: setup}

We target the average causal effect of civilian race on use of force among encounters that could result in a stop. Defining this estimand requires assumptions about how encounters relate to one another, how race affects the stopping decision, and how officers encounter civilians of different races. We state these assumptions in turn, connecting each to the substantive setting in \Cref{sec: data}.

\subsection{Encounters and Potential Outcomes}\label{sec: setup:po}

Condition on the realized total of $N$ encounters and index them $i \in \{1, \ldots, N\}$. The civilian profile in encounter $i$ is $\bm{t}_i \coloneqq (z_i, \bm{v}_i) \in \mathcal{T}$, where $\mathcal{T} \coloneqq \{0, 1\} \times \mathcal{V}$, $z_i \in \{0, 1\}$ is the civilian's racial status ($z_i = 1$ for Black or Hispanic, $z_i = 0$ for white), and $\bm{v}_i \in \mathcal{V}$ collects nonracial attributes. Following \Cref{sec: encounter assignment}, $z_i$ records the civilian's racial status in the exposure sense \citep{hukohler-hausmann2025} --- the race of the civilian whose path intersects with the officer's --- not the officer's perception.

To state our no-interference assumption, we initially allow each encounter's outcomes to depend on the full profile vector $\bm{t} \coloneqq (\bm{t}_1, \ldots, \bm{t}_N) \in \mathcal{T}^N$ via potential stopping and use-of-force outcome functions $s_i, y_i: \mathcal{T}^N \to \{0, 1\}$.

\begin{assm}[No interference]\label{assm: no-interference}
For all $i \in \{1, \ldots, N\}$ and all $\bm{t}, \bm{t}^{\prime} \in \mathcal{T}^{N}$ satisfying $\bm{t}_{i} = \bm{t}^{\prime}_{i}$, $s_{i}(\bm{t}) = s_{i}(\bm{t}^{\prime}) \quad \text{and} \quad y_{i}(\bm{t}) = y_{i}(\bm{t}^{\prime}).$
\end{assm}
\noindent Under \Cref{assm: no-interference}, the potential outcomes for encounter $i$ depend on only the profile $\bm{t}_{i}$ occupying that encounter. We therefore write $s_{i}(z, \bm{v})$ and $y_{i}(z, \bm{v})$ hereafter.

We also assume that the potential outcomes for use of force and stops follow the nested hierarchy in \Cref{fig: encounter hierarchy} --- analogous to \citet{knoxetal2020}'s ``Mandatory Reporting'' assumption.
\begin{assm}[No-Force-Without-Stop] \label{assm: no-force-without-stop}
For all encounters $i \in \{1, \ldots, N\}$ and $z \in \{0, 1\}$, $y_{i}(z, \bm{v}) \leq s_{i}(z, \bm{v})$.
\end{assm} \vspace{-.5em}
\noindent This is a \textit{structural zero} assumption \citep[in the sense of][]{zhangrubin2003} in which encounters not resulting in a stop are constrained to have no use of force.

\subsection{Principal Strata and the Potentially Stoppable Population} \label{sec: strata}

Under \Cref{assm: no-interference}, the stopping potential outcome $s_i(z, \bm{v})$ is a well-defined function of the profile of the civilian occupying encounter $i$. Because both $z$ and $s_i$ are binary, every nonracial profile $\bm{v} \in \mathcal{V}$ falls into exactly one of four categories defined by the pair $(s_i(1, \bm{v}), s_i(0, \bm{v}))$. Following \citet{knoxetal2020}, we label the four resulting principal strata \textit{Always-Stop}, \textit{Only-Minority-Stop}, \textit{Only-White-Stop}, and \textit{Never-Stop}, with $\mathcal{V}^{\mathrm{AS}}_i, \mathcal{V}^{\mathrm{OMS}}_i, \mathcal{V}^{\mathrm{OWS}}_i, \mathcal{V}^{\mathrm{NS}}_i$ defined by $(s_i(1, \bm{v}), s_i(0, \bm{v}))$ equal to $(1,1), (1,0), (0,1), (0,0)$ respectively. Because the stopping potential outcome reflects the judgment of a particular officer in a particular patrol context, the partition is encounter-specific, and the same nonracial profile may belong to different principal strata in different encounters.

Because $s_i(z, \bm{v})$ is constant on $\mathcal{V}_i^r$, we introduce the principal stratum label function $r_i: \mathcal{V} \to \{\mathrm{AS}, \mathrm{OMS}, \mathrm{OWS}, \mathrm{NS}\}$, where $r_i(\bm{v})$ is the unique label $r$ such that $\bm{v} \in \mathcal{V}_i^r$, and write $s_i(z, r)$ for the common value of $s_i(z, \bm{v})$ on $\mathcal{V}_i^r$.

We now impose a parallel restriction on the use of force potential outcomes.
\begin{assm}[Use-of-force depends only on race within principal strata] \label{assm: stratum-sufficiency}
For every $i \in \{1, \ldots, N\}$ and any $z \in \{0, 1\}$, if $r_i(\bm{v}) = r_i(\bm{v}^{\prime})$, then $y_i(z, \bm{v}) = y_i(z, \bm{v}^{\prime})$.
\end{assm}
\noindent Civilians of the same race who could occupy a given encounter may differ in their nonracial profiles, but \Cref{assm: stratum-sufficiency} implies that such differences are irrelevant for the use of force potential outcomes when those civilians belong to the same principal stratum. The use of force potential outcome therefore depends on $\bm{v}$ only through the stratum label $r_i(\bm{v})$, and we write $y_i(z, r)$ for its common value across all profiles with $r_i(\bm{v}) = r$.

We condition on the realized principal stratum label $r_i(\bm{v}_i)$, which takes one of the values in $\{\mathrm{AS}, \mathrm{OMS}, \mathrm{OWS}, \mathrm{NS}\}$. Because $s_i(z, r)$ and $y_i(z, r)$ depend on $\bm{v}$ only through $r$, conditioning on $r_i(\bm{v}_i)$ eliminates the remaining non-race channel, so civilians who could fill the encounter differ only in their race. To lighten the notation in what follows, we write $r_i$ for $r_i(\bm{v}_i)$, and correspondingly $s_i(z)$ for $s_i(z, r_i)$ and $y_i(z)$ for $y_i(z, r_i)$, leaving the conditioning implicit. From here on, both the potential outcomes and the assignment mechanism are specified in terms of civilian race $z$ alone.

Our substantive interest in encounters that could result in a stop or use of force (\Cref{sec: data}), together with the structural constraint imposed by \Cref{assm: no-force-without-stop}, motivates restricting attention to \textit{potentially stoppable} encounters --- those whose principal stratum is AS, OMS, or OWS. Because we condition on $r_i$, this label is held fixed across possible assignments rather than being a property of the civilian who occupies the encounter, and the set of potentially stoppable encounters is therefore well-defined independently of the realized assignment. We therefore let $n \leq N$ denote the number of potentially stoppable encounters and re-index them as $i \in \{1, \ldots, n\}$.

Among the potentially stoppable encounters, we further restrict which principal strata are present, imposing \citep[following][]{knoxetal2020} that no encounter would result in a stop only if the civilian were white.
\begin{assm}[No-Only-White-Stops] \label{assm: no-OWS}
For all encounters $i \in \{1, \ldots, n\}$, $s_{i}(1) \geq s_{i}(0)$.
\end{assm}
\noindent \Cref{assm: no-OWS} rules out the Only-White-Stop principal stratum, implying that $r_i \in \{\mathrm{AS},\mathrm{OMS}\}$ for all $i$. Like \citet{knoxetal2020}, we regard the assumption of few Only-White-Stops as plausible, although exactly \textit{no} Only-White-Stops is perhaps less so.

\subsection{Stratification and the Assignment Model} \label{sec: setup:assignment}

Each encounter $i \in \{1, \ldots, n\}$ inherits a baseline covariate vector $\bm{x}_i$ summarizing the patrol context (precinct, beat, time of day, Impact Zone status, and additional features detailed in \Cref{sec: application}) prior to the civilian occupying that encounter. We group encounters into strata based on $\bm{x}_i$, with $\mathcal{G}$ denoting the resulting strata, $g_i \in \mathcal{G}$ the stratum of encounter $i$, and $n_g$ the number of encounters in stratum $g$ (indexed $i = 1, \ldots, n_g$).

Let $\bm{z}_{g} \coloneqq (z_{g,1}, \ldots, z_{g,n_{g}}) \in \{0, 1\}^{n_{g}}$ denote the vector of civilian-race indicators in stratum $g$. Define $n_{g,1} \coloneqq \sum_{i=1}^{n_g} z_{g,i}$ and $n_{g,0} \coloneqq n_g - n_{g,1}$ as the numbers of minority-civilian and white-civilian encounters, respectively. Our inferential framework requires at least one minority-civilian and one white-civilian encounter within each stratum, and we therefore restrict attention to $\mathcal{G}^{\ast} \coloneqq \{g \in \mathcal{G} : n_{g,1} \geq 1 \text{ and } n_{g,0} \geq 1\}$. Encounters in strata outside $\mathcal{G}^{\ast}$ are excluded from the analysis, and we let $n^{\ast} \coloneqq \sum_{g \in \mathcal{G}^{\ast}} n_{g} \leq n$.

Conditioning on the realized count $n_{g,1}$ within each $g \in \mathcal{G}^*$, the assignment space is $\Omega_g \coloneqq \{\bm{z}_g \in \{0,1\}^{n_g}: \sum_{i=1}^{n_g} z_{g,i} = n_{g,1}\}$, and the full assignment space is $\Omega \coloneqq \prod_{g \in \mathcal{G}^*} \Omega_g$. Across $\bm{z}_g \in \Omega_g$, only the assignment of civilian races varies; the officer, patrol context, and stopping principal stratum are held fixed.

We adapt a standard restriction on the assignment mechanism \citep[][Chapter~4]{rosenbaum2002a}. The civilian-race indicators $\{Z_{g,i} : i = 1, \ldots, n_g,\; g \in \mathcal{G}\}$ are mutually independent Bernoulli random variables with probabilities $\pi_{g,i}$ and, for any two encounters in the same stratum, the odds that the encounter is with a minority rather than a white civilian may differ by at most a factor $\Gamma \geq 1$:
\begin{align}\label{eq: assign-model}
\dfrac{1}{\Gamma} \leq \dfrac{\pi_{g,i} / (1 - \pi_{g,i})}{\pi_{g,j} / (1 - \pi_{g,j})} \leq \Gamma \quad \text{for all } i, j \in \{1, \ldots, n_{g}\} \text{ and all } g \in \mathcal{G},
\end{align}
where $\pi_{g,i} \coloneqq \Pr(Z_{g,i} = 1)$. When $\Gamma = 1$, all encounters within a stratum have the same probability of being with a minority civilian. Larger values of $\Gamma$ permit progressively greater heterogeneity in these probabilities.

We conduct all inference conditional on the event $\bm{Z}_g \in \Omega_g$, which fixes the number of minority-civilian encounters within each stratum. We therefore write $\varphi_{g,i} \coloneqq \Pr(Z_{g,i} = 1 \given \bm{Z}_g \in \Omega_g)$ for the conditional probability that encounter $i$ in stratum $g$ is with a minority civilian. For a full assignment $\bm{z}_g \in \Omega_g$, we write $p(\bm{z}_g) \coloneqq \Pr(\bm{Z}_g = \bm{z}_g \given \bm{Z}_g \in \Omega_g)$.

We say that \textit{No-Bias-in-Encounters} holds if, within every stratum, all encounters have the same conditional probability of being with a minority civilian. Formally,
\begin{equation}\label{eq: no-bias}
\textbf{No-Bias-in-Encounters:} \quad \varphi_{g,i} = \varphi_{g,j} \quad \text{for all } i, j \in \{1, \ldots, n_{g}\} \text{ and all } g \in \mathcal{G}.
\end{equation}
Under \eqref{eq: assign-model}, $\Gamma = 1$ yields No-Bias-in-Encounters, while $\Gamma > 1$ permits departures from this condition. Referring back to \Cref{sec: data}, $\Gamma = 1$ --- and hence No-Bias-in-Encounters --- could hold if the racial composition of the potentially stoppable civilian population were similar across encounters within a stratum and if officers did not differ in baseline characteristics that systematically shaped their paths through space and time.

\subsection{Causal Parameters}\label{sec: causal parameters}

We define two causal parameters for the subpopulation of potentially stoppable encounters in informative strata $\mathcal{G}^{\ast}$. The first is the average causal effect of civilian race on the stop decision. Under \Cref{assm: no-interference}, this effect within stratum $g \in \mathcal{G}^{\ast}$ is
\begin{align}\label{eq: ate-stop g}
\rho_g \coloneqq \dfrac{1}{n_g}\sum_{i = 1}^{n_g}\left[s_{g,i}(1) - s_{g,i}(0)\right].
\end{align}
The overall average causal effect across informative strata is the weighted average
\begin{align}\label{eq: ate-stop}
\rho \coloneqq \sum_{g \in \mathcal{G}^{\ast}} \left(n_g / n^{\ast}\right) \rho_g.
\end{align}
Under the additional assumption of No-Only-White-Stops (\Cref{assm: no-OWS}), this parameter equals the proportion of potentially stoppable encounters that belong to the Only-Minority-Stop principal stratum. Encounters in this principal stratum are precisely those for which sample selection arises, since a white civilian occupying such an encounter would not trigger a stop and therefore would not appear in police administrative data.

The second causal parameter is the primary quantity of interest: the average causal effect of civilian race on use of force. Under Assumptions \ref{assm: no-interference} and \ref{assm: stratum-sufficiency}, and conditional on principal stratum membership $\{r_{g,i}\}$ established in \Cref{sec: strata}, this effect within stratum $g \in \mathcal{G}^{\ast}$ is
\begin{align}\label{eq: ate-force g}
\tau_g \coloneqq \dfrac{1}{n_g}\sum_{i = 1}^{n_g}\left[y_{g,i}(1) - y_{g,i}(0)\right].
\end{align}
The overall average causal effect across informative strata is the weighted average
\begin{align}\label{eq: ate-force}
\tau \coloneqq \sum_{g \in \mathcal{G}^{\ast}} \left(n_g / n^{\ast}\right) \tau_g.
\end{align}
The parameter $\tau$ is the target of inference for our joint sensitivity analysis.

\section{Sensitivity Analysis for Sample Selection Alone} \label{sec: existing sens analysis}

The average causal effect $\tau$ is not identified from police data because encounters that were not stopped do not appear in the data. Under Assumptions~\ref{assm: no-interference}--\ref{assm: no-OWS}, the stratum-specific effect $\tau_g$ admits a tractable decomposition (proved in Supplement Section~S.3.2):
\begin{align} \label{eq: bias-in-force decomp}
\tau_g & = \bar{y}_g(1) - (1 - \rho_g)\,\bar{y}_g^{\mathrm{AS}}(0),
\end{align}
where $\bar{y}_g(z) \coloneqq n_g^{-1}\sum_{i=1}^{n_g} y_{g,i}(z)$ is the average potential outcome under civilian race $z$ in stratum $g$, $\bar{y}_g^{\mathrm{AS}}(z)$ restricts the average to Always-Stop encounters, and $\rho_g = n_{g,\mathrm{OMS}}/n_g$ is the proportion of Only-Minority-Stop encounters.

The $(1 - \rho_g)$ term accounts for Only-Minority-Stop encounters, which contribute zero to the white-civilian average because a white civilian in such an encounter would not be stopped and hence (under Assumption~2) could not experience force. When $\rho_g = 0$ every encounter is Always-Stop and $\tau_g$ reduces to $\bar{y}_g(1) - \bar{y}_g^{\mathrm{AS}}(0)$; as $\rho_g \to 1$ the subtracted term vanishes and $\tau_g$ rises toward $\bar{y}_g(1)$. Both $\bar{y}_g(1)$ and $\bar{y}_g^{\mathrm{AS}}(0)$ are estimable from stopped encounters, so identification reduces to a sensitivity analysis over $\rho_g$.

Researchers can restrict $\bm{\rho} \coloneqq (\rho_g)_{g \in \mathcal{G}^{\ast}}$ using domain knowledge. \citet[][p.~631]{knoxetal2020} restrict $\rho_g = \rho \in [0.32, 0.34]$ for all $g$ using
excess minority stop rates from \citet{gelmanetal2007} and hit-rate differentials from \citet{goeletal2016}.

A plug-in estimator of \eqref{eq: bias-in-force decomp} adjusts the Difference-in-Means among stopped encounters by $\rho_g$:
\begin{align} \label{eq: est tau_g}
\hat{\tau}_g \coloneqq \hat{\bar{y}}_g(1) - (1-\rho_g)\hat{\bar{y}}_g(0),
\end{align}
where $\hat{\bar{y}}_g(z)$ is the mean use of force among stopped encounters of civilian race $z$ in stratum $g$ (Supplement Section~S.4 gives the explicit ratios). Under Assumptions~\ref{assm: no-interference}--\ref{assm: no-OWS} and No-Bias-in-Encounters \eqref{eq: no-bias}, $\hat{\tau}_g$ is consistent for $\tau_g$ as stratum sizes grow with the number of strata held fixed (Supplement Section~S.4). This consistency result justifies estimating the aggregate $\tau$ by weighting each $\hat{\tau}_g$ by its stratum share. Standard asymptotic theory then permits inference for $\tau$ at any fixed $\bm{\rho}$ \citep[][Theorem2.1]{bickelvanzwet1978}.

Departures from No-Bias-in-Encounters expose two issues with this $\rho_g$-only sensitivity analysis, the first for estimation and the second for inference. First, $\rho_g$ is a proportion, not a count, so $\rho_g$ does not determine $n_g$ or the assignment space $\Omega_g$. Under uniform assignment, this is harmless --- $\hat{\tau}_g$ is an IPW estimator \citep{horvitzthompson1952} whose $\Omega_g$-dependence cancels --- but under nonuniform assignment the cancellation fails and the estimator depends on an unidentified $n_g$ through $\abs{\Omega_g}$. Second, inference requires $\Omega_g$ regardless of the assignment distribution, since $p$-values and confidence sets compare the observed test statistic to its distribution across $\Omega_g$. The remedy for both is a sensitivity parameter that fixes the \textit{number} of missing encounters, and therefore $n_g$ and $\Omega_g$.

\section{Sequential Inference and Sensitivity Analysis} \label{sec: seq sens}

We address the two confounding channels --- racial discrimination in stops and racial bias in encounters --- sequentially. The stop-selection channel must come first: until we specify the number of missing white-civilian encounters and thereby determine $n_g$ and $\Omega_g$, the encounter-assignment sensitivity analysis has no assignment space on which to operate. We therefore introduce a sensitivity parameter for discrimination in stops that fixes this count, then conduct inference on the augmented data under the restriction in \eqref{eq: assign-model}, evaluating sensitivity to violations of No-Bias-in-Encounters.

\subsection{\texorpdfstring{Step 1: Augment the Data for Posited $\bm{\underline{\rho}}$}{Step 1: Augment the Data for Posited rho}}

In standard sample selection problems, the number of units is known but some outcomes are missing. Our setting inverts this structure. The use of force outcomes of the unobserved encounters are in fact known --- every missing encounter was not stopped and therefore, by \Cref{assm: no-force-without-stop} (No-Force-Without-Stop), has a use of force outcome of $0$. What is unknown is how many such encounters exist. Under \Cref{assm: no-OWS} (No-Only-White-Stops), the only missing encounters are white-civilian encounters that were not stopped, so each stratum's minority-civilian count $n_{g,1}$ is observed while the white-civilian count $n_{g,0}$, and hence the total $n_g$, is not. Specifying the number of missing white-civilian encounters that were not stopped therefore fixes the total number of encounters and suffices to reconstruct the dataset that would have been observed absent sample selection \citep[see][p.~630]{knoxetal2020}.

Under \Cref{assm: no-OWS} (No-Only-White-Stops), every white-civilian encounter in stratum $g$ is either Always-Stop and observed or Only-Minority-Stop and missing. The conditioning in \Cref{sec: setup:assignment} fixes the race counts $n_{g,1}$ and $n_{g,0}$, but not their Always-Stop and Only-Minority-Stop compositions. The number of missing white-civilian encounters, $n_{g,0,\mathrm{OMS}}(\bm{Z}_g) \coloneqq \sum_{i: r_{g,i} = \mathrm{OMS}} (1 - Z_{g,i})$, is therefore a random variable under $\bm{Z}_g \in \Omega_g$, as is the analogous minority-civilian count $n_{g,1,\mathrm{OMS}}(\bm{Z}_g) \coloneqq \sum_{i: r_{g,i} = \mathrm{OMS}} Z_{g,i}$.

Recall that the discrimination in stops parameter $\rho_g$, defined in \Cref{sec: causal parameters}, is the proportion of encounters in stratum $g$ that are Only-Minority-Stop, given by $\rho_g = n_{g,\mathrm{OMS}} / n_g$. The numerator decomposes as $n_{g,\mathrm{OMS}} = n_{g,0,\mathrm{OMS}}(\bm{Z}_g) + n_{g,1,\mathrm{OMS}}(\bm{Z}_g)$, the sum of the two random-variable counts introduced above. This total is fixed across assignments even though its two summands vary with $\bm{Z}_g$.

Positing a realized value for $n_{g,0,\mathrm{OMS}}(\bm{Z}_g)$ under the observed data specifies the realized value of the first summand but leaves the realized value of $n_{g,1,\mathrm{OMS}}(\bm{Z}_g)$ unspecified. Because the unspecified summand is nonnegative, the posited value places a lower bound $\underline{\rho}_g$ on $\rho_g$. This bound is tight when $n_{g,1,\mathrm{OMS}}(\bm{Z}_g)$ is zero and loosens as that value grows.

The lower bound $\underline{\rho}_g$ can be expressed in terms of a number $w$ of missing Only-Minority-Stop white-civilian encounters, interpreted as a realized value for $n_{g,0,\mathrm{OMS}}(\bm{Z}_g)$ under the observed data. Each value of $w$ determines a total encounter count $n_{g,1} + n_{g,0,\mathrm{AS}} + w$ and a corresponding lower bound $\underline{\rho}_g = w / (n_{g,1} + n_{g,0,\mathrm{AS}} + w)$ on the proportion of Only-Minority-Stop encounters in the stratum. Conversely, every feasible value of $\underline{\rho}_g$ maps to a unique $w$. \Cref{prop: bias-in-stops missing OMS control} formalizes this equivalence.
\begin{prop}[Equivalence of Discrimination-in-Stops bound and missing control encounters]
\label{prop: bias-in-stops missing OMS control}
Under Assumptions \ref{assm: no-interference} -- \ref{assm: no-OWS}, fix a stratum $g \in \mathcal{G}^{\ast}$ with observed minority-civilian count $n_{g,1}$ and observed Always-Stop white-civilian count $n_{g,0,\mathrm{AS}}$. For each posited count $w \in \mathbb{Z}_{\geq 0}$ of missing Only-Minority-Stop control encounters, the map $w \mapsto w/(n_{g,1} + n_{g,0,\mathrm{AS}} + w)$ is a bijection from $\mathbb{Z}_{\geq 0}$ to the feasible domain $\mathcal{F}_{\underline{\rho}_g} \coloneqq \{w/(n_{g,1} + n_{g,0,\mathrm{AS}} + w) : w \in \mathbb{Z}_{\geq 0}\} \subset [0,1)$, with inverse $w = \underline{\rho}_g(n_{g,1} + n_{g,0,\mathrm{AS}})/(1 - \underline{\rho}_g)$.
\end{prop}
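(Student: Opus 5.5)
Write $m \coloneqq n_{g,1} + n_{g,0,\mathrm{AS}}$ and $f(w) \coloneqq w/(m+w)$. The plan has three steps: show $f$ is well defined on $\mathbb{Z}_{\geq 0}$ with values in $[0,1)$, show it is injective, and then invert it explicitly. Surjectivity onto $\mathcal{F}_{\underline{\rho}_g}$ holds by construction, since that set is defined as the image $f(\mathbb{Z}_{\geq 0})$. So the substance is well-definedness, the range, injectivity, and the inverse formula.

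First, well-definedness. Because $g \in \mathcal{G}^{\ast}$, we have $n_{g,1} \geq 1$, so $m \geq 1$ and $m + w \geq 1$ for every $w \geq 0$. Note that $n_{g,0} \geq 1$ in $\mathcal{G}^\ast$ refers to the full white count $n_{g,0,\mathrm{AS}} + w$ and is not needed here. For the range, $0 \le w < m + w$ gives $0 \le f(w) < 1$, so the image lies in $[0,1)$.

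Second, injectivity. I would compare $f(w)$ and $f(w')$ by cross-multiplying. Since
\[
f(w) - f(w') = \frac{m(w - w')}{(m+w)(m+w')},
\]
and $m > 0$, $f$ is strictly increasing and therefore injective.

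Third, the inverse. Solve $\underline{\rho}_g = w/(m+w)$ for $w$, which gives $\underline{\rho}_g m = w(1 - \underline{\rho}_g)$. Dividing by $1 - \underline{\rho}_g > 0$, which is legitimate because the range lies in $[0,1)$, yields $w = \underline{\rho}_g m/(1 - \underline{\rho}_g)$. I would then check composition in both directions. Substituting $w$ into $f$ simplifies to $\underline{\rho}_g$. Conversely, for $\underline{\rho}_g = f(w)$ we have $1 - \underline{\rho}_g = m/(m+w)$, and the formula returns $w$.

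The main subtlety is the role of Assumptions \ref{assm: no-interference}--\ref{assm: no-OWS}. The bijection itself is pure arithmetic, and the assumptions enter only in the interpretation. Under \Cref{assm: no-OWS} and \Cref{assm: no-force-without-stop}, the missing encounters are exactly the white-civilian Only-Minority-Stop encounters. Hence the total count $m + w$ and the lower-bound reading of $f(w)$ as $\underline{\rho}_g$ are justified by the discussion preceding the proposition.
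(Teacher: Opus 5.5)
Your proposal is correct and follows essentially the same route as the paper: well-definedness from $n_{g,1}+n_{g,0,\mathrm{AS}}\geq 1$, injectivity by cross-multiplication (your strict-monotonicity identity is the same computation), the image equal to $\mathcal{F}_{\underline{\rho}_g}$ by construction, and the inverse obtained by solving $\underline{\rho}_g(m+w)=w$. The main difference is placement: the paper derives $\underline{\rho}_g = w/(m+w)$ inside the proof, by noting that $\rho_g = (w + n_{g,1,\mathrm{OMS}})/(m+w)$ is minimized at $n_{g,1,\mathrm{OMS}}=0$, whereas you leave that lower-bound reading to the surrounding discussion; you also check both compositions of the inverse, which the paper does not do.
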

\noindent Specifying $\underline{\rho}_g \in \mathcal{F}_{\underline{\rho}_g}$ therefore determines a unique value of $w$, allowing the analyst to augment the observed data with the corresponding appended zero outcomes.

Because $\mathcal{F}_{\underline{\rho}_g}$ is a discrete set (one value for each nonnegative integer $w$), a researcher's chosen $\underline{\rho}_g \in [0, 1)$ may fall between feasible values. We define the corresponding posited count as the nonnegative integer whose implied lower bound is closest to the specified value:
\begin{align} \label{eq: tilde n g 0 OMS}
\tilde{n}_{g,0,\mathrm{OMS}}^{\underline{\rho}_g} & \coloneqq \argmin \limits_{w \in \mathbb{Z}_{\geq 0}} \left\lvert \underline{\rho}_g - \dfrac{w}{n_{g,1} + n_{g,0,\mathrm{AS}} + w} \right\rvert.
\end{align}
The choice of $\underline{\rho}_g$ thus specifies a realized augmented stratum size $\tilde{n}_g^{\underline{\rho}_g} \coloneqq n_{g,1} + n_{g,0,\mathrm{AS}} + \tilde{n}_{g,0,\mathrm{OMS}}^{\underline{\rho}_g}$, and the subsequent analysis varies $\bm{Z}_g$ over assignments in which the minority-civilian count equals the realized $n_{g,1}$.

With $\tilde{n}_{g,0,\mathrm{OMS}}^{\underline{\rho}_g}$ determined, we \textit{augment} stratum $g$'s observed data by appending $\tilde{n}_{g,0,\mathrm{OMS}}^{\underline{\rho}_g}$ zeros to the white-civilian use of force outcomes. The augmented Difference-in-Means is $\hat{\tau}_g^{\underline{\rho}_g} \coloneqq \hat{\bar{y}}_g(1) - \hat{\bar{y}}_g^{\underline{\rho}_g}(0)$, where $\hat{\bar{y}}_g(1)$ is the observed minority-civilian mean and $\hat{\bar{y}}_g^{\underline{\rho}_g}(0)$ is the augmented white-civilian mean: the observed sum of white-civilian force outcomes divided by the sum of the observed white-civilian count and $\tilde{n}_{g,0,\mathrm{OMS}}^{\underline{\rho}_g}$. Under Assumptions \ref{assm: no-interference} -- \ref{assm: no-OWS}, $\hat{\tau}_g^{\underline{\rho}_g}$ coincides with the Difference-in-Means computed from the full set of potentially stoppable encounters in stratum $g$ when the stipulated count $\tilde{n}_{g,0,\mathrm{OMS}}^{\underline{\rho}_g}$ equals the realized value of $n_{g,0,\mathrm{OMS}}(\bm{Z}_g)$ under the observed data (proof in Supplement Section~S.4.1).

Under \Cref{assm: no-force-without-stop}, every appended white-civilian encounter has a use of force outcome of $0$. These appended zeros dilute the white-civilian mean toward zero without changing the minority-civilian mean. The augmented Difference-in-Means $\hat{\tau}_g^{\underline{\rho}_g}$ is therefore monotonically increasing in $\underline{\rho}_g$ and converges to the minority-civilian mean as $\underline{\rho}_g \to 1$.

For example, consider a stratum with one stopped minority-civilian encounter and one stopped white-civilian encounter, both with force; the observed Difference-in-Means is $0$. Stipulating $\underline{\rho}_g = 1/2$ appends two nonforce white-civilian encounters that were not stopped. The appended zeros reduce the white-civilian mean from $1$ to $1/3$ while the minority-civilian mean remains $1$, so the augmented Difference-in-Means rises from $0$ to $2/3$. \Cref{fig: augmentation diagram} illustrates this augmentation procedure.
\begin{figure}[!h]
\centering
\includegraphics[width=0.85\linewidth, trim={0 300 0 50}, clip]{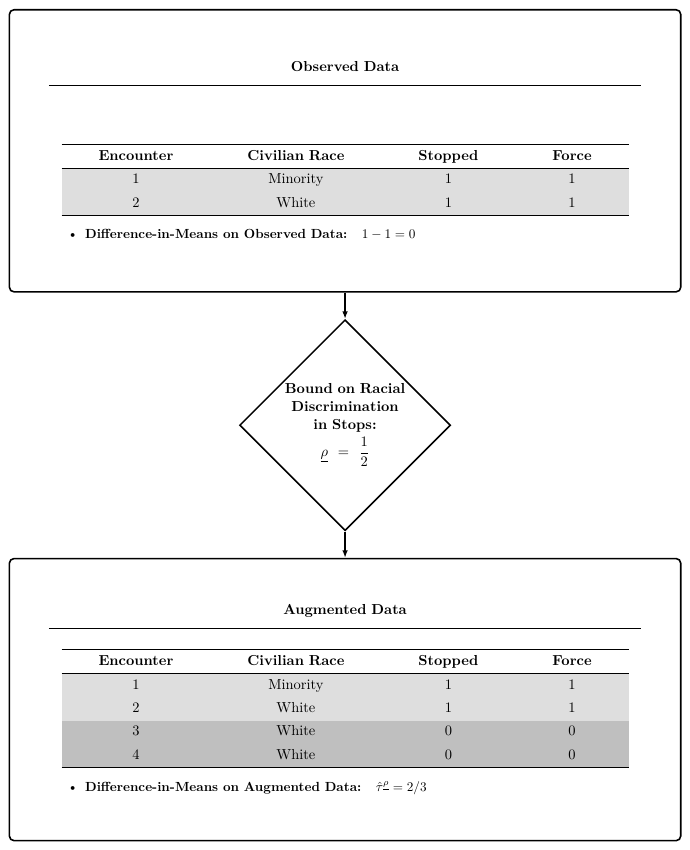}
\caption{The top panel shows the observed data, which omit all nonforce white-civilian encounters that were not stopped, and the corresponding Difference-in-Means computed on the observed data. The middle diamond specifies a lower bound on racial discrimination in stops ($\underline{\rho} = 1/2$), which determines how many nonforce white-civilian encounters that were not stopped to append to the observed data. The bottom panel shows the augmented observed data, with darker shading indicating the appended encounters. The Difference-in-Means $\hat{\tau}^{\underline{\rho}}$ is then computed on this augmented observed data.}
\label{fig: augmentation diagram}
\end{figure}

\subsection{\texorpdfstring{Step 2: Test for Discrimination in Force under Posited $\Gamma$}{Step 2: Test for Force Bias under Posited Gamma}}

Step 1 fixed the augmented stratum size $\tilde{n}_g^{\underline{\rho}_g} \coloneqq n_{g,1} + n_{g,0,\mathrm{AS}} + \tilde{n}_{g,0,\mathrm{OMS}}^{\underline{\rho}_g}$ and with it the augmented assignment space $\Omega_g^{\underline{\rho}_g}$, which contains $\binom{\tilde{n}_g^{\underline{\rho}_g}}{n_{g,1}}$ possible assignments. On this augmented data, we can ask whether the observed racial disparity in force is unusually extreme under the null hypothesis $\tau = \tau_0$, given a specified level of discrimination in stops $\bm{\underline{\rho}} \coloneqq (\underline{\rho}_g)_{g \in \mathcal{G}^{\ast}}$ --- with $\Omega^{\bm{\underline{\rho}}} \coloneqq \prod_{g \in \mathcal{G}^{\ast}} \Omega_g^{\underline{\rho}_g}$ the corresponding across-strata assignment space --- and a specified level of bias in encounters $\Gamma$. Under No-Bias-in-Encounters, assignment is uniform over $\Omega_g^{\underline{\rho}_g}$ and standard inference procedures yield valid tests of this null. Under departures from No-Bias-in-Encounters --- with assignment probabilities bounded by $\Gamma$ as in \eqref{eq: assign-model} --- a valid hypothesis test must instead control Type~I error for every assignment mechanism consistent with the $\Gamma$-bound.

Conducting such tests requires the conditional assignment probabilities of the true mechanism, which are unknown. Sensitivity analyses therefore evaluate inference under the worst-case assignment mechanism within the class defined by \eqref{eq: assign-model} --- the configuration that yields the largest $p$-value. A test that rejects under this worst-case configuration must also reject under the true mechanism, ensuring that the Type~I error rate is bounded above by the nominal level.

To compute worst-case $p$-values, we use the parametric submodel from \citet{rosenbaum1987a}. Within each stratum $g$, the model posits that $\log\{\pi_{g,i}/(1 - \pi_{g,i})\} = \kappa_g + \log(\Gamma)\, u_{g,i}$ for an unobserved covariate $u_{g,i} \in [0,1]$, sensitivity parameter $\Gamma \geq 1$, and stratum-specific intercept $\kappa_g$ that reflects the probability an encounter is with a minority civilian on the basis of observed covariates --- a shared parameter across all encounters in stratum $g$ under our exact post-stratification.

Conditioning on the realized count $n_{g,1}$ removes the nuisance intercept $\kappa_g$, and varying $\bm{u}_g$ over $[0,1]^{\tilde{n}_g^{\underline{\rho}_g}}$ reproduces exactly the class of conditional assignment distributions on $\Omega_g^{\underline{\rho}_g}$ induced by marginal Bernoulli probabilities satisfying the bound in \eqref{eq: assign-model} \citep{rosenbaum1995a}. The submodel therefore serves as a parametrization of this class --- not as a substantive description of how officers encounter civilians --- and the worst-case configuration of $\bm{u}_g$ in the submodel coincides with the worst-case assignment mechanism in the class, namely one that attains the $\Gamma$-bound sharply. Mechanisms satisfying the same $\Gamma$-bound but not attaining this sharp extreme generally yield smaller $p$-values \citep{hengsmall2021}. This worst-case construction has a minimax decision-theoretic interpretation \citep{cohenetal2020}: inference is valid under any mechanism in the class and conservative when the true mechanism does not sit at the sharp extreme.

Identifying the worst-case configuration is well developed for sharp null hypotheses --- those specifying an individual effect for every encounter \citep{rosenbaumkrieger1990,gastwirthetal2000,rosenbaum2018}. Our null hypothesis is composite (it concerns $\tau$, an average), so the worst case must be found over both assignment mechanisms satisfying \eqref{eq: assign-model} and all potential outcome configurations consistent with the null. \citet{fogarty2023} constructs a tilted test statistic from the centered Difference-in-Means $\hat{\tau}_i - \tau_0$ --- so named because the transformation tilts the centered quantity toward zero --- and develops it for upper-tailed tests in settings where each stratum contains exactly one treated unit and one or more controls. Combined with a suitable CLT and a consistently conservative variance estimator, the tilted statistic yields hypothesis tests that are asymptotically valid for all assignment mechanisms satisfying \eqref{eq: assign-model} and for all potential outcome configurations consistent with the composite null.

We generalize the tilting approach of \citet{fogarty2023} in two directions, to post-stratified designs with arbitrary numbers of treated and control units, and to tests against alternatives of both larger and smaller ATEs. The first step is to bound the probability of any assignment vector $\bm{z}_g^{\underline{\rho}_g} \in \Omega_g^{\underline{\rho}_g}$ under the restriction on the assignment model in \eqref{eq: assign-model}.
\begin{lem} \label{lem: prob bounds}
Under the restriction on the assignment model in \eqref{eq: assign-model}, the lower ($\underline{p}$) and upper ($\overline{p}$) bounds on the conditional probability of any $\bm{z}_g^{\underline{\rho}_g} \in \Omega_g^{\underline{\rho}_g}$ for $\Gamma \geq 1$ are
\begin{align} \label{eq: lb cond z prob I}
\underline{p}\left(\bm{z}_g^{\underline{\rho}_g}; \Gamma\right) & =
\dfrac{1}{\sum_{\bm{a}_g \in \Omega_g^{\underline{\rho}_g}}
\Gamma^{\bm{a}_g^{\top}\left(\bm{1}-\bm{z}_g^{\underline{\rho}_g}\right)}},
\\[0.75em]
\overline{p}\left(\bm{z}_g^{\underline{\rho}_g}; \Gamma\right) & =
\dfrac{\Gamma^{n_{g,1}}}{\sum_{\bm{a}_g \in \Omega_g^{\underline{\rho}_g}}
\Gamma^{\bm{a}_g^{\top}\bm{z}_g^{\underline{\rho}_g}}}. \label{eq: ub cond
z prob I}
\end{align}
\end{lem}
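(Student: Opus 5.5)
Under independent Bernoulli assignment, the conditional probability of $\bm{z}_g \in \Omega_g^{\underline{\rho}_g}$ given $\bm{Z}_g \in \Omega_g^{\underline{\rho}_g}$ has a simple product form, and the plan is to bound it directly. Write the odds $\theta_{g,i} \coloneqq \pi_{g,i}/(1-\pi_{g,i})$. Then
\begin{align*}
p(\bm{z}_g) = \frac{\prod_i \pi_{g,i}^{z_{g,i}}(1-\pi_{g,i})^{1-z_{g,i}}}{\sum_{\bm{a}_g \in \Omega_g^{\underline{\rho}_g}} \prod_i \pi_{g,i}^{a_{g,i}}(1-\pi_{g,i})^{1-a_{g,i}}} = \frac{\prod_i \theta_{g,i}^{z_{g,i}}}{\sum_{\bm{a}_g} \prod_i \theta_{g,i}^{a_{g,i}}},
\end{align*}
since the common factor $\prod_i (1-\pi_{g,i})$ cancels. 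Taking reciprocals,
\begin{align*}
1/p(\bm{z}_g) = \sum_{\bm{a}_g} \prod_i \theta_{g,i}^{a_{g,i} - z_{g,i}}.
\end{align*}
Because $\bm{a}_g$ and $\bm{z}_g$ both sum to $n_{g,1}$, each term is a ratio of a product of $m = \bm{a}_g^\top(\bm{1}-\bm{z}_g)$ odds (units in $\bm{a}_g$ but not in $\bm{z}_g$) to a product of $m$ odds (units in $\bm{z}_g$ but not in $\bm{a}_g$). Pairing these units one to one and applying \eqref{eq: assign-model} to each pair bounds each term between $\Gamma^{-m}$ and $\Gamma^{m}$.

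For the lower bound on $p$, I bound each term of $1/p$ above by $\Gamma^{\bm{a}_g^\top(\bm{1}-\bm{z}_g)}$, which gives exactly \eqref{eq: lb cond z prob I}. For the upper bound, I bound each term below by $\Gamma^{-\bm{a}_g^\top(\bm{1}-\bm{z}_g)}$. Since $\bm{a}_g^\top(\bm{1}-\bm{z}_g) = n_{g,1} - \bm{a}_g^\top \bm{z}_g$, we have $\Gamma^{-\bm{a}_g^\top(\bm{1}-\bm{z}_g)} = \Gamma^{\bm{a}_g^\top\bm{z}_g}/\Gamma^{n_{g,1}}$. Hence $p \le \Gamma^{n_{g,1}}/\sum_{\bm{a}_g}\Gamma^{\bm{a}_g^\top\bm{z}_g}$, which is \eqref{eq: ub cond z prob I}.

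The remaining step, and the only subtle one, is showing the bounds are attained, so that they are the bounds rather than merely valid ones. Attainment requires a single odds vector that makes every term extremal simultaneously. I use Rosenbaum's submodel with $u_{g,i} = 1 - z_{g,i}$ for the lower bound: units outside $\bm{z}_g$ get odds $\Gamma e^{\kappa_g}$ and units inside get $e^{\kappa_g}$. Then each term equals $\Gamma^{m}$ exactly, and all pairwise odds ratios lie in $\{1, \Gamma, \Gamma^{-1}\}$, so \eqref{eq: assign-model} holds. Symmetrically, $u_{g,i} = z_{g,i}$ attains the upper bound. The terms are maximized jointly precisely because the pairing argument depends only on which side of $\bm{z}_g$ a unit lies, and this uniform choice is what I would check carefully.
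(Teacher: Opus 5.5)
Your proof is correct, and it lands on the same extremal mechanisms as the paper, but the route differs somewhat.

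The paper first shows, via two auxiliary lemmas, that the conditional distributions allowed by the $\Gamma$-bound are exactly those generated by Rosenbaum's submodel $\log\{\pi_{g,i}/(1-\pi_{g,i})\} = \kappa_g + \log(\Gamma)u_{g,i}$ with $\bm{u}_g$ in the unit box. The harder inclusion requires constructing each $u_{g,i}$ from the minimal odds in the stratum. The paper then optimizes the conditional probability $1/\sum_{\bm{a}_g}\Gamma^{(\bm{a}_g - \bm{z}_g)^{\top}\bm{u}_g}$ over that box. Because the sign of each coefficient $a_{g,i} - z_{g,i}$ depends only on $z_{g,i}$, the choices $\bm{u}_g = \bm{1}-\bm{z}_g$ and $\bm{u}_g = \bm{z}_g$ optimize every summand simultaneously. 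Validity and sharpness therefore come out of a single optimization.

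You instead prove validity directly for an arbitrary odds vector satisfying \eqref{eq: assign-model}. Since $\bm{a}_g$ and $\bm{z}_g$ have the same number of ones, each summand of $1/p$ is a ratio of $m = \bm{a}_g^{\top}(\bm{1}-\bm{z}_g)$ odds to $m$ odds, and pairing them confines it to $[\Gamma^{-m},\Gamma^{m}]$. The submodel then enters only to exhibit the attaining mechanisms. Checking the $\Gamma$-bound there is immediate, because all pairwise odds ratios lie in $\{\Gamma^{-1},1,\Gamma\}$.

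Your version is more self-contained: it never needs to reparametrize an arbitrary mechanism into the submodel. It also makes transparent that each term depends on $\bm{a}_g$ only through its symmetric difference with $\bm{z}_g$. The paper's version buys a tighter link to the submodel parametrization, which it reuses in its discussion of sharp attainment and worst-case inference.
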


\Cref{lem: prob bounds} generalizes the probability bounds from the case of one minority-civilian encounter per stratum considered by \citet{fogarty2023} to arbitrary post-stratified designs; the Supplementary Materials verify that \Cref{lem: prob bounds} reduces to \citet[][Equation~5, p.~2201]{fogarty2023} in that special case.

The numerators in \eqref{eq: lb cond z prob I} and \eqref{eq: ub cond z prob I} do not depend on $\underline{\rho}_g$, but the denominators do because the set of feasible assignment vectors is determined by the postulated number of Only-Minority-Stop encounters with white civilians. Computing these denominators does not require enumerating all elements of $\Omega_g^{\underline{\rho}_g}$. The Supplementary Materials show how both can be computed in closed form for any values of $\underline{\rho}_g$ and $\Gamma$.

The tilted statistic substitutes, for each stratum, whichever probability bound shifts the centered Difference-in-Means $\hat{\tau}_g^{\underline{\rho}_g} - \tau_0$ toward zero, intentionally working against the direction of the alternative. Let $d \in \{+1, -1\}$ denote the direction of the alternative hypothesis, with $d = +1$ for an upper-tailed test (alternative: $\tau > \tau_0$) and $d = -1$ for a lower-tailed test (alternative: $\tau < \tau_0$). The stratum-level tilted statistic is
\begin{align} \label{eq: tilted IPW diff-in-means g}
\hat{\tau}_g^{\mathrm{tilt}}\left(\underline{\rho}_g; \Gamma, \tau_0, d\right) & =
\dfrac{1}{\abs{\Omega_g^{\underline{\rho}_g}}}
\left(\hat{\tau}_g^{\underline{\rho}_g} - \tau_0\right)
\begin{cases}
\overline{p}\left(\bm{z}_g^{\underline{\rho}_g}; \Gamma\right)^{-1}, &
\text{if } d \left(\hat{\tau}_g^{\underline{\rho}_g}-\tau_0\right) \geq 0,
\\[0.5em]
\underline{p}\left(\bm{z}_g^{\underline{\rho}_g}; \Gamma\right)^{-1}, &
\text{if } d\left(\hat{\tau}_g^{\underline{\rho}_g}-\tau_0\right) < 0,
\end{cases}
\end{align}
which applies the probability bound that moves the centered Difference-in-Means in the direction least favorable to the alternative. The tilted statistic for the study population averages these stratum-level contributions using weights $\tilde{n}_g^{\underline{\rho}_g}/\tilde{n}^{\ast}$, where $\tilde{n}^{\ast} \coloneqq \sum_{g \in \mathcal{G}^{\ast}} \tilde{n}_g^{\underline{\rho}_g}$, which depends on $\bm{\underline{\rho}}$ through its summands, though we leave this dependence implicit in the notation. The resulting statistic is
\begin{align} \label{eq: tilted IPW diff-in-means}
\hat{\tau}^{\mathrm{tilt}}\left(\bm{\underline{\rho}}; \Gamma, \tau_0, d\right) &
\coloneqq \sum_{g \in \mathcal{G}^{\ast}}
\left(\tilde{n}_g^{\underline{\rho}_g} / \tilde{n}^{\ast}\right) \,
\hat{\tau}_g^{\mathrm{tilt}}\left(\underline{\rho}_g; \Gamma, \tau_0, d\right).
\end{align}

All expectations below are taken with potential outcomes held fixed, over a distribution of $\bm{Z}_g$ on $\Omega_g^{\underline{\rho}_g}$ consistent with the restriction in \eqref{eq: assign-model}. \Cref{prop: EV tilted IPW diff-in-means bound 0} ensures that, under the null, the tilted statistic's expectation is shifted away from the alternative for all potential outcomes consistent with the null and all assignment mechanisms consistent with $\Gamma$.
\begin{prop} \label{prop: EV tilted IPW diff-in-means bound 0}
Under Assumptions \ref{assm: no-interference} -- \ref{assm: no-OWS}, the tilted statistic in \eqref{eq: tilted IPW diff-in-means} has nonpositive expectation under the null when the alternative is upper-tailed and nonnegative expectation under the null when the alternative is lower-tailed. Specifically, for any $\bm{\underline{\rho}} \in [0,1)^{\abs{\mathcal{G}^{\ast}}}$, with each stratum augmented by $\tilde{n}_{g,0,\mathrm{OMS}}^{\underline{\rho}_g}$ missing encounters as in \eqref{eq: tilde n g 0 OMS}, and any $\Gamma \geq 1$,
\begin{align*}
\E\left[\hat{\tau}^{\mathrm{tilt}}\left(\bm{\underline{\rho}}; \Gamma,
\tau_0, d = +1\right)\right] & \leq 0 \quad
\text{ (upper-tailed alternative)}, \\
\E\left[\hat{\tau}^{\mathrm{tilt}}\left(\bm{\underline{\rho}}; \Gamma,
\tau_0, d = -1\right)\right] & \geq 0 \quad
\text{ (lower-tailed alternative)}.
\end{align*}
\end{prop}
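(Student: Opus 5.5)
By linearity over strata, and because the weights $\tilde{n}_g^{\underline{\rho}_g}/\tilde{n}^{\ast}$ are fixed constants once $\bm{\underline{\rho}}$ is fixed, the plan is to reduce the composite-null statement to a stratum-level sign condition. Under the null $\tau=\tau_0$, write $\tau_{g}$ for the stratum effects, so that $\sum_g (\tilde{n}_g/\tilde{n}^{\ast})\tau_g = \tau_0$. It then suffices to show, for $d=+1$, that
\begin{align*}
\E\left[\hat{\tau}_g^{\mathrm{tilt}}\left(\underline{\rho}_g;\Gamma,\tau_0,+1\right)\right] \leq \tau_g - \tau_0
\end{align*}
for every $g$ and every assignment mechanism satisfying \eqref{eq: assign-model}. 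Summing with the weights then yields $\leq \sum_g (\tilde{n}_g/\tilde{n}^{\ast})(\tau_g-\tau_0)=0$. The case $d=-1$ is symmetric: the inequality reverses with the roles of $\underline{p}$ and $\overline{p}$ swapped.

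For the stratum-level inequality I would write the expectation explicitly as
\begin{align*}
\E\left[\hat{\tau}_g^{\mathrm{tilt}}\right] = \sum_{\bm{z}\in\Omega_g^{\underline{\rho}_g}} p(\bm{z})\, \frac{1}{\abs{\Omega_g^{\underline{\rho}_g}}}\,\frac{\hat{\tau}_g(\bm{z})-\tau_0}{q(\bm{z})},
\end{align*}
where $q(\bm{z})$ is $\overline{p}(\bm{z};\Gamma)$ when $\hat{\tau}_g(\bm{z})-\tau_0\ge 0$ and $\underline{p}(\bm{z};\Gamma)$ otherwise. Here $\hat{\tau}_g(\bm{z})$ is the augmented Difference-in-Means computed from the potential outcomes under assignment $\bm{z}$. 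This uses the equivalence from Step 1: under Assumptions \ref{assm: no-interference}--\ref{assm: no-OWS}, when the posited count matches the realized one, the augmented statistic equals the full-data Difference-in-Means, with appended zeros justified by \Cref{assm: no-force-without-stop}.

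By \Cref{lem: prob bounds}, $\underline{p}(\bm{z})\le p(\bm{z})\le \overline{p}(\bm{z})$. So for terms with nonnegative centered value, $p/\overline{p}\le 1$ and the term is at most $(\hat{\tau}_g(\bm{z})-\tau_0)/\abs{\Omega_g}$. For terms with negative centered value, $p/\underline{p}\ge1$ multiplies a negative number, so the term is again at most $(\hat{\tau}_g(\bm{z})-\tau_0)/\abs{\Omega_g}$. Summing termwise gives
\begin{align*}
\E\left[\hat{\tau}_g^{\mathrm{tilt}}\right] \le \frac{1}{\abs{\Omega_g^{\underline{\rho}_g}}}\sum_{\bm{z}}\left(\hat{\tau}_g(\bm{z})-\tau_0\right),
\end{align*}
which is the uniform-assignment expectation of the centered Difference-in-Means. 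The classical completely-randomized unbiasedness of the Difference-in-Means over $\Omega_g$, with fixed $n_{g,1}$, then gives $\tau_g-\tau_0$.

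The main obstacle is the bookkeeping connecting the augmented statistic to potential outcomes across \emph{all} $\bm{z}\in\Omega_g^{\underline{\rho}_g}$, not just the observed one. The posited $\tilde{n}_{g,0,\mathrm{OMS}}$ fixes $\tilde{n}_g$. I must argue that, for each hypothetical assignment, the augmented dataset is exactly the full potential-outcome table of the $\tilde{n}_g$ encounters evaluated at $\bm{z}$, using the fact that OMS white encounters have $y=0$. Consequently $\tau_g$ here is the effect over these $\tilde{n}_g$ encounters, consistent with the null as stated on the augmented population. A subtlety is that the tilting choice $q(\bm{z})$ depends on $\bm{z}$ only through the sign of the centered value, which the termwise argument handles without further care. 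The conditioning on $\bm{Z}_g\in\Omega_g$ only enters through \Cref{lem: prob bounds}.
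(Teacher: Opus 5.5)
Your proposal is correct and follows the paper's proof essentially step for step: you write the stratum expectation as a sum over $\Omega_g^{\underline{\rho}_g}$, apply \Cref{lem: prob bounds} termwise ($p/\overline{p}\le 1$ on nonnegative centered values, $p/\underline{p}\ge 1$ on negative ones), use the combinatorial unbiasedness identity to get $\tau_g-\tau_0$, and aggregate linearly with the fixed weights $\tilde{n}_g^{\underline{\rho}_g}/\tilde{n}^{\ast}$; your \enquote{roles swapped} treatment of $d=-1$ is the one that matches \eqref{eq: tilted IPW diff-in-means g}, whereas the supplement's one-line sketch of that case pairs the bounds the other way. The bookkeeping you flag as the main obstacle does not need to be proved for every hypothetical $\bm{z}$, and in the procedural sense it is false in general, because re-running the augmentation on the data a hypothetical $\bm{z}$ would generate appends a fixed number of zeros while $n_{g,0,\mathrm{OMS}}(\bm{z})$ varies with $\bm{z}$; instead, as the paper implicitly does, define $\hat{\tau}_g^{\underline{\rho}_g}(\bm{z})$ on all of $\Omega_g^{\underline{\rho}_g}$ as the Difference-in-Means of the potential outcomes of the fixed $\tilde{n}_g^{\underline{\rho}_g}$ slots, and invoke the augmentation equivalence only at the realized assignment.
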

\noindent To convert \Cref{prop: EV tilted IPW diff-in-means bound 0} into a valid hypothesis test, we need a standard error that consistently upper bounds the true standard deviation of the tilted statistic under the null. In the Supplementary Materials, following \citet{fogarty2018a,fogarty2023}, we construct such a standard error, $\widehat{\Var}[\hat{\tau}^{\mathrm{tilt}}(\bm{\underline{\rho}}; \Gamma, \tau_0, d)]^{1/2}$, and show that it consistently upper bounds the true standard error for any values of $\bm{\underline{\rho}}$ and $\Gamma$. Dividing the tilted statistic by this conservative standard error yields a test statistic whose null distribution is stochastically dominated by the standard normal (under regularity conditions ensuring a CLT). The resulting $p$-value, computed from the standard normal reference distribution, is conservative.

\subsection{The Joint Sensitivity Framework}

Our approach introduces two sensitivity parameters that operate on distinct components of the data-generating process. The parameter $\bm{\underline{\rho}}$ governs the potential outcome structure, determining how many nonforce encounters with white civilians are missing from the observed data; $\Gamma$ governs racial bias in encounters, bounding how much officers within the same stratum may differ in their probabilities of encountering a minority civilian. The framework proceeds in two steps: for each postulated $\bm{\underline{\rho}}$, augment the data and construct $\Omega^{\bm{\underline{\rho}}}$; on that domain, compute worst-case probability bounds consistent with $\Gamma$, yielding $\hat{\tau}^{\mathrm{tilt}}\left(\bm{\underline{\rho}};\Gamma,\tau_0,d\right)$.

\Cref{fig: sens diagram} picks up where \Cref{fig: augmentation diagram} leaves off. Once $\bm{\underline{\rho}}$ fixes the augmented encounters and observed assignment, $\Gamma$ implies upper and lower probability bounds and hence the overall tilted statistic.

\begin{figure}[!h]
\centering
\includegraphics[width=0.85\linewidth, trim={0 300 0 50}, clip]{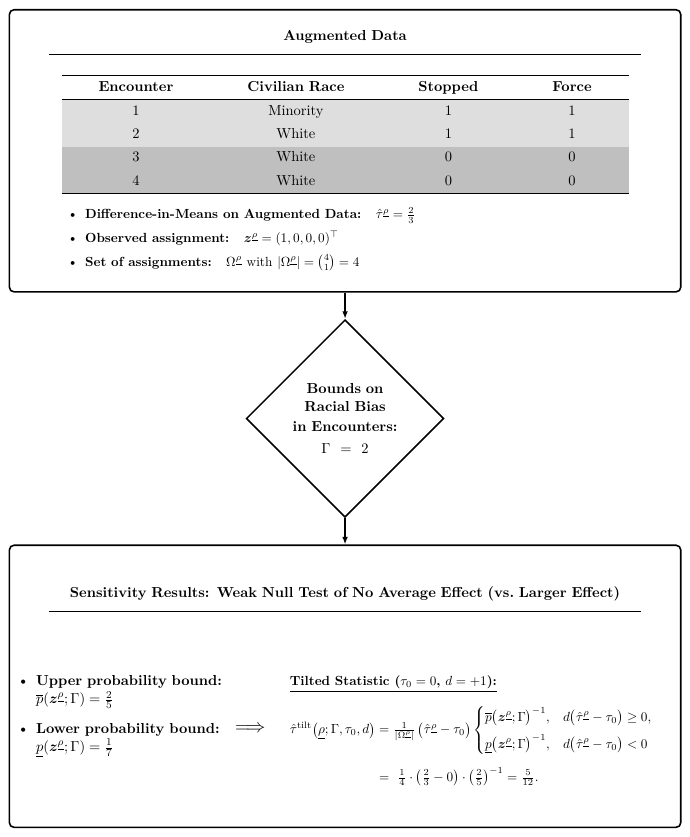}
\caption{The top panel displays the augmented data, observed assignment, and assignment set $\Omega^{\underline{\rho}}$. The middle diamond specifies a bound on racial bias in encounters ($\Gamma = 2$), which determines upper and lower probability bounds on the observed assignment. The bottom panel shows the tilted statistic for the weak null of no average effect against the alternative of a larger effect.}
\label{fig: sens diagram}
\end{figure}
\noindent The tilted statistic adjusts the centered Difference-in-Means $\hat{\tau}^{\underline{\rho}} - \tau_0$ toward zero by applying the probability bound from \Cref{lem: prob bounds}, ensuring valid inference under any encounter-process mechanism consistent with $\Gamma$.

In summary, $\bm{\underline{\rho}}$ fixes the augmented data and assignment space $\Omega^{\bm{\underline{\rho}}}$; $\Gamma$ quantifies how far assignment probabilities on that space may depart from uniformity. \Cref{prop: EV tilted IPW diff-in-means bound 0} establishes the worst-case expectation bound for the tilted statistic at each $(\bm{\underline{\rho}}, \Gamma)$, generalizing \citet{fogarty2023} to post-stratified designs with arbitrary numbers of minority-civilian and white-civilian encounters per stratum and to one-sided tests in either direction around $\tau_0$. Because the augmented stratum size is fixed across $\Omega^{\bm{\underline{\rho}}}$, introducing $\bm{\underline{\rho}}$ adds no inferential complication beyond the fixed-design guarantees available for $\Gamma$ alone.

Combined with a CLT and a consistently conservative variance estimator, this bound produces a test that controls Type~I error asymptotically at each $(\bm{\underline{\rho}}, \Gamma)$. Inverting the test yields a pointwise $(1 - \alpha)$ confidence set for $\tau$ at each postulated $(\bm{\underline{\rho}}, \Gamma)$, with coverage conditional on those postulated sensitivity parameters being correct, not simultaneous coverage across the grid.

\subsection{Sequential vs.\ Separate Sensitivity Analyses} \label{sec: sequential-vs-separate-sensitivity}

Because $\bm{\underline{\rho}}$ and $\Gamma$ govern conceptually distinct confounding channels --- racial discrimination in stops and racial bias in encounters --- one might expect that conducting each sensitivity analysis separately and then combining the results would be equivalent to our joint framework. It is not, for two related reasons: $\Gamma$ is structurally dependent on $\bm{\underline{\rho}}$, and the two parameters interact multiplicatively within the tilted statistic (Supplement Section~S.5.2 illustrates this with a worked example: a single stratum with one minority-civilian encounter, the closed-form expression, and how the tilted statistic responds to changes in either parameter).

Specifying $\bm{\underline{\rho}}$ determines the augmented assignment space $\Omega^{\bm{\underline{\rho}}}$, and $\Gamma$ bounds the odds ratios over assignments in that space --- so $\Gamma$ has no meaning until $\bm{\underline{\rho}}$ fixes the space. A researcher who conducts a $\Gamma$-sensitivity analysis on the observed data alone is therefore implicitly assuming $\bm{\underline{\rho}} = \bm{0}$, while a researcher who conducts a $\bm{\underline{\rho}}$-sensitivity analysis alone is implicitly assuming No-Bias-in-Encounters ($\Gamma = 1$). Neither analysis, taken on its own, captures what happens when both confounding channels are present.

\section{Application to NYPD's SQF Data} \label{sec: application}

We restrict attention to encounters with males aged 17--26, excluding radio runs in which a dispatcher directs the officer to a specific suspect whose reported race may predetermine the race of the civilian encountered. We analyze 2003--2013, when the NYPD operated SQF as a centralized, incentive-driven policing regime under Mayor Michael Bloomberg and Police Commissioner Raymond Kelly. This regime effectively ended after the August 2013 decision in \textit{Floyd v. City of New York} and the January 2014 inauguration of Mayor Bill de Blasio and Police Commissioner William Bratton; stop counts fell from more than 500{,}000 in 2012 to roughly 45{,}000 in 2014. Because SQF data are available only from 2003 onward, the 2003--2013 period remains the central empirical reference point in legal, political, and scholarly debates about SQF policy, though our results are not a definitive assessment of NYPD use of force overall.

Following the encounter-assignment process of \Cref{sec: encounter assignment}, we exactly stratify the encounters by combinations of observed covariates, so that every encounter in a stratum shares the same covariate values. Within each stratum, encounters then have the same probability of being with a minority civilian on the basis of observed covariates, and any remaining differences in minority encounter odds must be driven by unobserved covariates. The sensitivity parameter $\Gamma$ therefore captures only these unobserved differences, so smaller values of $\Gamma$ become more plausible after exact stratification than they would be without it.

We define strata using spatial covariates (precinct, sector, and beat), departing from existing research that typically conditions only on precinct \citep[e.g.,][]{fryer2019}. Since approximately \PctSectorMissing\% of sector values and \PctBeatMissing\% of beat values are missing, our poststratification includes missingness indicators, comparing encounters on the same beat within the same sector and precinct when both are observed, within the same sector and precinct when beat is missing, or within the same precinct alone otherwise. The strata also incorporate temporal covariates (year, season, tour, daytime), contextual features (indoors, transit, public housing, high-crime area, high-crime time, and NYPD Impact Zone, with zones from \citealp[][see Supplement]{macdonaldetal2016}), and officer proxies (uniformed officer, transit, and housing indicators). Officer rank, recorded only from 2017 onward, is unobserved here.

We then restrict attention to strata containing at least one minority and at least one white encounter, as described in \Cref{sec: setup}. Our causal target is $\tau$ in \eqref{eq: ate-force}, defined over these strata. Below, we infer this target over different combinations of $\bm{\rho}$ and $\Gamma$, and also over stratum-specific values of $\Gamma$ calibrated to the demographics of each stratum.

\subsection{Results} \label{sec: results}

The baseline difference in average use of force between minority and white encounters averaged over informative strata is substantively small --- about \BaselineTauPP\ percentage points (estimated S.E.\ $\approx \BaselineSE$). Under the baseline assumptions of no Bias-in-Stops ($\underline{\rho}_g = 0$ for every $g$, so $\underline{\rho} = 0$) and no Bias-in-Encounters ($\Gamma = 1$), we reject the null of no average causal effect in favor of discrimination in force against minority civilians. \Cref{fig: robustness frontier heatmap} summarizes how this conclusion changes as we relax each assumption.

\begin{figure}[!h]
\centering
\includegraphics[width=.84\linewidth]{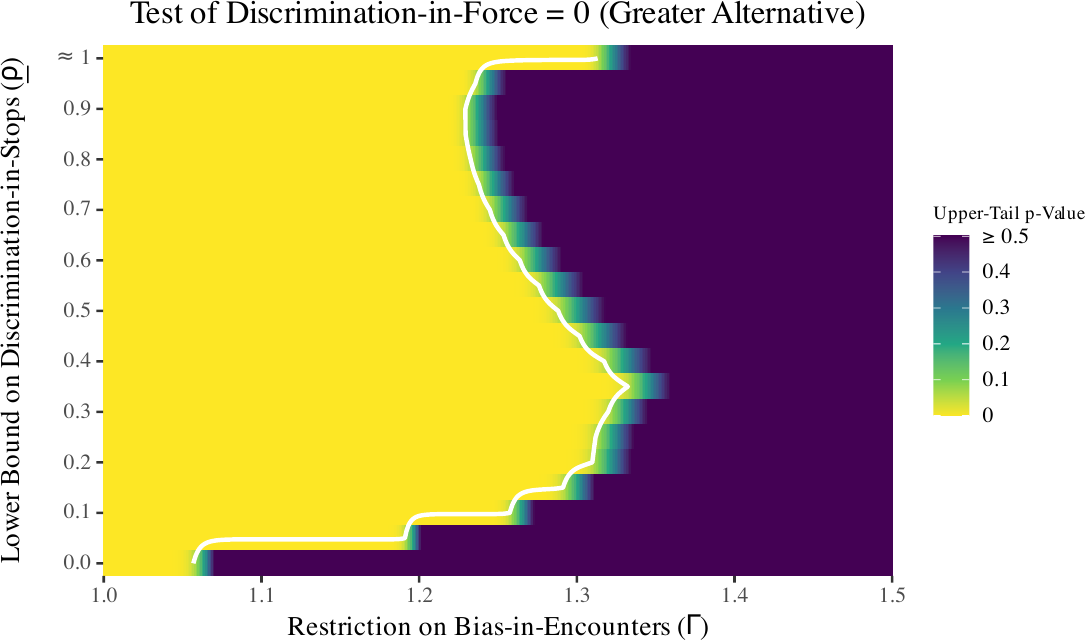}%
\caption{Upper-tail $p$-values for a one-sided test of no racial discrimination in force. The horizontal axis shows bias in encounters, governed by $\Gamma$; the vertical axis shows the common lower bound $\underline{\rho}$ on discrimination in stops ($\underline{\rho}_g = \underline{\rho}$ for every $g \in \mathcal{G}^{\ast}$). The white contour marks the 5\% critical boundary.}
\label{fig: robustness frontier heatmap}
\end{figure}

The interaction structure described in \Cref{sec: sequential-vs-separate-sensitivity} is borne out in the SQF application. When $\underline{\rho} = 0$, the test transitions from significant to insignificant at \(\Gamma \approx \GammaRhoZeroInsig\), so even a small degree of bias in encounters, with no discrimination in stops, would be enough to explain the observed disparity. A small increase in $\underline{\rho}$, however --- from 0 to \(\RhoMinOffsetAtGammaStar\) --- keeps the $p$-value below $\alpha = \AlphaLevel$, because the augmented Difference-in-Means initially grows faster than the tilting factor shrinks. For $\Gamma$ between $\PlateauGammaMin$ and $\PlateauGammaMax$, the contour boundary lies near $\underline{\rho} \approx \PlateauRho$, and the tilted estimator is nearly flat in $\Gamma$. As $\Gamma$ increases further, the rejection region expands upward along the $\underline{\rho}$ axis, reaching the largest $\underline{\rho}$ for which the test still rejects at $\Gamma \approx \GammaHighestRhoSigMax$, where the test remains significant up to $\underline{\rho} \approx \RhoSigMaxHighest$. Beyond that point, the test moves back into the non-rejection region, and once $\Gamma \approx \GammaAllInsigMin$, the test rejects for no value of $\underline{\rho}$.

The heatmap displays the joint sensitivity of the test across all $(\underline{\rho}, \Gamma)$ combinations. To connect this surface to empirically plausible discrimination in stops, we use the range $0.32$ to $0.34$ that \citet{knoxetal2020} derive from \citet{goeletal2016} and \citet{gelmanetal2007}, as noted in \Cref{sec: existing sens analysis}. Because $\underline{\rho}$ is a lower bound, we focus on $\underline{\rho} = 0.34$, the value in this range hardest for $\Gamma$ to overturn: larger $\underline{\rho}$ appends more zeros to white-civilian outcomes, increasing the augmented Difference-in-Means and resisting the shrinkage induced by $\Gamma$. Results at $\underline{\rho} = 0.32$ are nearly identical. \Cref{fig: conf sets plausible rho lbs} presents 95\% confidence sets across values of $\Gamma$ at $\underline{\rho} = 0.34$.
\begin{figure}[!h]
\centering
\includegraphics[width=.8\linewidth]{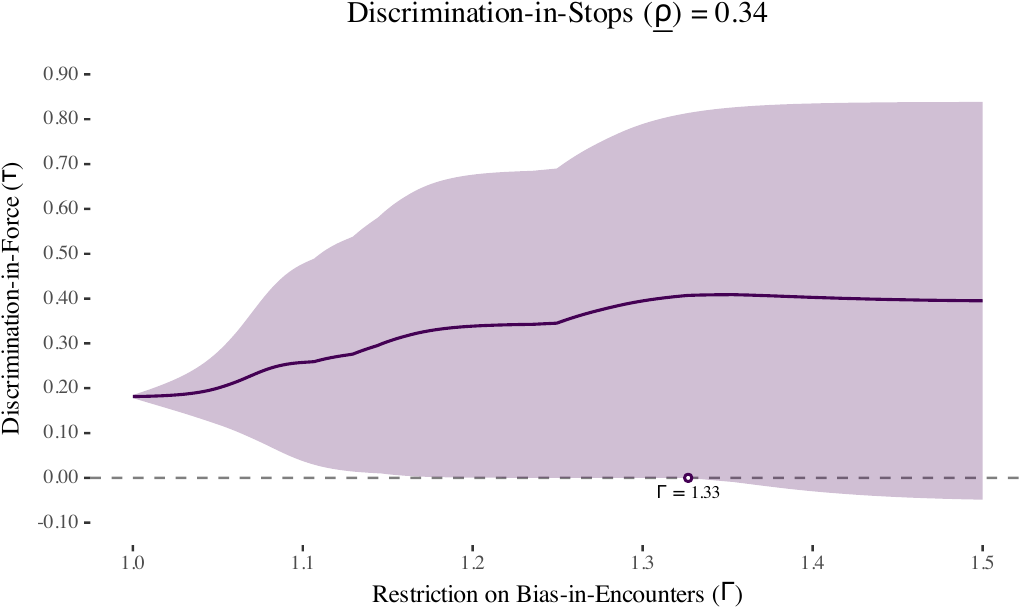}%
\caption{95\% confidence sets (shaded ribbon) and median nonrejected null hypotheses (solid curve) for Discrimination-in-Force across $\Gamma$, at $\underline{\rho} = 0.34$. The labeled point marks the smallest $\Gamma$ at which the confidence set first includes $0$.}
\label{fig: conf sets plausible rho lbs}
\end{figure}
The confidence set first includes $0$ at $\Gamma = \ChangePointUpper$. Past this point, the confidence set contains both positive and negative values of $\tau$. The entry of zero reflects growing uncertainty about the sign and magnitude of $\tau$ as hidden bias in encounters increases. This uncertainty does not imply that discrimination in force against minority civilians is absent; absence of evidence should not be interpreted as evidence of absence.

After the confidence set includes zero, it remains asymmetric, extending well above zero but only slightly below. This asymmetry is structural. Under \Cref{assm: no-OWS}, Only-Minority-Stop encounters cannot result in force against a white civilian --- who would not have been stopped --- so these encounters contribute nonnegative individual causal effects. As a result, the average effect is bounded away from large negative values regardless of $\Gamma$.

\subsection{Geographic Calibration of Sensitivity Bounds} \label{sec: interpretation}

The preceding sensitivity analysis imposes a uniform $\Gamma$ across all strata, conducting worst-case inference as though every stratum's hidden bias in encounters could be as large as $\Gamma$. This approach is conservative when bias varies across strata: the uniform bound penalizes every stratum as though its bias were as large as the worst stratum's. \citet{hengsmall2021} address a related problem --- interactions between observed and unobserved covariates --- and show that stratum-specific bounds can reduce this conservatism. We use the same principle but determine the bounds differently: rather than modeling such interactions, we exploit the fact that each stratum's encounter-odds ratio is physically constrained by the racial composition within its geographic footprint. Under the path-intersection model of \Cref{sec: encounter assignment}, officers can direct patrols toward locations with different racial compositions, but cannot change the composition at any given location. Census data therefore provide a demographic ceiling on how much minority encounter probabilities can vary within a stratum.

We implement this idea by constructing a demographic ceiling $\Gamma_g^{\mathrm{geo}}$ for each stratum $g$. In the sensitivity analysis, the operative bound for stratum $g$ is $\min(\Gamma,\Gamma_g^{\mathrm{geo}})$; therefore, as the global parameter $\Gamma$ increases, the operative sensitivity parameter for stratum $g$ is capped at its demographic ceiling $\Gamma_g^{\mathrm{geo}}$. As shown in a corollary to \Cref{prop: EV tilted IPW diff-in-means bound 0} in the Supplementary Material, the worst-case bound extends directly to stratum-specific sensitivity parameters $\Gamma_g$, with \Cref{prop: EV tilted IPW diff-in-means bound 0} corresponding to the special case $\Gamma_g = \Gamma$ for all $g$.

Following \citet{zhaoetal2022}, we construct $\Gamma_g^{\mathrm{geo}}(\xi)$ from 2010 Census block-group demographics within each stratum's geographic footprint. For each block group $b$, let $\eta_b$ denote the minority-to-white population odds. The ceiling is the ratio of the $(1-\xi)$-th to the $\xi$-th population-weighted quantile of $\eta_b$ across the stratum's block groups, where $\xi \in [0, 0.5)$ controls tail trimming. With $\xi = 0$ the ratio uses the most demographically extreme block groups; larger $\xi$ compares more typical locations. Supplement Section~S.9 gives the formal definition and the procedures for single-block-group strata and years (2003--2005) without encounter coordinates.

\begin{figure}[!h]
\centering
\includegraphics[width=.92\linewidth]{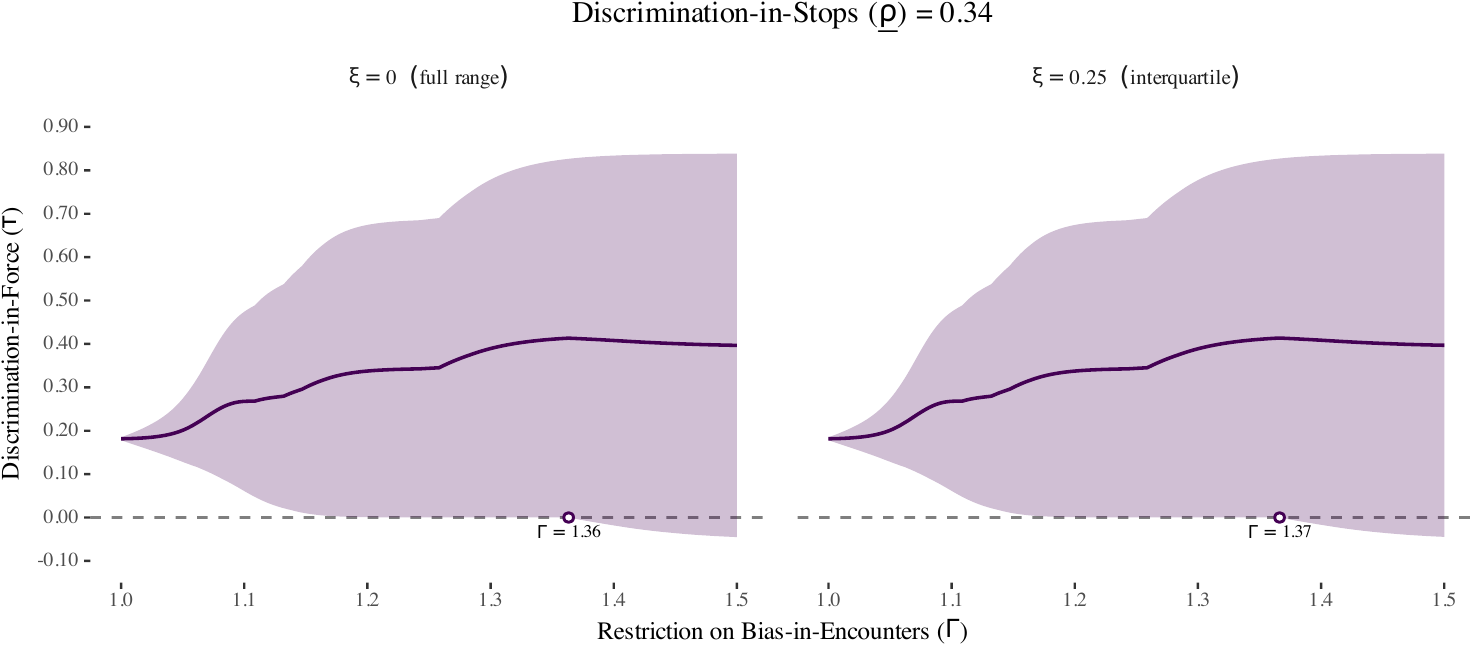}%
\caption{95\% confidence sets (shaded ribbon) and median nonrejected null hypotheses (solid curve) for Discrimination-in-Force across $\Gamma$ under geographic calibration at $\underline{\rho} = 0.34$. Panels show $\xi = 0$ (left) and $\xi = 0.25$ (right); labels mark the smallest $\Gamma$ values at which the confidence sets first include $0$.}
\label{fig: geog const conf sets plausible rho lbs}
\end{figure}

\Cref{fig: geog const conf sets plausible rho lbs} shows that restricting the sensitivity analysis to demographically feasible departures from No-Bias-in-Encounters shifts the changepoint modestly upward --- making the conclusion slightly more robust --- from $\Gamma = \ChangePointUpper$ to $\Gamma = \GeoChangePointXiZeroRhoUpper$ when $\xi = 0$.
Using $\xi = 0.25$ moves the changepoint slightly further to $\Gamma = \GeoChangePointXiTwentyFiveRhoUpper$. Values of $\xi$ beyond $0.25$ are difficult to motivate substantively: The median stratum contains encounters in only two block groups, so trimming more than $25\%$ from each tail of the within-stratum distribution of $\eta_b$ discards most of the demographic variation and forces $\Gamma_g^{\mathrm{geo}}$ toward $1$, which would mean that officers patrol exclusively in the demographically typical block group within their stratum.

To put these changepoints in substantive terms, a $\Gamma$ of roughly $1.37$ means that if one encounter has a $0.50$ probability that the civilian is a minority, another encounter in the same stratum could have a probability of about $0.58$. This modest difference could arise from where within a beat officers patrol or from which civilians are outdoors at a given time.

The changepoint values do not require implausibly large demographic contrasts. At $\xi = 0$, $\GeoCdfComplementXiZeroRhoUpper\%$ of encounters at $\underline{\rho} = 0.34$ occur in strata whose block-group-level racial composition could generate encounter-odds ratios at least as large as the respective changepoints. The geographic calibration therefore indicates that departures from No-Bias-in-Encounters of this magnitude are demographically feasible: In the majority of strata, the racial composition of the areas where encounters occur is heterogeneous enough to support differences of this size.

This interpretation has at least three caveats. First, census data measure residential population rather than the population present at a given time, so the bounds are less reliable for commercial districts or transit settings. Second, $\Gamma$ governs encounter probabilities among potentially stoppable civilians, whose racial composition may differ from the residential population. Third, the ceilings assigned to strata in 2003--2005 rely on geographic information from later years; if these compositions change slowly, this procedure preserves the census-implied geographic constraints rather than leaving those strata unconstrained.

\section{Conclusion}

Inferences about racial discrimination in police use of force depend on assumptions about two aspects of police behavior: which civilians officers would stop and the process by which officers encounter civilians in the first place. Our joint sensitivity analysis varies both sequentially. Applied to NYPD SQF data, the analysis shows that the two channels interact, so analyzing each separately would miss their combined implications.

The framework extends existing methods in three ways. First, we generalize the sensitivity analysis of \citet{fogarty2023} to post-stratified designs with arbitrary numbers of minority- and white-civilian encounters per stratum and to both greater-than and less-than alternatives. Second, we introduce a lower-bound parameterization of discrimination in stops together with a data-augmentation procedure that accounts for missing encounters, enabling encounter-bias sensitivity analysis that existing parameterizations do not allow. Finally, we calibrate stratum-specific sensitivity bounds using census demographics.

Applying this framework to NYPD SQF data yields a substantively important but fragile finding. Under plausible levels of discrimination in stops, the observed racial disparity in force supports rejecting the null in favor of discrimination in force against minority civilians. A modest degree of bias in the encounter process --- officers in the same patrol context differing by a factor of $\Gamma = \ChangePointLower$ to $\ChangePointUpper$ in their odds of encountering a minority civilian --- suffices to render the test statistically insignificant, and our census calibration confirms that \GeoCdfComplementXiZeroRhoLower\% of encounters are in strata where the demographic variation across block groups is large enough to produce odds ratios of this magnitude.

Whether officers' patrol behavior actually produces encounter-odds differences of this size --- rather than merely patrolling in areas where such differences are demographically possible --- is a question our framework cannot answer on its own. Two strategies could help. First, careful qualitative work --- ethnographic research on patrol patterns and officer decision-making --- could establish whether within-stratum variation in encounter probabilities is substantively meaningful. Second, improvements in design sensitivity --- for instance, through refined matching or stratification strategies --- would lower the stakes of this question by pushing the changepoints to larger values of $\Gamma$, ensuring that conclusions hold across a wider range of departures and making disagreements about plausibility less consequential.

The fragility of the finding should not be mistaken for evidence that racial discrimination in force is absent. Across the confidence sets, those beyond the changepoint include zero, but zero lies near the lower boundary while most values in the sets correspond to higher force rates if encounters were with minority rather than white civilians. The inclusion of zero reflects increasing uncertainty as the posited level of encounter bias grows, not a shift in the weight of evidence toward no discrimination, and treating the inclusion of zero as evidence against discrimination would amount to accepting a null that the analysis lacks the power to reject.

Beyond statistical structure, the framework has implications for policy responses to the observed disparity, which is itself a matter of normative concern, independent of what our framework establishes about causal sources. The joint sensitivity analysis nonetheless clarifies which responses fit which causal stories. If officers in comparable patrol contexts are equally likely to encounter minority and white civilians, the disparity reflects discrimination in officer responses --- calling for de-escalation training, use-of-force guidelines, and individual accountability. If instead certain officers are more likely to encounter minority civilians and are also more prone to use force, the response depends on whether deployment patterns or officers' own patrol choices drive that variation: the former calls for organizational reform, the latter for interventions on individual conduct. By varying both confounding channels sequentially, the joint analysis makes these distinctions visible in a way that separate analyses cannot.

\section*{Disclosure statement}\label{disclosure-statement}

The authors have no conflicts of interest to declare.

\section*{Data Availability Statement}\label{data-availability-statement}

The analysis uses public data: NYPD SQF data 2003--2013, 2010 Census block-group polygons and demographics for New York City, NYC police precinct shapefiles, and NYPD Operation Impact zone polygons georeferenced from Figure~1 of \citet{macdonaldetal2016}. The replication archive contains all code and intermediate datasets.

\begin{center}

{\large\bf SUPPLEMENTARY MATERIAL}

\end{center}

\begin{description}
\item[Supplement (PDF):] Proofs, derivations, and impact-zone polygons creation details.
\item[Replication code:] Scripts under \texttt{Code/} and \texttt{Impact\_Zones/}, organized by a \texttt{Makefile}, including data download, cleaning, post-stratification and sensitivity analysis.
\end{description}

\bibliography{bibliography.bib}

@book{taylor2004,
	address = {Cambridge, UK},
	author = {Taylor, Paul C},
	publisher = {Polity Press},
	title = {Race: A Philosophical Introduction},
	year = {2004}}

@article{macdonaldetal2016,
	author = {Mac{D}onald, John and Fagan, Jeffrey and Geller, Amanda},
	journal = {{PL}oS {ONE}},
	number = {6},
	pages = {e0157223},
	title = {The Effects of Local Police Surges on Crime and Arrests in {N}ew {Y}ork {C}ity},
	volume = {11},
	year = {2016}}

@article{piersonetal2020,
	author = {Pierson, Emma and Simoiu, Camelia and Overgoor, Jan and Corbett-Davies, Sam and Jenson, Daniel and Shoemaker, Amy and Ramachandran, Vignesh and Barghouty, Phoebe and Phillips, Cheryl and Shroff, Ravi and Goel, Sharad},
	journal = {Nature Human Behaviour},
	pages = {736-745},
	title = {A Large-Scale Analysis of Racial Disparities in Police Stops across the {U}nited {S}tates},
	volume = {4},
	year = {2020}}

@article{groggerridgeway2006,
	author = {Grogger, Jeffrey and Ridgeway, Greg},
	journal = {Journal of the American Statistical Association},
	number = {475},
	pages = {878-887},
	title = {Testing for Racial Profiling in Traffic Stops From Behind a Veil of Darkness},
	volume = {101},
	year = {2006}}

@article{cohenetal2020,
	author = {Cohen, Peter L and Olson, Matt A and Fogarty, Colin B},
	journal = {Biometrika},
	number = {4},
	pages = {809-825},
	title = {Multivariate One-sided Testing in Matched Observational Studies as an Adversarial Game},
	volume = {107},
	year = {2020}}

@article{hengsmall2021,
	author = {Heng, Siyu and Small, Dylan S},
	journal = {Statistica Sinica},
	pages = {2331-2353},
	title = {Sharpening the {R}osenbaum Sensitivity Bounds to Address Concerns about Interactions between Observed and Unobserved Covariates},
	volume = {31},
	year = {2021}}

@article{heckman1998,
	author = {Heckman, James J},
	journal = {Journal of Economic Perspectives},
	number = {2},
	pages = {101-116},
	title = {Detecting Discrimination},
	volume = {12},
	year = {1998}}

@article{hukohler-hausmann2025,
	author = {Hu, Lily and Kohler-Hausmann, Issa},
	journal = {Law \& Society Review},
	number = {2},
	pages = {239-264},
	title = {What is Perceived When Race is Perceived and Why It Matters for Causal Inference and Discrimination Studies},
	volume = {59},
	year = {2025}}

@article{bickelvanzwet1978,
	author = {Bickel, Peter J and {van Zwet}, Willem Rutger},
	journal = {Annals of Statistics},
	number = {5},
	pages = {937-1004},
	title = {Asymptotic Expansions for the Power of Distributionfree Tests in the Two-Sample Problem},
	volume = {6},
	year = {1978}}

@article{wilcoxcullen2018,
	author = {Wilcox, Pamela and Cullen, Francis T},
	journal = {Annual Review of Criminology},
	pages = {123-148},
	title = {Situational Opportunity Theories of Crime},
	volume = {1},
	year = {2018}}

@article{gaebleretal2022,
	author = {Gaebler, Johann and Cai, William and Basse, Guillaume and Shroff, Ravi and Goel, Sharad and Hill, Jennifer},
	journal = {Statistics and Public Policy},
	number = {1},
	pages = {26-48},
	title = {A Causal Framework for Observational Studies of Discrimination},
	volume = {9},
	year = {2022}}

@article{hu2025,
	author = {Hu, Lily},
	journal = {The Journal of Philosophy},
	title = {Normative Facts and Causal Structure},
	year = {2025}}

@article{zhaoetal2022,
	author = {Zhao, Qingyuan and Keele, Luke J and Small, Dylan S and Joffe, Marshall M},
	journal = {The American Political Science Review},
	number = {1},
	pages = {337-350},
	title = {A Note on Posttreatment Selection in Studying Racial Discrimination in Policing},
	volume = {116},
	year = {2022}}

@article{hu2023,
	author = {Hu, Lily},
	journal = {Journal of Moral Philosophy},
	title = {What is `Race' in Algorithmic Discrimination on the Basis of Race?},
	year = {2023}}

@article{fryer2018,
	author = {Fryer, Jr., Roland G},
	journal = {AEA Papers and Proceedings},
	pages = {228--233},
	title = {Reconciling Results on Racial Differences in Police Shootings},
	volume = {108},
	year = {2018}}

@article{fryer2019,
	author = {Fryer, Jr., Roland G},
	journal = {The Journal of Political Economy},
	number = {3},
	pages = {1210--1261},
	title = {An Empirical Analysis of Racial Differences in Police Use of Force},
	volume = {127},
	year = {2019}}

@article{goeletal2016,
	author = {Goel, Sharad and Rao, Justin M and Shroff, Ravi},
	journal = {Annals of Applied Statistics},
	number = {1},
	pages = {365--394},
	title = {Precinct or prejudice? {U}nderstanding racial disparities in {N}ew {Y}ork {C}ity's stop-and-frisk policy},
	volume = {10},
	year = {2016}}

@article{gelmanetal2007,
	author = {Gelman, Andrew and Fagan, Jeffrey and Kiss, Alex},
	journal = {Journal of the American Statistical Association},
	number = {479},
	pages = {813--823},
	title = {An Analysis of the {N}ew {Y}ork {C}ity {P}olice {D}epartment's ``{S}top-and-{F}risk'' Policy in the Context of Claims of Racial Bias},
	volume = {102},
	year = {2007}}

@article{knoxetal2020,
	author = {Knox, Dean and Lowe, Will and Mummolo, Jonathan},
	journal = {The American Political Science Review},
	number = {3},
	pages = {619--637},
	title = {Administrative Records Mask Racially Biased Policing},
	volume = {114},
	year = {2020}}

@article{greinerrubin2011,
	author = {Greiner, D James and Rubin, Donald B},
	journal = {The Review of Economics and Statistics},
	number = {3},
	pages = {775--785},
	title = {Causal Effects of Perceived Immutable Characteristics},
	volume = {93},
	year = {2011}}

@article{rosenbaum1995a,
	author = {Rosenbaum, Paul R},
	journal = {Journal of the American Statistical Association},
	number = {432},
	pages = {1424--1431},
	title = {Quantiles in Nonrandom Samples and Observational Studies},
	volume = {90},
	year = {1995}}

@article{fogarty2023,
	author = {Fogarty, Colin B},
	journal = {Biometrics},
	number = {3},
	pages = {2196-2207},
	title = {Testing Weak Nulls in Matched Observational Studies},
	volume = {79},
	year = {2023}}

@article{manski1999,
	author = {Manski, Charles F},
	journal = {Statistical Science},
	number = {3},
	pages = {279--281},
	title = {Choice as an Alternative to Control in Observational Studies: Comment},
	volume = {14},
	year = {1999}}

@article{fogarty2018a,
	author = {Fogarty, Colin B},
	journal = {Journal of the Royal Statistical Society: Series B (Statistical Methodology)},
	number = {5},
	pages = {1035--1056},
	title = {On Mitigating the Analytical Limitations of Finely Stratified Experiments},
	volume = {80},
	year = {2018}}

@article{rosenbaum1999a,
	author = {Rosenbaum, Paul R},
	journal = {Statistical Science},
	number = {3},
	pages = {259--304},
	title = {Choice as an Alternative to Control in Observational Studies},
	volume = {14},
	year = {1999}}

@article{rosenbaum1999b,
	author = {Rosenbaum, Paul R},
	journal = {Statistical Science},
	number = {3},
	pages = {300--304},
	title = {Choice as an Alternative to Control in Observational Studies: Rejoinder},
	volume = {14},
	year = {1999}}

@article{rosenbaum1987a,
	author = {Rosenbaum, Paul R},
	journal = {Biometrika},
	number = {1},
	pages = {13--26},
	title = {Sensitivity Analysis for Certain Permutation Inferences in Matched Observational Studies},
	volume = {74},
	year = {1987}}

@article{rosenbaumkrieger1990,
	author = {Rosenbaum, Paul R and Krieger, Abba M},
	journal = {Journal of the American Statistical Association},
	number = {410},
	pages = {493--498},
	title = {Sensitivity of Two-Sample Permutation Inferences in Observational Studies},
	volume = {85},
	year = {1990}}

@article{rosenbaum2018,
	author = {Rosenbaum, Paul R},
	journal = {Annals of Applied Statistics},
	number = {4},
	pages = {2312--2334},
	title = {Sensitivity Analysis for Stratified Comparisons in an Observational Study of the Effect of Smoking on Homocysteine Levels},
	volume = {12},
	year = {2018}}

@article{gastwirthetal2000,
	author = {Gastwirth, Joseph L and Krieger, Abba M and Rosenbaum, Paul R},
	journal = {Journal of the Royal Statistical Society: Series B (Statistical Methodology)},
	number = {3},
	pages = {545--555},
	title = {Asymptotic separability in sensitivity analysis},
	volume = {63},
	year = {2000}}

@article{haslanger2000,
	author = {Haslanger, Sally},
	journal = {No\^us},
	number = {1},
	pages = {31--55},
	title = {Gender and Race: (What) Are They? (What) Do We Want Them to Be?},
	volume = {34},
	year = {2000}}

@article{hainmuelleretal2014,
	author = {Hainmueller, Jens and Hopkins, Daniel J and Yamamoto, Teppei},
	journal = {Political Analysis},
	number = {1},
	pages = {1--30},
	title = {Causal Inference in Conjoint Analysis: Understanding Multidimensional Choices via Stated Preference Experiments},
	volume = {22},
	year = {2014}}

@article{zhangrubin2003,
	author = {Zhang, Junni L and Rubin, Donald B},
	journal = {Journal of Educational and Behavioral Statistics},
	number = {4},
	pages = {353--368},
	title = {Estimation of Causal Effects via Principal Stratification when Some Outcomes are Truncated by ``Death''},
	volume = {28},
	year = {2003}}

@book{rosenbaum2002a,
	address = {New York, NY},
	author = {Rosenbaum, Paul R},
	edition = {2nd},
	publisher = {Springer},
	title = {Observational Studies},
	year = {2002}}

@article{horvitzthompson1952,
	author = {Horvitz, Daniel G and Thompson, Donovan J},
	journal = {Journal of the American Statistical Association},
	number = {260},
	pages = {663--685},
	title = {A Generalization of Sampling without Replacement from a Finite Universe},
	volume = {47},
	year = {1952}}

@article{bookstein1989,
	author = {Bookstein, Fred L.},
	title = {Principal Warps: Thin-Plate Splines and the Decomposition of Deformations},
	journal = {{IEEE} Transactions on Pattern Analysis and Machine Intelligence},
	volume = {11},
	number = {6},
	pages = {567--585},
	year = {1989},
	doi = {10.1109/34.24792}}

@misc{macdonaldetal2016data,
	author = {Mac{D}onald, John and Fagan, Jeffrey and Geller, Amanda},
	title = {Replication Data and Code for ``The Effects of Local Police Surges on Crime and Arrests in {N}ew {Y}ork {C}ity''},
	year = {2016},
	howpublished = {\url{https://github.com/macdonaldjohn/Impact-Zone-Data}},
	note = {{S}tata replication code}}

\clearpage

\spacingset{1.9}
\setlength{\abovedisplayskip}{6pt plus 2pt minus 2pt}
\setlength{\belowdisplayskip}{6pt plus 2pt minus 2pt}
\setlength{\abovedisplayshortskip}{4pt plus 2pt minus 2pt}
\setlength{\belowdisplayshortskip}{4pt plus 2pt minus 2pt}

\titleformat{\paragraph}[runin]{\normalfont\normalsize\bfseries}{}{0pt}{}
\titlespacing*{\paragraph}{0pt}{2ex plus 0.5ex minus .2ex}{0.6em}

\setcounter{section}{0}
\setcounter{subsection}{0}
\setcounter{equation}{0}
\setcounter{figure}{0}
\setcounter{table}{0}
\setcounter{thm}{0}
\setcounter{prop}{0}
\setcounter{lem}{0}
\setcounter{cor}{0}
\setcounter{rmk}{0}
\renewcommand{\thesection}{S.\arabic{section}}
\renewcommand{\thesubsection}{S.\arabic{section}.\arabic{subsection}}
\renewcommand{\thesubsubsection}{S.\arabic{section}.\arabic{subsection}.\arabic{subsubsection}}
\renewcommand{\thefigure}{S.\arabic{figure}}
\renewcommand{\thetable}{S.\arabic{table}}
\numberwithin{equation}{section}
\numberwithin{thm}{section}
\numberwithin{prop}{section}
\numberwithin{lem}{section}
\numberwithin{cor}{section}

\phantomsection
\pdfbookmark[0]{Supplementary Material}{supplementary-material}
\begin{center}
{\Large\bfseries Supplementary Material}
\end{center}
\medskip

This supplement provides formal results, proofs, and additional discussion supporting the manuscript. Section~\ref{sec: supp causal comparison} argues that defining the causal comparison and conducting inference about it are separable tasks. Section~\ref{sec: supp prelim lemmas} establishes that potential outcomes reduce to functions of the civilian-race indicator alone within principal strata. Section~\ref{sec: supp formal section 3} presents the decomposition of the stratum-specific average causal effect among potentially stoppable encounters. Section~\ref{sec: supp formal section 4} presents results supporting Section~4 of the manuscript, including the equivalence of the augmented Difference-in-Means to the full-data Difference-in-Means, supporting lemmas on the hypergeometric distribution of the number of stopped control units and on conditional probabilities of treatment, and the finite-sample bias and consistency of the stratum-specific estimator $\hat{\tau}_g$. Section~\ref{sec: supp formal section 5} provides proofs of the lemma and propositions in Section~5 of the manuscript, a lemma establishing the equivalence of $\Gamma = 1$ and No-Bias-in-Encounters, efficient computation of probability bounds, the corollary extending the sensitivity analysis to stratum-specific bounds, and the conservative variance estimator.

\subsection*{Notation and setup}

Before presenting the formal results, we reiterate the notation and setup of the manuscript that are used throughout the supplement. We restrict attention to \textit{potentially stoppable encounters} --- encounters whose realized nonracial profile $\bm{v}_i$ belongs to $\mathcal{V}_i^{\mathrm{ps}} \coloneqq \mathcal{V}_i^{\mathrm{AS}} \cup \mathcal{V}_i^{\mathrm{OMS}} \cup \mathcal{V}_i^{\mathrm{OWS}}$, as defined in Section~3.2 of the manuscript. This restriction excludes the Never-Stop ($\mathrm{NS}$) principal stratum, since encounters whose nonracial profiles would not lead to a stop regardless of the civilian's race are structurally irrelevant to both stopping and use-of-force decisions.

Under Assumption~4 (No-Only-White-Stops), the Only-White-Stop ($\mathrm{OWS}$) principal stratum is further ruled out among the realized encounters, so every potentially stoppable encounter belongs to either the Always-Stop ($\mathrm{AS}$) or Only-Minority-Stop ($\mathrm{OMS}$) principal stratum --- that is, $r_{g,i} \in \{\mathrm{AS}, \mathrm{OMS}\}$ for all $i \in \{1, \ldots, n_g\}$ and all $g \in \mathcal{G}$. The $\mathrm{AS}$ label denotes encounters whose nonracial profile would lead to a stop regardless of the civilian's race ($s_{g,i}(1) = s_{g,i}(0) = 1$). The $\mathrm{OMS}$ label denotes encounters whose profile would lead to a stop only if the civilian is minority ($s_{g,i}(1) = 1$, $s_{g,i}(0) = 0$); in the latter case, a white civilian occupying the same encounter slot would not be stopped and therefore would not appear in police administrative data.

Within each stratum $g$, we index encounters by $i = 1, \ldots, n_g$, where $n_g$ is the total number of potentially stoppable encounters. We write $n_{g,1}$ and $n_{g,0} \coloneqq n_g - n_{g,1}$ for the minority-civilian and white-civilian counts, which are fixed across all assignments in $\Omega_g$ by the conditioning described in Section~3.4 of the manuscript. The principal stratum label $r_{g,i}$ is a fixed attribute of encounter slot $i$, so the total numbers of Always-Stop and Only-Minority-Stop encounters, $n_{g,\mathrm{AS}} \coloneqq \sum_{i=1}^{n_g} \mathbbm{1}\{r_{g,i} = \mathrm{AS}\}$ and $n_{g,\mathrm{OMS}} \coloneqq \sum_{i=1}^{n_g} \mathbbm{1}\{r_{g,i} = \mathrm{OMS}\}$, are also fixed across $\Omega_g$. However, the cross-tabulations $n_{g,z,r}(\bm{Z}_g)$ --- the numbers of encounters with civilian race $z$ and principal stratum $r$ --- are random variables under $\bm{Z}_g \in \Omega_g$, since different elements of $\Omega_g$ distribute the $n_{g,1}$ minority-civilian labels differently across Always-Stop and Only-Minority-Stop encounters. In particular, the number of stopped white-civilian encounters, $n_{g,0,\mathrm{AS}}(\bm{Z}_g) = \sum_{i:\,r_{g,i} = \mathrm{AS}} (1 - Z_{g,i})$, is a random variable under $\bm{Z}_g \in \Omega_g$.

The set of informative strata $\mathcal{G}^{\ast} \coloneqq \{g \in \mathcal{G}: n_{g,1} \geq 1 \text{ and } n_{g,0} \geq 1\}$ is as defined in Section~3.4 of the manuscript. All probabilities and expectations involving $\bm{Z}_g$ are conditional on the event $\bm{Z}_g \in \Omega_g$ (the assignment space defined in Section~3.4 of the manuscript). We leave this conditioning implicit throughout unless stated otherwise.

\medskip
\noindent\textit{Terminology.} Throughout the supplement, surrounding prose --- including section introductions, remarks, and motivating paragraphs --- uses the manuscript's application language: encounters with minority and white civilians among the potentially stoppable encounters within each stratum. In formal statements and proofs, we use standard causal-inference terminology: unit for encounter, treated for minority-civilian, and control for white-civilian. The principal strata labels --- Always-Stop ($\mathrm{AS}$), Only-Minority-Stop ($\mathrm{OMS}$), Only-White-Stop ($\mathrm{OWS}$), and Never-Stop ($\mathrm{NS}$) --- are retained throughout because they denote specific counterfactual stopping behaviors defined in Section 3.2. This convention emphasizes that the formal results are general statements, while the surrounding prose reflects the application.

\section{Specifying a Causal Target Versus Inferring It}\label{sec: supp causal comparison}

This section develops the manuscript's stance on the relation between two questions: \textit{which} contrast --- which particular comparison between potential outcomes --- to define as the target for causal inference, and \textit{whether} the defined contrast can be reliably inferred from the data. The first is a question of estimand choice. The second is a question of inference. We argue that the two are separable: defining what one is after and inferring it from the data are distinct tasks. The framework in the manuscript addresses the inference question for our chosen contrast, and the two confounding channels it addresses apply to any causal claim about race and use of force, so the same channel-decomposition logic could ground an analogous sensitivity analysis for a different target.

The fusion runs through \citet{heckman1998}'s influential framing. \citet{heckman1998} defines discrimination as ``a causal effect defined by a hypothetical \textit{ceteris paribus} conceptual experiment --- varying race but keeping all else constant'' \citep[][p.~102]{heckman1998}, so the all-else-equal contrast is simultaneously the target (varying race, holding nonracial attributes constant) and the proposed response to confounding. Confounding, on this framing, refers to what \citet{heckman1998} calls ``unobserved'' or ``omitted characteristics'' --- nonracial differences between individuals that could drive differential behavior toward them. The all-else-equal contrast has since been contested on substantive grounds in the broader literature \citep{hukohler-hausmann2025,hu2025}: holding nonracial attributes fixed across the racial comparison can be read as removing precisely the social position that racial categories index. We do not adjudicate this substantive debate. Our point is methodological and prior to it: target-definition and inference are separable tasks, so a framework that addresses inference can be used regardless of where any given researcher ultimately stands on the target.

To make the distinction concrete, consider two hypothetical experiments on officer use of force in which officers are assigned to vertical patrols of building stairwells. Each experiment uses the same coin flip per officer, but what the flip induces differs across the two. In the first experiment, the coin determines which of two stairwells the officer is sent to: on heads, a stairwell in an Upper East Side residential building staged with a white civilian; on tails, a stairwell in a Brownsville residential building staged with a minority civilian. In the second experiment, each officer has a fixed stairwell, and the coin determines which civilian is staged there: on heads, a white civilian; on tails, a minority civilian.

The two experiments target two different contrasts. The second is an all-else-equal contrast in the sense of \citet{heckman1998}: patrol context is held fixed across the racial comparison. The first is not: patrol context is permitted to move with race. Knowing only that a person is a racial minority, rather than white, corresponds to a greater probability that the person lives in Brownsville than on the Upper East Side; the first experiment's contrast preserves this indexing of social position --- here, neighborhood marginalization --- rather than controlling it out. Our point is not to adjudicate between the targets but to observe that the same two confounding channels --- encounter assignment and sample selection --- operate in both. Target-choice and response-to-confounding can therefore be separated. The rest of this section develops three points in turn: the contrast our framework adopts and why, the broader space of targets a researcher could adopt instead, and the formalization of the two confounding channels.

\subsection*{Adopting the Literature's Contrast}

We adopt the contrast implicit in the existing literature on racial discrimination in police use of force \citep{fryer2019,knoxetal2020}: an officer's use of force in an encounter that could be filled with a minority versus a white civilian, with the officer and patrol context held fixed through the stochastic path-intersection process of Section~2.1 of the manuscript. This contrast belongs to the all-else-equal family whose use as a target has been contested on substantive grounds in the broader literature \citep{hukohler-hausmann2025,hu2025}, and we do not defend it against those substantive arguments on first principles. The separability claim developed above is what lets us proceed without that defense: in our hands, the literature's contrast does only the work of target definition, and the manuscript's framework addresses confounding for that target independently of how the substantive grounds offered for it are eventually settled. A researcher persuaded by the substantive case against this contrast can target a different one; the same two confounding channels apply, and an analogous sensitivity analysis can be built on them.

Two considerations motivate the specific choice for this paper. First, this work is in direct conversation with the existing empirical literature on NYPD stop, question, and frisk practices \citep{fryer2019,knoxetal2020,gaebleretal2022}, which targets exactly this contrast. The methodological contribution of the manuscript is strongest when the inference it supports addresses the same object that literature has been debating; departing from that contrast for this paper would put the analysis outside that conversation without a payoff to set against the cost. Second, the empirical anchors for our two sensitivity parameters --- the plausible range $\underline{\rho} \in [0.32, 0.34]$ for discrimination in stops, drawn from \citet{goeletal2016} and \citet{gelmanetal2007} through \citet{knoxetal2020}, and the geographic ceiling $\Gamma_g^{\mathrm{geo}}(\xi)$ from 2010 Census block-group demographics --- are built on prior work that targets this same contrast. Anchoring our sensitivity analysis to that body of work places our calibrations in dialogue with prior estimates and makes the plausibility argument legible to readers of that literature.

The contrast also fixes certain nonracial attributes of the civilian. We weaken the all-else-equal condition as far as the principal stratification allows: Two civilians of different races who could occupy the same encounter are permitted to differ in their nonracial profiles $\bm{v}$, provided both profiles belong to the same principal stratum. Nonracial profiles may therefore vary across the minority-civilian and white-civilian conditions, but not in ways that would alter the officer's counterfactual stopping behavior.

The manuscript makes explicit an assumption that work in this literature \citep{fryer2019,knoxetal2020,gaebleretal2022,zhaoetal2022} leaves implicit: For each fixed race $z$, $y_i(z, \bm{v})$ is constant across all profiles $\bm{v}$ within a given principal stratum. The manuscript's contrast between races at a fixed principal stratum therefore takes the same value for every pair of profiles drawn from that principal stratum, and no marginalization over nonracial profiles is required. This contrast is a coarsened analogue of the component effect in the all-else-equal framework: Both hold a nonracial feature fixed across the racial comparison, but in the manuscript's contrast, that feature is the principal stratum rather than the full profile $\bm{v}$.

Which contrast is the right one to target is a substantive and often normative question that extends beyond statistical methodology \citep{hu2025}, and we do not attempt to settle it here. The next subsection surveys the broader space of targets a researcher could adopt instead, including the alternatives that motivate the criticisms of the all-else-equal frame. The final subsection formalizes the two confounding channels and shows that they apply regardless of which target is chosen.

\subsection*{The Space of Possible Targets}

The contrast our framework adopts is one specific point in the broader space of possible targets. The second experiment from the section opening exemplifies the type common in the empirical literature on discrimination: an all-else-equal contrast that holds nonracial attributes $\bm{v}$ fixed across race. Conjoint studies \citep{hainmuelleretal2014} formalize this convention with a progression of estimands, each building on the previous:
\begin{itemize}
    \item The \textit{component effect} (under Assumption 1 of No-Interference), in the notation of the manuscript, is $y_i(1, \bm{v}) - y_i(0, \bm{v})$: the contrast between decision-maker $i$'s responses to a minority ($z = 1$) and a white ($z = 0$) individual sharing the same nonracial profile $\bm{v}$.
    \item The \textit{marginal component effect}, $\sum_{\bm{v}} \left[y_i(1, \bm{v}) - y_i(0, \bm{v})\right] q(\bm{v})$, integrates the component effect over a researcher-specified probability mass function $q$ on $\bm{v}$, yielding a scalar contrast for each decision-maker $i$.
    \item The \textit{average marginal component effect}, $\frac{1}{N} \sum_{i=1}^{N} \sum_{\bm{v}} \left[y_i(1, \bm{v}) - y_i(0, \bm{v})\right] q(\bm{v})$, aggregates the marginal component effect across a population of $N$ decision-makers.
\end{itemize}

The first experiment exemplifies a different type: a contrast in which nonracial attributes are permitted to vary with race. The all-else-equal structure of the conjoint progression is a convention, not a definitional requirement. A researcher could target $y_i(1, \bm{v}) - y_i(0, \bm{v}^\prime)$ with $\bm{v} \neq \bm{v}^\prime$: the contrast between decision-maker $i$'s response to a minority individual with nonracial profile $\bm{v}$ and to a white individual with a different profile $\bm{v}^\prime$. A structurally different alternative abandons contrasts between fixed profiles altogether: Each potential outcome is first averaged under its own race-conditional distribution of $\bm{v}$, and the two race-specific averages are then contrasted --- reversing the all-else-equal order of contrast followed by marginalization. Letting $p_1(\bm{v})$ and $p_0(\bm{v})$ denote race-specific probability mass functions of $\bm{v}$, this alternative has two stages:
\begin{itemize}
    \item The \textit{within-race marginal effect}, $\sum_{\bm{v}} y_i(1, \bm{v}) p_1(\bm{v}) - \sum_{\bm{v}} y_i(0, \bm{v}) p_0(\bm{v})$: the contrast between decision-maker $i$'s expected response to a minority individual with nonracial profile distributed according to $p_1$ and the expected response to a white individual with profile distributed according to $p_0$.
    \item The \textit{average within-race marginal effect}, $\frac{1}{N} \sum_{i=1}^{N} \left[\sum_{\bm{v}} y_i(1, \bm{v}) p_1(\bm{v}) - \sum_{\bm{v}} y_i(0, \bm{v}) p_0(\bm{v})\right]$, aggregates the within-race marginal effect across a population of $N$ decision-makers.
\end{itemize}
Such contrasts preserve rather than remove the probabilistic differences in nonracial attributes that racial categories index. The distributions $p_1$ and $p_0$ may be specified, for example, to reflect empirical differences between racial groups in attributes of social position --- such as income, education, or residential patterns \citep{haslanger2000,hu2023,hukohler-hausmann2025,hu2025,taylor2004} --- although other specifications are possible.

\subsection*{The Confounding Channels Apply Regardless of Target}

In either experiment, the same two channels confound inference about the corresponding target. The first channel, \textit{encounter assignment}, is that the probability of tails --- the probability that the officer is sent to the minority-civilian condition --- may differ from one officer to another, possibly in ways associated with officers' force dispositions. The second, \textit{sample selection}, is that we observe whether an officer uses force only when the encounter results in a stop, and not all officers would stop the civilian under each race condition. Both channels are properties of the data-generating process rather than of the contrast, so they confound inference about the first experiment's target and about the second experiment's target in the same way. Encounter assignment confounds in both because officers vary in the probability of being assigned the minority-civilian condition, and that variation may correlate with variation in how those officers use force. Sample selection confounds in both because force is observed only when the encounter produces a stop, and stopping rates can vary across officers and across the race conditions.

The debate over which contrast is substantively or normatively appropriate may shape the interpretation of results, but the preceding argument shows that it does not change the two inferential problems we address in the manuscript: those problems arise from features of the data-generating process that the choice of contrast does not touch. Because the choice of contrast does not change what our framework does about confounding, our adoption of the literature's contrast does not commit us to its substantive or normative grounds. We adopt it because we build on and are in dialogue with an existing literature. A researcher who prefers a different target faces the same two confounding channels and could build an analogous sensitivity analysis addressing them.

\section{Reduction of Potential Outcomes to Functions of the Civilian-Race Indicator}\label{sec: supp prelim lemmas}

The following lemma shows that, under Assumption~1 (No Interference) and Assumption~3 (Use-of-Force Depends Only on Race within Principal Strata), the stopping and use-of-force potential outcomes both reduce to functions of the civilian-race indicator $z$ once we condition on the principal stratum to which the realized civilian's nonracial profile belongs. We use this reduction throughout the subsequent proofs.

\begin{lem}[Reduction of potential outcomes to functions of the civilian-race indicator]\label{lem: supp PO reduction}
Under Assumptions~1 (No Interference) and 3 (Use-of-Force Depends Only on Civilian-Race Indicator within Principal Strata), conditional on the principal stratum label $r_{g,i}$ of unit $i$ in stratum $g$:
\begin{enumerate}[label=\textup{(\alph*)}, leftmargin=2.5em]
\item The stopping potential outcome depends on $\bm{v}_{g,i}$ only through the principal stratum label $r_{g,i}$, so that $s_{g,i}(z, \bm{v}) = s_{g,i}(z, r_{g,i}) \coloneqq s_{g,i}(z)$ for any $\bm{v} \in \mathcal{V}_{g,i}^{r_{g,i}}$.
\item The use-of-force potential outcome depends on $\bm{v}_{g,i}$ only through $r_{g,i}$, so that $y_{g,i}(z, \bm{v}) = y_{g,i}(z, r_{g,i}) \coloneqq y_{g,i}(z)$ for any $\bm{v} \in \mathcal{V}_{g,i}^{r_{g,i}}$.
\end{enumerate}
Consequently, conditional on the principal stratum labels $\{r_{g,i}\}_{i=1}^{n_g}$, the assignment and potential outcomes within each stratum $g$ can be expressed solely in terms of the binary civilian-race indicator $z$.
\end{lem}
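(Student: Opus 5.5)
The plan is to unwind the definitions in \Cref{sec: strata} in three short steps: first no interference, then the definition of the principal strata, then \Cref{assm: stratum-sufficiency}.

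\textbf{Step 1 (no interference).} By \Cref{assm: no-interference}, for any $\bm{t},\bm{t}'\in\mathcal{T}^N$ with $\bm{t}_i=\bm{t}'_i$ we have $s_i(\bm{t})=s_i(\bm{t}')$ and $y_i(\bm{t})=y_i(\bm{t}')$. So $s_{g,i}$ and $y_{g,i}$ factor through the coordinate projection $\bm{t}\mapsto\bm{t}_i=(z,\bm{v})$, and the functions $s_{g,i}(z,\bm{v})$ and $y_{g,i}(z,\bm{v})$ are well defined on $\mathcal{T}=\{0,1\}\times\mathcal{V}$.

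\textbf{Step 2, part (a).} Fix a unit and its realized label $r_{g,i}$. Each profile $\bm{v}$ lies in exactly one cell of the partition $\{\mathcal{V}^{\mathrm{AS}}_{g,i},\mathcal{V}^{\mathrm{OMS}}_{g,i},\mathcal{V}^{\mathrm{OWS}}_{g,i},\mathcal{V}^{\mathrm{NS}}_{g,i}\}$. This holds because $(s_{g,i}(1,\bm{v}),s_{g,i}(0,\bm{v}))\in\{0,1\}^2$ takes exactly one of four values. Each cell is, by construction, the preimage of a fixed pair under $\bm{v}\mapsto(s_{g,i}(1,\bm{v}),s_{g,i}(0,\bm{v}))$. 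Hence for every $\bm{v}\in\mathcal{V}^{r_{g,i}}_{g,i}$ and each $z\in\{0,1\}$, the value $s_{g,i}(z,\bm{v})$ equals the corresponding coordinate of the pair that defines the label $r_{g,i}$. For example, $r_{g,i}=\mathrm{OMS}$ gives $s_{g,i}(1,\bm{v})=1$ and $s_{g,i}(0,\bm{v})=0$. This common value is what we denote $s_{g,i}(z,r_{g,i})\eqqcolon s_{g,i}(z)$, which proves (a).

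\textbf{Step 3, part (b).} Take $\bm{v},\bm{v}'\in\mathcal{V}^{r_{g,i}}_{g,i}$, so that $r_{g,i}(\bm{v})=r_{g,i}(\bm{v}')$. Then \Cref{assm: stratum-sufficiency} gives $y_{g,i}(z,\bm{v})=y_{g,i}(z,\bm{v}')$ for each $z$. So $y_{g,i}(z,\cdot)$ is constant on the cell, and we define $y_{g,i}(z)$ as that constant. The one subtlety is that this value is well defined regardless of which profile the realized civilian has, provided its profile lies in the conditioned cell. Conditioning on $r_{g,i}$ guarantees exactly that.

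\textbf{Consequence.} After Steps 2 and 3, each unit carries the fixed pairs $(s_{g,i}(0),s_{g,i}(1))$ and $(y_{g,i}(0),y_{g,i}(1))$, indexed only by $z$. The assignment model of \Cref{sec: setup:assignment} is already stated on $\bm{z}_g\in\Omega_g$, and the labels are held fixed across assignments. Therefore everything within stratum $g$ is expressed through the binary $z$ alone.

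I expect no real obstacle here. The only care point is Step 1: the reduction from $\mathcal{T}^N$ to $\mathcal{T}$ must be stated explicitly, so that "$s_{g,i}(z,\bm{v})$" is a legitimate function before the partition argument uses it.
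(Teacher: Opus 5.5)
Your proposal is correct and follows the paper's proof essentially step for step. Part (a) comes from the fact that each cell of the principal-strata partition is the preimage of a fixed stopping pair, part (b) is a direct application of Assumption~3, and the concluding claim follows because $\Omega_g$ is defined in terms of $\bm{z}_g$ alone; your explicit Step~1 simply spells out the role of No Interference, which the paper invokes in a single clause.
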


\begin{proof}
Part~(a) follows from Assumption~1 (No Interference) and the definition of the principal strata in Section~3.2 of the manuscript. The partition $\{\mathcal{V}_{g,i}^{\mathrm{AS}}, \mathcal{V}_{g,i}^{\mathrm{OMS}}, \mathcal{V}_{g,i}^{\mathrm{OWS}}, \mathcal{V}_{g,i}^{\mathrm{NS}}\}$ is defined by the counterfactual stopping behavior $(s_{g,i}(1, \bm{v}), s_{g,i}(0, \bm{v}))$: All $\bm{v}$ in the same element of the partition yield the same pair of stopping outcomes. Hence, conditional on $r_{g,i}$, the value of $s_{g,i}(z, \bm{v})$ does not depend on which $\bm{v} \in \mathcal{V}_{g,i}^{r_{g,i}}$ is realized.

Part~(b) is the content of Assumption~3, which states that for any $\bm{v}, \bm{v}' \in \mathcal{V}_{g,i}^{r_{g,i}}$, $y_{g,i}(z, \bm{v}) = y_{g,i}(z, \bm{v}')$. Thus $y_{g,i}(z, \bm{v})$ depends on $\bm{v}$ only through the principal stratum label.

The final claim follows from parts~(a) and~(b): once the labels $\{r_{g,i}\}$ are conditioned on, both $s_{g,i}$ and $y_{g,i}$ are functions of $z$ alone, and the assignment space $\Omega_g$ (defined in Section~3.4 of the manuscript) is defined solely in terms of the civilian-race indicator vector $\bm{z}_g$.
\end{proof}

\section{Formal Results and Proofs for Section 3}\label{sec: supp formal section 3}

\subsection{\texorpdfstring{$\Gamma = 1$ Implies No-Bias-in-Encounters}{Gamma = 1 Implies No-Bias-in-Encounters}} \label{sec: supp gamma one}

The manuscript states that $\Gamma = 1$ in the assignment model implies No-Bias-in-Encounters, which we now prove.

\begin{lem}[$\Gamma = 1$ Implies No-Bias-in-Encounters]\label{lem: supp gamma one}
Suppose the restriction on the assignment model described in Section~3.3 of the manuscript, in which $Z_{g,1}, \ldots, Z_{g,n_g}$ are independent Bernoulli random variables with $\Pr(Z_{g,i} = 1) = \pi_{g,i}$, and inference conditions on $\bm{Z}_g \in \Omega_g$. Write $\varphi_{g,i} \coloneqq \Pr(Z_{g,i} = 1 \given \bm{Z}_g \in \Omega_g)$ for the conditional treatment probability. If $\Gamma = 1$, then No-Bias-in-Encounters holds:
\begin{align*}
\varphi_{g,i} = \varphi_{g,j}
\quad \text{for all } i, j \in \{1, \ldots, n_g\}.
\end{align*}
\end{lem}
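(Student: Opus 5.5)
The plan is to show first that $\Gamma = 1$ forces all unconditional probabilities in a stratum to be equal, and then that equal Bernoulli probabilities give an exchangeable conditional law on $\Omega_g$. Setting $\Gamma = 1$ in \eqref{eq: assign-model} squeezes the odds ratio between $1$ and $1$, so $\pi_{g,i}/(1-\pi_{g,i}) = \pi_{g,j}/(1-\pi_{g,j})$ for all $i,j$ in stratum $g$. The map $\pi \mapsto \pi/(1-\pi)$ is strictly increasing on $(0,1)$, so $\pi_{g,i} = \pi_g$ for a common value $\pi_g$.

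The degenerate endpoints $\pi \in \{0,1\}$ need a brief remark. For $g \in \mathcal{G}^{\ast}$ the event $\bm{Z}_g \in \Omega_g$ must have positive probability because it is conditioned on. If some $\pi_{g,i} \in \{0,1\}$, the ratio bound in \eqref{eq: assign-model} is only meaningful when the odds are finite and positive. I would therefore state, as the model implicitly does, that $\pi_{g,i} \in (0,1)$.

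With a common $\pi_g$, independence gives, for any $\bm{z}_g \in \Omega_g$,
\[
\Pr(\bm{Z}_g = \bm{z}_g) = \pi_g^{n_{g,1}}(1-\pi_g)^{n_g - n_{g,1}}.
\]
This value does not depend on $\bm{z}_g$, so conditioning on $\Omega_g$ yields the uniform distribution $p(\bm{z}_g) = 1/\abs{\Omega_g} = \binom{n_g}{n_{g,1}}^{-1}$. Summing over the assignments with $z_{g,i} = 1$, of which there are $\binom{n_g - 1}{n_{g,1}-1}$, gives $\varphi_{g,i} = \binom{n_g-1}{n_{g,1}-1}/\binom{n_g}{n_{g,1}} = n_{g,1}/n_g$. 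This is the same for every $i$.

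An alternative for the last step is a symmetry argument: a uniform law on $\Omega_g$ is invariant under transpositions of coordinates, so all marginals coincide. None of these steps is a real obstacle. The only delicate point is the boundary case of probabilities equal to $0$ or $1$, which I would handle by the positivity remark above.
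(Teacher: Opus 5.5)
Your proposal is correct and follows the paper's proof essentially step for step. Under $\Gamma = 1$, equal odds and the injectivity of $x \mapsto x/(1-x)$ give a common $\pi$; independence then makes $\Pr(\bm{Z}_g = \bm{z}_g)$ constant on $\Omega_g$, and the count $\binom{n_g-1}{n_{g,1}-1}/\binom{n_g}{n_{g,1}} = n_{g,1}/n_g$ finishes it, while your positivity remark simply makes explicit the paper's implicit assumption $\pi \in (0,1)$.
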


\begin{proof}
Suppose $\Gamma = 1$. The restriction on the assignment model in equation~(1) of the manuscript becomes
\begin{align*}
1 \leq \dfrac{\pi_{g,i}/(1 - \pi_{g,i})}{\pi_{g,j}/(1 - \pi_{g,j})} \leq 1 \quad \text{for all } i, j \in \{1, \ldots, n_g\},
\end{align*}
which implies $\pi_{g,i}/(1 - \pi_{g,i}) = \pi_{g,j}/(1 - \pi_{g,j})$ for all $i, j$. The function $x \mapsto x/(1-x)$ is strictly monotone on $(0, 1)$ --- and therefore injective --- so equality of odds implies equality of probabilities: $\pi_{g,i} = \pi_{g,j}$ for all $i, j$. Hence there exists $\pi \in (0,1)$ such that $\pi_{g,i} = \pi$ for all $i \in \{1, \ldots, n_g\}$.

For any $\bm{z}_g \in \{0,1\}^{n_g}$ with $\sum_{i=1}^{n_g} z_{g,i} = n_{g,1}$, independence of the $Z_{g,i}$ gives
\begin{align*}
\Pr(\bm{Z}_g = \bm{z}_g) = \prod_{i=1}^{n_g} \pi^{z_{g,i}}(1-\pi)^{1 - z_{g,i}} = \pi^{n_{g,1}}(1-\pi)^{n_{g,0}},
\end{align*}
which takes the same value for every $\bm{z}_g \in \Omega_g$. Conditioning on $\bm{Z}_g \in \Omega_g$ therefore yields the uniform distribution
\begin{align}\label{eq: supp uniform on Omega_g}
\Pr(\bm{Z}_g = \bm{z}_g \given \bm{Z}_g \in \Omega_g) = \dfrac{1}{|\Omega_g|}, \quad \bm{z}_g \in \Omega_g.
\end{align}

To compute $\varphi_{g,i}$, fix an encounter $i \in \{1, \ldots, n_g\}$ and write the conditional probability as
\begin{align*}
\varphi_{g,i} = \Pr(Z_{g,i} = 1 \given \bm{Z}_g \in \Omega_g) = \dfrac{\abs{\{\bm{z}_g \in \Omega_g: z_{g,i} = 1\}}}{\abs{\Omega_g}},
\end{align*}
which follows by summing \eqref{eq: supp uniform on Omega_g} over the subset of $\Omega_g$ with $z_{g,i} = 1$. An assignment $\bm{z}_g \in \Omega_g$ satisfies $z_{g,i} = 1$ if and only if the remaining $n_g - 1$ coordinates contain exactly $n_{g,1} - 1$ ones, so
\begin{align*}
\abs{\{\bm{z}_g \in \Omega_g: z_{g,i} = 1\}} & = \binom{n_g - 1}{n_{g,1} - 1} \text{ and } \abs{\Omega_g} = \binom{n_g}{n_{g,1}}.
\end{align*}
Therefore,
\begin{align*}
\varphi_{g,i} = \dfrac{\binom{n_g - 1}{n_{g,1} - 1}}{\binom{n_g}{n_{g,1}}} = \dfrac{n_{g,1}}{n_g}.
\end{align*}
Because this expression does not depend on $i$, $\varphi_{g,i} = \varphi_{g,j} = n_{g,1}/n_g$ for all $i, j \in \{1, \ldots, n_g\}$.
\end{proof}

\subsection{Decomposition of the Stratum-Specific ATE}\label{sec: supp decomp}


\begin{prop}[Stratum-specific ATE decomposition]\label{prop: supp decomp}
Under Assumptions~1--4, the stratum-specific ATE among potentially stoppable encounters in stratum $g \in \mathcal{G}$, as defined in equation~(5) of the manuscript, satisfies
\begin{align}\label{eq: supp bias-in-force decomp}
\tau_g = \bar{y}_g(1) - (1 - \rho_g)\,\bar{y}_g^{\mathrm{AS}}(0),
\end{align}
where
\begin{align*}
\bar{y}_g(z) &\coloneqq n_g^{-1}\sum_{i=1}^{n_g} y_{g,i}(z), \\
\bar{y}_g^{\mathrm{AS}}(z) &\coloneqq n_{g,\mathrm{AS}}^{-1} \sum_{i:\, r_{g,i} = \mathrm{AS}} y_{g,i}(z), \\
\rho_g & = \dfrac{n_{g,\mathrm{OMS}}}{n_g}.
\end{align*}
\end{prop}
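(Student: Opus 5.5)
Fix a stratum $g$. The plan is to start from the definition $\tau_g = \bar{y}_g(1) - \bar{y}_g(0)$ in equation~(5) of the manuscript. By \Cref{lem: supp PO reduction}, each potential outcome is a function of $z$ alone once we condition on the principal stratum labels. It therefore suffices to show that $\bar{y}_g(0) = (1-\rho_g)\,\bar{y}_g^{\mathrm{AS}}(0)$. The minority-civilian average $\bar{y}_g(1)$ passes through untouched, so all the work is on the control average.

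\textbf{Step 1: partition the control sum by principal stratum.} Under Assumption~4 (No-Only-White-Stops), together with the restriction to potentially stoppable encounters, every label satisfies $r_{g,i} \in \{\mathrm{AS}, \mathrm{OMS}\}$. Hence $n_g = n_{g,\mathrm{AS}} + n_{g,\mathrm{OMS}}$ and
\begin{align*}
n_g\,\bar{y}_g(0) = \sum_{i:\, r_{g,i}=\mathrm{AS}} y_{g,i}(0) + \sum_{i:\, r_{g,i}=\mathrm{OMS}} y_{g,i}(0).
\end{align*}

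\textbf{Step 2: kill the OMS sum.} For an OMS encounter we have $s_{g,i}(0) = 0$ by definition of the stratum. Assumption~2 (No-Force-Without-Stop) then gives $0 \le y_{g,i}(0) \le s_{g,i}(0) = 0$, so the second sum vanishes.

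\textbf{Step 3: rewrite the AS sum.} The first sum equals $n_{g,\mathrm{AS}}\,\bar{y}_g^{\mathrm{AS}}(0)$. Dividing by $n_g$ and using $n_{g,\mathrm{AS}}/n_g = 1 - n_{g,\mathrm{OMS}}/n_g = 1 - \rho_g$ yields $\bar{y}_g(0) = (1-\rho_g)\,\bar{y}_g^{\mathrm{AS}}(0)$. Substituting this into $\tau_g = \bar{y}_g(1) - \bar{y}_g(0)$ gives the stated decomposition.

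I also need to confirm that $\rho_g$ as defined in equation~(4) of the manuscript equals $n_{g,\mathrm{OMS}}/n_g$. Under Assumption~4, $s_{g,i}(1) - s_{g,i}(0)$ equals $1$ exactly for OMS encounters and $0$ for AS encounters, so averaging over the stratum gives $n_{g,\mathrm{OMS}}/n_g$.

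\textbf{Main obstacle.} The argument is elementary, and I expect the only delicate points to be bookkeeping.

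First, $\bar{y}_g^{\mathrm{AS}}(0)$ is undefined when $n_{g,\mathrm{AS}} = 0$. In that case I would interpret $(1-\rho_g)\,\bar{y}_g^{\mathrm{AS}}(0)$ as $0$, which is consistent because then $\rho_g = 1$.

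Second, the labels $r_{g,i}$ must be treated as fixed slot attributes rather than depending on the realized assignment. This is exactly what the conditioning in \Cref{lem: supp PO reduction} provides, and it ensures that $n_{g,\mathrm{AS}}$ and $n_{g,\mathrm{OMS}}$ are constants.
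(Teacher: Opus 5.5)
Your proposal is correct and follows essentially the same route as the paper's proof: both partition the control average by the AS/OMS labels, use No-Force-Without-Stop to set $y_{g,i}(0)=0$ on OMS encounters, and rewrite $n_{g,\mathrm{AS}}/n_g$ as $1-\rho_g$. Your added remarks on the $n_{g,\mathrm{AS}}=0$ convention and on checking $\rho_g = n_{g,\mathrm{OMS}}/n_g$ from equation~(4) are harmless bookkeeping that the paper leaves implicit.
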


\begin{proof}
Under Assumption~1 (No Interference), the ATE among potentially stoppable encounters in stratum $g$, $\tau_g = \bar{y}_g(1) - \bar{y}_g(0)$, where the notation $y_{g,i}(z)$ suppresses the dependence on the nonracial profile by Lemma~\ref{lem: supp PO reduction}. It remains to show that $\bar{y}_g(0) = (1 - \rho_g)\,\bar{y}_g^{\mathrm{AS}}(0)$.

Each unit has a fixed principal stratum label $r_{g,i}$ and, under Assumption~4 (No-Only-White-Stops), $r_{g,i} \in \{\mathrm{AS}, \mathrm{OMS}\}$ for all $i$, so we may partition the population average of control potential outcomes over the potentially stoppable encounters into exactly two principal strata:
\begin{align*}
\bar{y}_g(0) = \dfrac{1}{n_g}\sum_{i=1}^{n_g} y_{g,i}(0) = \dfrac{1}{n_g}\left[\sum_{\substack{i:\, r_{g,i} = \mathrm{AS}}} y_{g,i}(0) + \sum_{\substack{i:\, r_{g,i} = \mathrm{OMS}}} y_{g,i}(0)\right].
\end{align*}

By definition of the Only-Minority-Stop principal stratum, every encounter $i$ with $r_{g,i} = \mathrm{OMS}$ has $s_{g,i}(0) = 0$. Assumption~2 (No-Force-Without-Stop) then implies $y_{g,i}(0) \leq s_{g,i}(0) = 0$, and since $y_{g,i}(0) \in \{0, 1\}$, we have $y_{g,i}(0) = 0$ for all encounters with $r_{g,i} = \mathrm{OMS}$. Consequently,
\begin{align*}
\bar{y}_g(0) = \dfrac{n_{g,\mathrm{AS}}}{n_g}\,\bar{y}_g^{\mathrm{AS}}(0) = (1 - \rho_g)\,\bar{y}_g^{\mathrm{AS}}(0),
\end{align*}
where the second equality uses $n_{g,\mathrm{AS}} / n_g = 1 - \rho_g$, since Assumption~4 implies the potentially stoppable encounters in stratum $g$ are partitioned into $n_{g,\mathrm{AS}}$ Always-Stop and $n_{g,\mathrm{OMS}}$ Only-Minority-Stop encounters with $n_{g,\mathrm{AS}} + n_{g,\mathrm{OMS}} = n_g$. Substituting into $\tau_g = \bar{y}_g(1) - \bar{y}_g(0)$ yields \eqref{eq: supp bias-in-force decomp}.
\end{proof}

\section{Formal Results and Proofs for Section 4}\label{sec: supp formal section 4}

Throughout this section we use the plug-in estimators of $\bar{y}_g(1)$ and $\bar{y}_g(0)$:
\begin{align}
\hat{\bar{y}}_g(1) & \coloneqq \dfrac{\sum_{i=1}^{n_g} y_{g,i}(Z_{g,i})\, s_{g,i}(Z_{g,i})\, Z_{g,i}}{\sum_{i = 1}^{n_g} s_{g,i}(Z_{g,i})\, Z_{g,i}}, \label{eq: supp est bar y1 g} \\
\hat{\bar{y}}_g(0) & \coloneqq \dfrac{\sum_{i=1}^{n_g} y_{g,i}(Z_{g,i})\, s_{g,i}(Z_{g,i})\, (1 - Z_{g,i})}{\sum_{i = 1}^{n_g} s_{g,i}(Z_{g,i})\, (1 - Z_{g,i})}. \label{eq: supp est bar y0 g}
\end{align}
The sums range over all $n_g$ encounters in stratum $g$, but $s_{g,i}(Z_{g,i}) = 0$ for every encounter not in the dataset, so each ratio reduces to a sum over the observed encounters.

\subsection{Equivalence of the Augmented Difference-in-Means to the Full-Data Estimator}\label{sec: supp augmented DIM}

Under the sample selection structure described in the manuscript, police administrative data omit all encounters that were not stopped. Among the potentially stoppable encounters with white civilians, the missing observations are exactly those with $r_{g,i} = \mathrm{OMS}$ and $Z_{g,i} = 0$. By definition of the Only-Minority-Stop principal stratum, these encounters have $s_{g,i}(0) = 0$, and by Assumption 2 (No-Force-Without-Stop), their use-of-force outcomes are $y_{g,i}(0) = 0$. Consequently, if we knew the realized number of missing Only-Minority-Stop control encounters under a given assignment, we could reconstruct the full-data Difference-in-Means among all potentially stoppable encounters by appending the appropriate number of zeros to the white-civilian outcomes. \Cref{prop: supp aug DIM} below formalizes the construction for arbitrary stratum compositions and arbitrary postulated counts of missing Only-Minority-Stop control encounters; the manuscript's \Cref{fig: augmentation diagram} illustrates the construction graphically using the worked example from Section~5.1.

For convenience, we restate the augmented Difference-in-Means from Section~5.1 of the manuscript. With $\hat{\bar{y}}_g(1)$ the observed minority-civilian mean, the augmented Difference-in-Means is $\hat{\tau}_g^{\underline{\rho}_g} \coloneqq \hat{\bar{y}}_g(1) - \hat{\bar{y}}_g^{\underline{\rho}_g}(0)$, where the augmented white-civilian mean is
\begin{align}\label{eq: supp aug control mean}
\hat{\bar{y}}_g^{\underline{\rho}_g}(0) \coloneqq \dfrac{\sum_{i=1}^{n_g} y_{g,i}(Z_{g,i})\, s_{g,i}(Z_{g,i})\, (1 - Z_{g,i})}{\sum_{i=1}^{n_g} s_{g,i}(Z_{g,i})\, (1 - Z_{g,i}) + \tilde{n}_{g,0,\mathrm{OMS}}^{\underline{\rho}_g}}.
\end{align}

\begin{prop}[Augmented Difference-in-Means equals full-data estimator]\label{prop: supp aug DIM}
Under Assumptions~1, 2, and~4, fix a stratum $g \in \mathcal{G}^{\ast}$. On the event that the augmented count satisfies $\tilde{n}_{g,0,\mathrm{OMS}}^{\underline{\rho}_g} = n_{g,0,\mathrm{OMS}}(\bm{Z}_g)$, where
\begin{align*}
n_{g,0,\mathrm{OMS}}(\bm{Z}_g) \coloneqq \sum_{i:\,r_{g,i} = \mathrm{OMS}} (1 - Z_{g,i})
\end{align*}
is the realized number of Only-Minority-Stop control encounters under $\bm{Z}_g$, the augmented Difference-in-Means $\hat{\tau}_g^{\underline{\rho}_g}$ equals the full-data Difference-in-Means:
\begin{align}\label{eq: supp aug equals full}
\hat{\tau}_g^{\underline{\rho}_g} = \hat{\tau}_g^{\mathrm{full}},
\end{align}
where
\begin{align}\label{eq: supp full data DIM}
\hat{\tau}_g^{\mathrm{full}} \coloneqq \dfrac{\sum_{i=1}^{n_g} y_{g,i}(Z_{g,i})\, Z_{g,i}}{\sum_{i=1}^{n_g} Z_{g,i}} - \dfrac{\sum_{i=1}^{n_g} y_{g,i}(Z_{g,i})\, (1 - Z_{g,i})}{\sum_{i=1}^{n_g} (1 - Z_{g,i})}
\end{align}
computes treated and control means over all potentially stoppable encounters without conditioning on stop status.
\end{prop}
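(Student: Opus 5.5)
The plan is to compare the treated and control means term by term, using only the stopping structure implied by Assumptions~2 and~4. I will show that the two treated means coincide identically, and that the two control means coincide on the stated event. Throughout, Lemma~\ref{lem: supp PO reduction} lets us treat $s_{g,i}$ and $y_{g,i}$ as functions of the race indicator alone.

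\textbf{Treated means.} Under Assumption~4 every unit has $r_{g,i}\in\{\mathrm{AS},\mathrm{OMS}\}$, and both strata satisfy $s_{g,i}(1)=1$. Hence $s_{g,i}(Z_{g,i})Z_{g,i}=Z_{g,i}$ for every $i$. Substituting this into \eqref{eq: supp est bar y1 g} turns its numerator into $\sum_i y_{g,i}(Z_{g,i})Z_{g,i}$ and its denominator into $\sum_i Z_{g,i}=n_{g,1}$. This is exactly the first term of \eqref{eq: supp full data DIM}. The identity holds for every assignment in $\Omega_g$, not only on the event in the statement.

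\textbf{Control numerators.} For controls we have $s_{g,i}(0)=1$ if $r_{g,i}=\mathrm{AS}$ and $s_{g,i}(0)=0$ if $r_{g,i}=\mathrm{OMS}$. By Assumption~2, the OMS case also gives $y_{g,i}(0)\le s_{g,i}(0)=0$, so $y_{g,i}(0)=0$ there. Therefore
\[
\sum_i y_{g,i}(Z_{g,i})(1-Z_{g,i}) = \sum_{i:\,r_{g,i}=\mathrm{AS}} y_{g,i}(Z_{g,i})(1-Z_{g,i}) = \sum_i y_{g,i}(Z_{g,i})s_{g,i}(Z_{g,i})(1-Z_{g,i}).
\]
So the full-data control numerator equals the numerator of \eqref{eq: supp aug control mean}.

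\textbf{Control denominators.} The full-data denominator splits by principal stratum:
\[
\sum_i (1-Z_{g,i}) = \sum_i s_{g,i}(Z_{g,i})(1-Z_{g,i}) + n_{g,0,\mathrm{OMS}}(\bm{Z}_g).
\]
The first summand counts the AS controls, $n_{g,0,\mathrm{AS}}(\bm{Z}_g)$, and the second counts the OMS controls. On the event $\tilde{n}_{g,0,\mathrm{OMS}}^{\underline{\rho}_g}=n_{g,0,\mathrm{OMS}}(\bm{Z}_g)$, this equals the denominator of \eqref{eq: supp aug control mean}. Both denominators are positive because $n_{g,0}\ge1$ for $g\in\mathcal{G}^\ast$. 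Subtracting the control mean from the treated mean then gives \eqref{eq: supp aug equals full}.

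The main point needing care is the denominator of the observed treated mean. It must be nonzero for \eqref{eq: supp est bar y1 g} to be well defined, and it is: it equals $n_{g,1}\ge1$, precisely because Assumption~4 forces every treated unit to be stopped. Apart from this, the argument is pure bookkeeping.
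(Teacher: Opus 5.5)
Your proof is correct and follows the paper's argument essentially step for step: the treated means agree for every assignment because $s_{g,i}(1)=1$ for both AS and OMS units, and on the stated event the control means agree because the missing OMS controls contribute zero to the numerator (by Assumption~2) and exactly $n_{g,0,\mathrm{OMS}}(\bm{Z}_g)$ to the denominator. Your explicit check that both denominators are positive ($n_{g,1}\ge 1$ and $n_{g,0}\ge 1$ for $g\in\mathcal{G}^{\ast}$) is a small point the paper leaves implicit.
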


\begin{proof}
We show separately that the treated and control components of the augmented estimator $\hat{\tau}_g^{\underline{\rho}_g}$ coincide with those of $\hat{\tau}_g^{\mathrm{full}}$.

\medskip
\noindent\textit{Treated mean.} Under Assumptions~1 and~4, every potentially stoppable treated encounter is stopped, so $s_{g,i}(Z_{g,i}) = 1$ whenever $Z_{g,i} = 1$. Hence, the stop indicators in the treated mean $\hat{\bar{y}}_g(1)$ are redundant:
\begin{align*}
\dfrac{\sum_{i=1}^{n_g} y_{g,i}(Z_{g,i})\, s_{g,i}(Z_{g,i})\, Z_{g,i}}{\sum_{i=1}^{n_g} s_{g,i}(Z_{g,i})\, Z_{g,i}} = \dfrac{\sum_{i=1}^{n_g} y_{g,i}(Z_{g,i})\, Z_{g,i}}{\sum_{i=1}^{n_g} Z_{g,i}},
\end{align*}
which is the treated component of $\hat{\tau}_g^{\mathrm{full}}$ in \eqref{eq: supp full data DIM}.

\medskip
\noindent\textit{Control mean.} Under Assumptions~1, 2, and~4, the only unobserved control encounters are those with $r_{g,i} = \mathrm{OMS}$ and $Z_{g,i} = 0$. By definition of the Only-Minority-Stop principal stratum, $s_{g,i}(0) = 0$ for these encounters, and by Assumption~2, $y_{g,i}(0) = 0$. In the observed data, the control numerator in \eqref{eq: supp aug control mean} sums only over stopped control encounters (those with $r_{g,i} = \mathrm{AS}$ and $Z_{g,i} = 0$), and the denominator counts these stopped encounters plus the $\tilde{n}_{g,0,\mathrm{OMS}}^{\underline{\rho}_g}$ augmented encounters. On the event that $\tilde{n}_{g,0,\mathrm{OMS}}^{\underline{\rho}_g} = n_{g,0,\mathrm{OMS}}(\bm{Z}_g)$, the augmented denominator becomes
\begin{align*}
n_{g,0,\mathrm{AS}}(\bm{Z}_g) + n_{g,0,\mathrm{OMS}}(\bm{Z}_g) = n_{g,0} = \sum_{i=1}^{n_g} (1 - Z_{g,i}),
\end{align*}
where the first equality holds because, under Assumption~4, every control encounter is either Always-Stop or Only-Minority-Stop. The augmented numerator is unchanged because each appended encounter contributes $y_{g,i}(0) = 0$. Since the $n_{g,0,\mathrm{OMS}}(\bm{Z}_g)$ missing encounters also have $y_{g,i}(0) = 0$, the numerator equals $\sum_{i=1}^{n_g} y_{g,i}(Z_{g,i})(1 - Z_{g,i})$, and the augmented control mean equals
\begin{align*}
\dfrac{\sum_{i=1}^{n_g} y_{g,i}(Z_{g,i})\,(1 - Z_{g,i})}{\sum_{i=1}^{n_g} (1 - Z_{g,i})},
\end{align*}
which is the control component of $\hat{\tau}_g^{\mathrm{full}}$ in \eqref{eq: supp full data DIM}. Combining the treated and control components yields \eqref{eq: supp aug equals full}.
\end{proof}

\begin{rmk}
In practice, the realized number of Only-Minority-Stop control encounters $n_{g,0,\mathrm{OMS}}(\bm{z}_g)$ is unknown to the researcher, since it depends on both the unobserved principal stratum labels and the realized assignment. The sensitivity parameter $\underline{\rho}_g$ postulates a lower bound on the proportion of Only-Minority-Stop encounters, and Proposition~1 of the manuscript establishes the bijection between the researcher's posited count $\tilde{n}_{g,0,\mathrm{OMS}}^{\underline{\rho}_g}$ and this lower bound. Proposition~\ref{prop: supp aug DIM} establishes that if the posited count happens to equal the true realized count, the augmented estimator recovers the full-data estimator exactly.
\end{rmk}

\subsection{Supporting Lemmas}\label{sec: supp supporting lemmas}

The following five lemmas support the proofs of the finite-sample bias (Proposition~\ref{prop: supp bias}) and consistency (Proposition~\ref{prop: supp consistency}) of the stratum-specific estimator $\hat{\tau}_g$. Both proofs turn on the random variable $C_g(\bm{Z}_g)$, the number of stopped control units in stratum $g$: the finite-sample bias decomposes according to the realized value of $C_g(\bm{Z}_g)$, and consistency requires controlling its behavior as the stratum size grows.

Four of the lemmas characterize $C_g(\bm{Z}_g)$; the fifth characterizes the conditional treatment probabilities given $C_g(\bm{Z}_g)$. Lemma~\ref{lem: supp hypergeom} shows that $C_g(\bm{Z}_g)$ follows a hypergeometric distribution, a consequence of the uniform distribution on $\Omega_g$ under No-Bias-in-Encounters combined with the fact that the principal stratum labels $\{r_{g,i}\}$ are fixed slot attributes. The next three lemmas extract asymptotic consequences of this distributional result: Lemma~\ref{lem: supp Cg diverges} shows that $C_g(\bm{Z}_g) \xrightarrow{p} \infty$ under regularity conditions, Lemma~\ref{lem: supp prob zero denom} shows that the event $\mathcal{E}_g \coloneqq \{C_g(\bm{Z}_g) \geq 1\}$ --- on which $\hat{\tau}_g$ is well defined --- has probability approaching $1$, and Lemma~\ref{lem: supp EV inv Cg} shows that $\E[1/C_g(\bm{Z}_g) \given \mathcal{E}_g] \to 0$, which we use to bound the conditional variance of $\hat{\bar{y}}_g(0)$ in the consistency proof. Lemma~\ref{lem: supp cond trt prob} characterizes the conditional probabilities of treatment given the realized value of $C_g(\bm{Z}_g)$, which are used in the bias proof to compute the expectation of $\hat{\bar{y}}_g(1)$ and $\hat{\bar{y}}_g(0)$ stratum-by-stratum.

\begin{lem}[Hypergeometric distribution of the number of stopped controls]\label{lem: supp hypergeom}
Fix a stratum $g$ with $n_g$ units, of which $n_{g,\mathrm{AS}}$ are Always-Stop and $n_{g,\mathrm{OMS}}$ are Only-Minority-Stop. Under Assumptions~1 and~4, together with the No-Bias-in-Encounters condition in equation~(2) of the manuscript, define
\begin{align*}
C_g(\bm{Z}_g) \coloneqq \sum_{i:\, r_{g,i} = \mathrm{AS}} (1 - Z_{g,i}),
\end{align*}
the number of Always-Stop units assigned to control. Then
\begin{align}\label{eq: supp hypergeom}
C_g(\bm{Z}_g) \sim \mathrm{Hypergeometric}(n_g,\; n_{g,\mathrm{AS}},\; n_{g,0}),
\end{align}
with support
\begin{align}\label{eq: supp support Cg}
\{\max(0,\; n_{g,\mathrm{AS}} - n_{g,1}),\; \ldots,\; \min(n_{g,\mathrm{AS}},\; n_{g,0})\}.
\end{align}
\end{lem}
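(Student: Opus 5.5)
Under No-Bias-in-Encounters, Lemma~\ref{lem: supp gamma one} and its proof (equation~\eqref{eq: supp uniform on Omega_g}) tell us that $\bm{Z}_g$ is uniform on $\Omega_g$. Strictly, that lemma goes from $\Gamma=1$ to equal $\varphi_{g,i}$. I will take No-Bias-in-Encounters in the sense used throughout the manuscript: assignment is uniform over $\Omega_g$, which is the $\Gamma=1$ case of the model. Equivalently, the $n_{g,1}$ treated slots form a simple random sample of size $n_{g,1}$ from the $n_g$ slots, and so the $n_{g,0}$ control slots are a simple random sample of size $n_{g,0}$. By Lemma~\ref{lem: supp PO reduction} and Assumption~4, each slot carries a fixed label $r_{g,i}\in\{\mathrm{AS},\mathrm{OMS}\}$ that does not vary across $\Omega_g$. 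Hence exactly $n_{g,\mathrm{AS}}$ slots are ``successes'' and $n_{g,\mathrm{OMS}}$ are ``failures.'' Assumption~1 is what makes these labels slot attributes rather than functions of the whole assignment vector.

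Next I count directly. Fix $c$. An assignment $\bm{z}_g\in\Omega_g$ with $C_g(\bm{z}_g)=c$ is obtained by doing two things. First, choose which $c$ of the $n_{g,\mathrm{AS}}$ Always-Stop slots are controls. Second, choose which $n_{g,0}-c$ of the $n_{g,\mathrm{OMS}}=n_g-n_{g,\mathrm{AS}}$ Only-Minority-Stop slots are controls. All remaining slots are treated, and the treated count is automatically $n_{g,1}$. This correspondence is a bijection, so the number of such assignments is $\binom{n_{g,\mathrm{AS}}}{c}\binom{n_g-n_{g,\mathrm{AS}}}{n_{g,0}-c}$. Dividing by $\abs{\Omega_g}=\binom{n_g}{n_{g,1}}=\binom{n_g}{n_{g,0}}$ gives
\begin{align*}
\Pr\{C_g(\bm{Z}_g)=c\}=\binom{n_{g,\mathrm{AS}}}{c}\binom{n_g-n_{g,\mathrm{AS}}}{n_{g,0}-c}\Big/\binom{n_g}{n_{g,0}}.
\end{align*}
This is the Hypergeometric$(n_g,n_{g,\mathrm{AS}},n_{g,0})$ mass function: population size, number of successes, number of draws.

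For the support, the probability is positive exactly when both binomial coefficients are nonzero. That requires $0\le c\le n_{g,\mathrm{AS}}$ and $0\le n_{g,0}-c\le n_{g,\mathrm{OMS}}$. The second condition gives $c\le n_{g,0}$ and $c\ge n_{g,0}-n_{g,\mathrm{OMS}}=n_{g,\mathrm{AS}}-n_{g,1}$, using $n_{g,0}+n_{g,1}=n_{g,\mathrm{AS}}+n_{g,\mathrm{OMS}}=n_g$. Intersecting the two conditions yields \eqref{eq: supp support Cg}.

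The only delicate step is the first: justifying that conditioning on the labels $\{r_{g,i}\}$ leaves $\bm{Z}_g$ uniform on $\Omega_g$. This holds because the labels are conditioned on as fixed slot attributes, as in Section~3.2, and are not realized jointly with $\bm{Z}_g$. The remaining steps are routine combinatorics.
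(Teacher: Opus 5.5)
Your proposal is correct and follows the paper's proof essentially step for step. Uniformity on $\Omega_g$ makes the control set a simple random sample of size $n_{g,0}$ from slots with fixed labels, and the count $\binom{n_{g,\mathrm{AS}}}{c}\binom{n_{g,\mathrm{OMS}}}{n_{g,0}-c}$ divided by $\binom{n_g}{n_{g,0}}$ gives the hypergeometric mass function; reading the support off the nonvanishing binomial coefficients is a slightly cleaner version of the paper's informal bounds.

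The paper also asserts, without proof, that No-Bias-in-Encounters yields uniformity on $\Omega_g$. You can close that step rather than redefine the hypothesis using the independent-Bernoulli model of Section~3.3, with odds $\theta_{g,i}=\pi_{g,i}/(1-\pi_{g,i})$: the difference $\varphi_{g,i}-\varphi_{g,j}$ equals $(\theta_{g,i}-\theta_{g,j})$ times a positive factor whenever $1\le n_{g,1}\le n_g-1$, so equal conditional marginals force equal odds and hence uniformity. (Equal marginals without the Bernoulli model would not suffice.)
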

\begin{proof}
Under No-Bias-in-Encounters, the distribution of $\bm{Z}_g$ conditional on $\bm{Z}_g \in \Omega_g$ is uniform over $\Omega_g$. Since $\Omega_g$ consists of all binary vectors of length $n_g$ with exactly $n_{g,1}$ ones, the $n_{g,0} = n_g - n_{g,1}$ control assignments are a simple random sample without replacement from the $n_g$ encounters. Of these $n_g$ encounters, $n_{g,\mathrm{AS}}$ have $r_{g,i} = \mathrm{AS}$. The number of Always-Stop encounters assigned to the control condition --- i.e., $C_g(\bm{Z}_g)$ --- therefore follows a hypergeometric distribution with population size $n_g$, number of ``successes'' $n_{g,\mathrm{AS}}$, and draw size $n_{g,0}$.

The support of $C_g(\bm{Z}_g)$ is bounded above by both $n_{g,\mathrm{AS}}$ (total Always-Stop encounters) and $n_{g,0}$ (total control units), and bounded below by the number of Always-Stop encounters that must remain after the $n_{g,1}$ minority-civilian slots are filled, which is $\max(0, n_{g,\mathrm{AS}} - n_{g,1})$.

To verify the probability mass function, fix $c$ in the support of $C_g(\bm{Z}_g)$. The event $\{C_g(\bm{Z}_g) = c\}$ requires choosing $c$ control encounters from the $n_{g,\mathrm{AS}}$ Always-Stop encounters and $n_{g,0} - c$ control encounters from the $n_{g,\mathrm{OMS}}$ Only-Minority-Stop encounters. The number of such assignments is $\binom{n_{g,\mathrm{AS}}}{c}\binom{n_{g,\mathrm{OMS}}}{n_{g,0} - c}$. Since $\abs{\Omega_g} = \binom{n_g}{n_{g,0}}$, the uniform distribution yields
\begin{align*}
\Pr(C_g(\bm{Z}_g) = c) = \dfrac{\binom{n_{g,\mathrm{AS}}}{c}\binom{n_{g,\mathrm{OMS}}}{n_{g,0} - c}}{\binom{n_g}{n_{g,0}}},
\end{align*}
which is the hypergeometric probability mass function with the stated parameters.
\end{proof}

\begin{lem}[Divergence in probability of $C_g(\bm{Z}_g)$]\label{lem: supp Cg diverges}
Fix a stratum $g$ and suppose the conditions of Lemma~\ref{lem: supp hypergeom} hold. Under regularity conditions \textup{(R2)} and \textup{(R3)} of Proposition~\ref{prop: supp consistency}, $C_g(\bm{Z}_g) \xrightarrow{p} \infty$ as $n_g \to \infty$.
\end{lem}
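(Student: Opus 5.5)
The plan is to work directly from the exact distribution in Lemma~\ref{lem: supp hypergeom} and apply a second-moment (Chebyshev) argument. To show $C_g(\bm{Z}_g) \xrightarrow{p} \infty$, it suffices to show that for every fixed $M > 0$, $\Pr(C_g(\bm{Z}_g) \leq M) \to 0$ as $n_g \to \infty$.

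The conditions (R2) and (R3) of Proposition~\ref{prop: supp consistency} are not yet stated at this point. I will therefore assume they play the following role: the Always-Stop share $n_{g,\mathrm{AS}}/n_g$ and the control share $n_{g,0}/n_g$ stay bounded away from zero as $n_g \to \infty$. Concretely, there exist constants $a, b > 0$ with $n_{g,\mathrm{AS}}/n_g \geq a$ and $n_{g,0}/n_g \geq b$ for all large $n_g$. If (R2)–(R3) instead state only that $n_{g,\mathrm{AS}} n_{g,0}/n_g \to \infty$, the argument below goes through unchanged, since that is all it uses.

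\textbf{Step 1 (mean).} From the hypergeometric law with population $n_g$, $n_{g,\mathrm{AS}}$ successes and $n_{g,0}$ draws,
\begin{align*}
\mu_g \coloneqq \E[C_g(\bm{Z}_g)] = \dfrac{n_{g,0}\, n_{g,\mathrm{AS}}}{n_g} \geq a\, b\, n_g \to \infty .
\end{align*}

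\textbf{Step 2 (variance).} The hypergeometric variance is
\begin{align*}
\Var[C_g(\bm{Z}_g)] = \mu_g \left(1 - \dfrac{n_{g,\mathrm{AS}}}{n_g}\right) \dfrac{n_g - n_{g,0}}{n_g - 1} \leq \mu_g .
\end{align*}
The inequality holds because both correction factors lie in $[0,1]$; the case $n_{g,0} = n_g$ is excluded since $g \in \mathcal{G}^{\ast}$.

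\textbf{Step 3 (Chebyshev).} Fix $M$ and take $n_g$ large enough that $\mu_g > M$. Then
\begin{align*}
\Pr\bigl(C_g(\bm{Z}_g) \leq M\bigr) \leq \Pr\bigl(\abs{C_g(\bm{Z}_g) - \mu_g} \geq \mu_g - M\bigr) \leq \dfrac{\mu_g}{(\mu_g - M)^2},
\end{align*}
and the right-hand side tends to $0$ because $\mu_g \to \infty$. This gives the claim.

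The main obstacle is not the probability calculation, which is routine, but pinning down exactly what (R2) and (R3) provide. In particular, the mean $n_{g,0} n_{g,\mathrm{AS}}/n_g$ must diverge; it would fail to do so if either the Always-Stop count or the white-civilian count stayed bounded. I would therefore first check that the stated regularity conditions imply $n_{g,0} n_{g,\mathrm{AS}}/n_g \to \infty$, and record that implication explicitly before running the Chebyshev step.
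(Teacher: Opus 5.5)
Your proposal is correct and follows the paper's proof essentially step for step. (R2)--(R3) give $n_{g,0}/n_g \to 1 - v_g > 0$ and $n_{g,\mathrm{AS}}/n_g \to \alpha_g^{\mathrm{AS}} > 0$, so the hypergeometric mean diverges, the variance is bounded by the mean, and Chebyshev drives $\Pr(C_g(\bm{Z}_g) \leq M)$ to zero; the paper routes this through the event $\{C_g(\bm{Z}_g) \leq \delta\,\E[C_g(\bm{Z}_g)]\}$, which is a cosmetic difference. One small slip: the finite population correction $(n_g - n_{g,0})/(n_g - 1)$ is at most $1$ because $n_{g,0} \geq 1$, not because $n_{g,0} < n_g$ (and $\Var[C_g(\bm{Z}_g)] \leq \E[C_g(\bm{Z}_g)]$ holds trivially when $n_{g,0} = 0$ anyway).
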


\begin{proof}
By Lemma~\ref{lem: supp hypergeom}, $C_g(\bm{Z}_g) \sim \mathrm{Hypergeometric}(n_g, n_{g,\mathrm{AS}}, n_{g,0})$, which has expectation $\E[C_g(\bm{Z}_g)] = n_{g,0}\,n_{g,\mathrm{AS}}/n_g$ and variance
\begin{align*}
\Var\left[C_g(\bm{Z}_g)\right] = \underbrace{n_{g,0}\,\dfrac{n_{g,\mathrm{AS}}}{n_g}}_{= \E[C_g(\bm{Z}_g)]}\,\underbrace{\left(1 - \dfrac{n_{g,\mathrm{AS}}}{n_g}\right)}_{\text{proportion not AS}}\,\underbrace{\dfrac{n_g - n_{g,0}}{n_g - 1}}_{\text{finite population correction}}.
\end{align*}
Since the proportion not AS and the finite population correction are each at most $1$,
\begin{align}\label{eq: supp hypergeom var bound}
\Var\left[C_g(\bm{Z}_g)\right] \leq \E\left[C_g(\bm{Z}_g)\right].
\end{align}
Under (R2) -- (R3), $\E[C_g(\bm{Z}_g)] = n_{g,0} \cdot n_{g,\mathrm{AS}} / n_g \to \infty$ as $n_g \to \infty$.

Fix any $\delta \in (0, 1)$ and consider the event
\begin{align}\label{eq: supp deviation event I}
C_g(\bm{Z}_g) \leq \delta\,\E[C_g(\bm{Z}_g)].
\end{align}
Since $\delta < 1$, this event implies $C_g(\bm{Z}_g) \leq \E[C_g(\bm{Z}_g)]$, and hence
\begin{align*}
\abs{C_g(\bm{Z}_g) - \E[C_g(\bm{Z}_g)]} = \E[C_g(\bm{Z}_g)] - C_g(\bm{Z}_g) \geq (1 - \delta)\E[C_g(\bm{Z}_g)].
\end{align*}
Monotonicity of probability gives
\begin{align*}
\Pr\left(C_g(\bm{Z}_g) \leq \delta\,\E[C_g(\bm{Z}_g)]\right) \leq \Pr\left(\abs{C_g(\bm{Z}_g) - \E[C_g(\bm{Z}_g)]} \geq (1 - \delta)\E[C_g(\bm{Z}_g)]\right).
\end{align*}
Chebyshev's inequality with $\varrho \coloneqq (1 - \delta)\E[C_g(\bm{Z}_g)]$ --- which is eventually strictly positive by (R2)--(R3) --- and the variance bound \eqref{eq: supp hypergeom var bound} then yield
\begin{align*}
\Pr\left(C_g(\bm{Z}_g) \leq \delta\,\E[C_g(\bm{Z}_g)]\right) \leq \dfrac{\Var[C_g(\bm{Z}_g)]}{(1 - \delta)^2\,\E[C_g(\bm{Z}_g)]^2} \leq \dfrac{1}{(1 - \delta)^2\,\E[C_g(\bm{Z}_g)]} \to 0.
\end{align*}

Because $\E[C_g(\bm{Z}_g)] \to \infty$, for any fixed constant $B > 0$ we eventually have $B < \delta\,\E[C_g(\bm{Z}_g)]$, so the event $\{C_g(\bm{Z}_g) \leq B\}$ is contained in $\{C_g(\bm{Z}_g) \leq \delta\,\E[C_g(\bm{Z}_g)]\}$. Monotonicity of probability gives $\Pr(C_g(\bm{Z}_g) \leq B) \to 0$ for every fixed $B > 0$. By the definition of convergence in probability to $\infty$, $C_g(\bm{Z}_g) \xrightarrow{p} \infty$.
\end{proof}

\begin{lem}[Vanishing probability of zero stopped controls]\label{lem: supp prob zero denom}
Fix a stratum $g$ and suppose the conditions of Lemma~\ref{lem: supp hypergeom} hold. Under regularity conditions \textup{(R2)} and \textup{(R3)} of Proposition~\ref{prop: supp consistency}, $\Pr(C_g(\bm{Z}_g) = 0) \to 0$ as $n_g \to \infty$.
\end{lem}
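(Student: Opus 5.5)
The plan is to derive the result directly from Lemma~\ref{lem: supp Cg diverges}, with a self-contained Chebyshev bound as a cross-check. Convergence in probability to $\infty$ means $\Pr(C_g(\bm{Z}_g) \leq B) \to 0$ for every fixed $B > 0$. Take $B = 1/2$. Because $C_g(\bm{Z}_g)$ is integer-valued and nonnegative, $\{C_g(\bm{Z}_g) = 0\} = \{C_g(\bm{Z}_g) \leq 1/2\}$. Hence $\Pr(C_g(\bm{Z}_g) = 0) \leq \Pr(C_g(\bm{Z}_g) \leq 1/2) \to 0$ as $n_g \to \infty$.

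For transparency, I would also give the explicit rate. By Lemma~\ref{lem: supp hypergeom}, $C_g(\bm{Z}_g)$ is hypergeometric, with $\E[C_g(\bm{Z}_g)] = n_{g,0}\,n_{g,\mathrm{AS}}/n_g$ and $\Var[C_g(\bm{Z}_g)] \leq \E[C_g(\bm{Z}_g)]$ by \eqref{eq: supp hypergeom var bound}. Whenever $\E[C_g(\bm{Z}_g)] > 0$, the event $\{C_g(\bm{Z}_g) = 0\}$ implies $\abs{C_g(\bm{Z}_g) - \E[C_g(\bm{Z}_g)]} \geq \E[C_g(\bm{Z}_g)]$. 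Chebyshev's inequality then gives
\begin{align*}
\Pr\left(C_g(\bm{Z}_g) = 0\right) \leq \dfrac{\Var[C_g(\bm{Z}_g)]}{\E[C_g(\bm{Z}_g)]^2} \leq \dfrac{1}{\E[C_g(\bm{Z}_g)]}.
\end{align*}

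The only step with any substance is checking that $\E[C_g(\bm{Z}_g)] = n_{g,0}\,n_{g,\mathrm{AS}}/n_g \to \infty$ under (R2)--(R3). The proof of Lemma~\ref{lem: supp Cg diverges} already asserts this. I read (R2)--(R3) as requiring the control share $n_{g,0}/n_g$ and the Always-Stop share $n_{g,\mathrm{AS}}/n_g$ to stay bounded away from zero. Under that reading the expectation grows linearly in $n_g$, and the bound above is $O(1/n_g)$. No further obstacle is expected.
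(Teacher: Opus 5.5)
Your proof is correct, but it takes a different route from the paper's. You get the result as a corollary of Lemma~\ref{lem: supp Cg diverges}, using that $C_g(\bm{Z}_g)$ is a nonnegative integer, so $\{C_g(\bm{Z}_g)=0\}=\{C_g(\bm{Z}_g)\le 1/2\}$. Your cross-check is the Chebyshev argument behind that lemma, giving $\Pr(C_g(\bm{Z}_g)=0)\le 1/\E[C_g(\bm{Z}_g)]$. There is no circularity, because the proof of Lemma~\ref{lem: supp Cg diverges} does not use the present lemma.

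The paper instead works directly with the hypergeometric mass at zero. It writes $\binom{n_{g,\mathrm{OMS}}}{n_{g,0}}/\binom{n_g}{n_{g,0}}$ as $\prod_{t=0}^{n_{g,0}-1}(n_{g,\mathrm{OMS}}-t)/(n_g-t)$ and bounds each factor by $n_{g,\mathrm{OMS}}/n_g$. This gives $\Pr(C_g(\bm{Z}_g)=0)\le(1-n_{g,\mathrm{AS}}/n_g)^{n_{g,0}}$, which vanishes because the base tends to $1-\alpha_g^{\mathrm{AS}}<1$ while $n_{g,0}\to\infty$.

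Your route is shorter given the preceding lemma and reuses its variance bound. The paper's route is self-contained, needing neither the hypergeometric variance nor Lemma~\ref{lem: supp Cg diverges}, and it is sharper. Since $(1-x)^m\le e^{-mx}$, its bound is at most $\exp(-\E[C_g(\bm{Z}_g)])$, which is exponentially small in $n_g$, whereas yours is $O(1/n_g)$.

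Both arguments rest on the same condition, $\E[C_g(\bm{Z}_g)]=n_{g,0}\,n_{g,\mathrm{AS}}/n_g\to\infty$. Your reading of (R2)--(R3) supplies it correctly, since both shares converge to positive limits. For a convergence statement alone, the difference in rate is immaterial.
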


\begin{proof}
By the hypergeometric model in Lemma~\ref{lem: supp hypergeom},
\begin{align}\label{eq: supp hypergeom c0}
\Pr(C_g(\bm{Z}_g) = 0) = \dfrac{\binom{n_{g,\mathrm{OMS}}}{n_{g,0}}}{\binom{n_g}{n_{g,0}}}.
\end{align}
Writing the ratio of binomial coefficients in sequential form yields
\begin{align*}
\Pr(C_g(\bm{Z}_g) = 0) = \prod_{t=0}^{n_{g,0} - 1} \dfrac{n_{g,\mathrm{OMS}} - t}{n_g - t}.
\end{align*}
Each factor satisfies $(n_{g,\mathrm{OMS}} - t)/(n_g - t) \leq n_{g,\mathrm{OMS}}/n_g = 1 - n_{g,\mathrm{AS}}/n_g$, so the product is bounded above by $(1 - n_{g,\mathrm{AS}}/n_g)^{n_{g,0}}$. Under (R2) -- (R3), $n_{g,\mathrm{AS}}/n_g \to \alpha_g^{\mathrm{AS}} \in (0, 1]$ and $n_{g,0} \to \infty$ as $n_g \to \infty$, so $(1 - n_{g,\mathrm{AS}}/n_g)^{n_{g,0}} \to 0$.
\end{proof}

\begin{lem}[Vanishing expectation of $1/C_g(\bm{Z}_g)$]\label{lem: supp EV inv Cg}
Fix a stratum $g$ and suppose the conditions of Lemma~\ref{lem: supp Cg diverges} hold. Then $\E[1/C_g(\bm{Z}_g) \given \mathcal{E}_g] \to 0$ as $n_g \to \infty$, where $\mathcal{E}_g \coloneqq \{C_g(\bm{Z}_g) \geq 1\}$.
\end{lem}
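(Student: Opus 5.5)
The approach is to split the conditional expectation on the event $\{C_g(\bm{Z}_g) \leq B\}$ versus its complement, and then combine Lemma~\ref{lem: supp Cg diverges} with Lemma~\ref{lem: supp prob zero denom}. Fix an arbitrary $\varepsilon > 0$ and choose a constant $B \geq 1/\varepsilon$. On $\mathcal{E}_g$ the random variable $1/C_g(\bm{Z}_g)$ takes values in $(0,1]$. Hence
\begin{align*}
\E\left[\frac{1}{C_g(\bm{Z}_g)} \,\middle|\, \mathcal{E}_g\right]
\leq \Pr\left(1 \leq C_g(\bm{Z}_g) \leq B \,\middle|\, \mathcal{E}_g\right) + \frac{1}{B}\,\Pr\left(C_g(\bm{Z}_g) > B \,\middle|\, \mathcal{E}_g\right).
\end{align*}
The first term bounds $1/C_g$ by $1$ on the event where $C_g$ is small. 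The second term bounds $1/C_g$ by $1/B$ on the event where $C_g$ is large. The second term is therefore at most $1/B \leq \varepsilon$, uniformly in $n_g$.

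For the first term, I would write
\begin{align*}
\Pr\left(1 \leq C_g(\bm{Z}_g) \leq B \,\middle|\, \mathcal{E}_g\right) \leq \frac{\Pr(C_g(\bm{Z}_g) \leq B)}{\Pr(\mathcal{E}_g)}.
\end{align*}
Lemma~\ref{lem: supp Cg diverges} gives $\Pr(C_g(\bm{Z}_g) \leq B) \to 0$ for this fixed $B$, since convergence in probability to $\infty$ means exactly this. Lemma~\ref{lem: supp prob zero denom} gives $\Pr(\mathcal{E}_g) = 1 - \Pr(C_g(\bm{Z}_g) = 0) \to 1$. In particular $\Pr(\mathcal{E}_g)$ is bounded away from zero for large $n_g$, so the ratio tends to $0$.

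Combining the two terms gives $\limsup_{n_g\to\infty} \E[1/C_g(\bm{Z}_g) \mid \mathcal{E}_g] \leq \varepsilon$. Since $\varepsilon$ was arbitrary, the limit is $0$. Nonnegativity gives the matching lower bound.

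The only step needing care is the conditioning. One must make sure $\Pr(\mathcal{E}_g) > 0$ eventually, so that the conditional expectation is well defined for all large $n_g$, and that the denominator does not degrade the bound. Both points are handled by Lemma~\ref{lem: supp prob zero denom}, which is stated under the same regularity conditions (R2)--(R3) assumed here. Boundedness of $1/C_g$ by $1$ on $\mathcal{E}_g$ is what allows truncation alone, with no uniform-integrability argument.
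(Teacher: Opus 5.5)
Your proposal is correct and follows essentially the same route as the paper's proof. The paper truncates on $\{1/C_g(\bm{Z}_g) > \varepsilon\} = \{C_g(\bm{Z}_g) < 1/\varepsilon\}$ rather than on $\{C_g(\bm{Z}_g) \leq B\}$ with $B \geq 1/\varepsilon$. Otherwise it uses the same bound $1/C_g(\bm{Z}_g) \leq 1$ on $\mathcal{E}_g$, the same ratio bound by $\Pr(\cdot)/\Pr(\mathcal{E}_g)$, and the same two lemmas before taking a $\limsup$.
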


\begin{proof}
By Lemma~\ref{lem: supp Cg diverges}, $C_g(\bm{Z}_g) \xrightarrow{p} \infty$, so for any fixed $\varepsilon > 0$,
\begin{align}\label{eq: supp Cg conv prob III}
\Pr\left(\dfrac{1}{C_g(\bm{Z}_g)} > \varepsilon\right) = \Pr\left(C_g(\bm{Z}_g) < \dfrac{1}{\varepsilon}\right) \to 0.
\end{align}
Decompose $1/C_g(\bm{Z}_g)$ using indicator functions for the partition $\{1/C_g(\bm{Z}_g) \leq \varepsilon\}$ and $\{1/C_g(\bm{Z}_g) > \varepsilon\}$. Linearity of expectation conditional on $\mathcal{E}_g$ gives
\begin{equation}\label{eq: supp EV decomp}
\begin{split}
\E\left[\dfrac{1}{C_g(\bm{Z}_g)} \given \mathcal{E}_g\right] &= \E\left[\dfrac{1}{C_g(\bm{Z}_g)}\,\mathbbm{1}\left\{\dfrac{1}{C_g(\bm{Z}_g)} \leq \varepsilon\right\} \given \mathcal{E}_g\right] \\
&\quad + \E\left[\dfrac{1}{C_g(\bm{Z}_g)}\,\mathbbm{1}\left\{\dfrac{1}{C_g(\bm{Z}_g)} > \varepsilon\right\} \given \mathcal{E}_g\right].
\end{split}
\end{equation}
On the event $\{1/C_g(\bm{Z}_g) \leq \varepsilon\}$, the first term on the right-hand side of \eqref{eq: supp EV decomp} is at most $\varepsilon$. For the second term, $0 \leq 1/C_g(\bm{Z}_g) \leq 1$ on $\mathcal{E}_g$, so
\begin{align*}
\E\left[\dfrac{1}{C_g(\bm{Z}_g)}\,\mathbbm{1}\left\{\dfrac{1}{C_g(\bm{Z}_g)} > \varepsilon\right\} \given \mathcal{E}_g\right] \leq \Pr\left(\dfrac{1}{C_g(\bm{Z}_g)} > \varepsilon \given \mathcal{E}_g\right).
\end{align*}
Combining,
\begin{align}\label{eq: supp EV decomp bound}
\E\left[\dfrac{1}{C_g(\bm{Z}_g)} \given \mathcal{E}_g\right] \leq \varepsilon + \Pr\left(\dfrac{1}{C_g(\bm{Z}_g)} > \varepsilon \given \mathcal{E}_g\right).
\end{align}

By the definition of conditional probability and monotonicity of probability,
\begin{align*}
\Pr\left(\dfrac{1}{C_g(\bm{Z}_g)} > \varepsilon \given \mathcal{E}_g\right) = \dfrac{\Pr(1/C_g(\bm{Z}_g) > \varepsilon,\; \mathcal{E}_g)}{\Pr(\mathcal{E}_g)} \leq \dfrac{\Pr(1/C_g(\bm{Z}_g) > \varepsilon)}{\Pr(\mathcal{E}_g)}.
\end{align*}
By \eqref{eq: supp Cg conv prob III} and Lemma~\ref{lem: supp prob zero denom} --- which gives $\Pr(\mathcal{E}_g) \to 1$ --- the right-hand side converges to $0$. Taking $\limsup$ on both sides of \eqref{eq: supp EV decomp bound} yields
\begin{align*}
\limsup_{n_g \to \infty} \E\left[\dfrac{1}{C_g(\bm{Z}_g)} \given \mathcal{E}_g\right] \leq \varepsilon \quad \text{for every } \varepsilon > 0,
\end{align*}
and since $\E[1/C_g(\bm{Z}_g) \given \mathcal{E}_g] \geq 0$, this implies $\E[1/C_g(\bm{Z}_g) \given \mathcal{E}_g] \to 0$.
\end{proof}

\begin{lem}[Conditional probabilities of treatment given number of stopped control units]\label{lem: supp cond trt prob}
Fix a stratum $g$ and suppose Assumptions~1 and~4, together with the No-Bias-in-Encounters condition in equation~(2) of the manuscript. Let $C_g(\bm{Z}_g) = c$ for some $c$ in the support of $C_g(\bm{Z}_g)$ given in \eqref{eq: supp support Cg}. Then, for any unit $i$ with $r_{g,i} = \mathrm{AS}$,
\begin{align}\label{eq: supp cond prob AS}
\Pr\left(Z_{g,i} = 1 \given \bm{Z}_g \in \Omega_g,\; C_g(\bm{Z}_g) = c,\; r_{g,i} = \mathrm{AS}\right) = \dfrac{n_{g,\mathrm{AS}} - c}{n_{g,\mathrm{AS}}},
\end{align}
and, for any unit $i$ with $r_{g,i} = \mathrm{OMS}$,
\begin{align}\label{eq: supp cond prob OMS}
\Pr\left(Z_{g,i} = 1 \given \bm{Z}_g \in \Omega_g,\; C_g(\bm{Z}_g) = c,\; r_{g,i} = \mathrm{OMS}\right) = \dfrac{n_{g,1} - (n_{g,\mathrm{AS}} - c)}{n_{g,\mathrm{OMS}}}.
\end{align}
\end{lem}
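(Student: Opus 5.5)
The plan is to exploit the uniform distribution on $\Omega_g$ under No-Bias-in-Encounters, established in the proof of Lemma~\ref{lem: supp gamma one} (equation~\eqref{eq: supp uniform on Omega_g}), together with the fact that the principal stratum labels $r_{g,i}$ are fixed slot attributes. Conditioning a uniform distribution on the event $\{C_g(\bm{Z}_g) = c\}$ yields the uniform distribution on the subset $\Omega_g(c) \coloneqq \{\bm{z}_g \in \Omega_g : \sum_{i: r_{g,i} = \mathrm{AS}} (1 - z_{g,i}) = c\}$. Every conditional treatment probability therefore reduces to a ratio of cardinalities, $\abs{\{\bm{z}_g \in \Omega_g(c) : z_{g,i} = 1\}} / \abs{\Omega_g(c)}$.

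Next I will describe $\Omega_g(c)$ as a product. Fixing $C_g = c$ means exactly $n_{g,\mathrm{AS}} - c$ Always-Stop slots are treated. The remaining $n_{g,1} - (n_{g,\mathrm{AS}} - c)$ treated labels must then fall among the $n_{g,\mathrm{OMS}}$ Only-Minority-Stop slots. Hence
\[
\abs{\Omega_g(c)} = \binom{n_{g,\mathrm{AS}}}{n_{g,\mathrm{AS}} - c}\binom{n_{g,\mathrm{OMS}}}{n_{g,1} - n_{g,\mathrm{AS}} + c},
\]
consistent with the hypergeometric mass function in Lemma~\ref{lem: supp hypergeom}. The point is that, conditional on $c$, the assignments within the AS block and within the OMS block are independent uniform simple random samples of fixed sizes.

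For an AS unit $i$, I count assignments in $\Omega_g(c)$ with $z_{g,i} = 1$. The other $n_{g,\mathrm{AS}} - c - 1$ treated AS slots are chosen from the remaining $n_{g,\mathrm{AS}} - 1$, and the OMS factor is unchanged. The ratio is
\[
\binom{n_{g,\mathrm{AS}} - 1}{n_{g,\mathrm{AS}} - c - 1}\Big/\binom{n_{g,\mathrm{AS}}}{n_{g,\mathrm{AS}} - c} = \frac{n_{g,\mathrm{AS}} - c}{n_{g,\mathrm{AS}}},
\]
exactly as in the proof of Lemma~\ref{lem: supp gamma one}. The OMS case is symmetric: the AS factor cancels, and the ratio $\binom{n_{g,\mathrm{OMS}} - 1}{k - 1}/\binom{n_{g,\mathrm{OMS}}}{k}$ with $k = n_{g,1} - n_{g,\mathrm{AS}} + c$ equals $k / n_{g,\mathrm{OMS}}$, which is \eqref{eq: supp cond prob OMS}.

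The only delicate step is handling edge cases of the support \eqref{eq: supp support Cg}. When $n_{g,\mathrm{AS}} - c = 0$ or $k = 0$, the binomial identity $\binom{m-1}{j-1}/\binom{m}{j} = j/m$ must be read with the convention $\binom{m-1}{-1} = 0$, which still gives the correct value $0$. When $n_{g,\mathrm{AS}} = 0$ or $n_{g,\mathrm{OMS}} = 0$, the corresponding statement is vacuous because no unit of that type exists. I will note these conventions explicitly and verify that $c$ in the support guarantees $0 \le n_{g,\mathrm{AS}} - c \le n_{g,\mathrm{AS}}$ and $0 \le k \le n_{g,\mathrm{OMS}}$, so that both conditioning events have positive probability.
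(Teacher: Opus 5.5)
Your proposal is correct and is essentially the paper's proof: you condition the uniform distribution on $\Omega_g$ (which the paper, like you, takes as given under No-Bias-in-Encounters) down to the uniform distribution on $\Omega_g(c)$, factor $\abs{\Omega_g(c)}$ into an Always-Stop binomial times an Only-Minority-Stop binomial, and read off each conditional probability as a ratio of binomial coefficients. The differences are cosmetic: you count treated rather than control Always-Stop slots, which gives the same number since $\binom{m}{j}=\binom{m}{m-j}$, and you spell out the boundary cases of the support that the paper leaves implicit.
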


\begin{proof}
Define the set of assignments consistent with $C_g(\bm{Z}_g) = c$ as
\begin{align*}
\Omega_g(c) \coloneqq \{\bm{z}_g \in \Omega_g: C_g(\bm{z}_g) = c\}.
\end{align*}
Under No-Bias-in-Encounters, the distribution of $\bm{Z}_g$ conditional on $\bm{Z}_g \in \Omega_g$ is uniform on $\Omega_g$. For any $\bm{z}_g \in \Omega_g(c) \subseteq \Omega_g$, the definition of conditional probability gives
\begin{align*}
\Pr(\bm{Z}_g = \bm{z}_g \given \bm{Z}_g \in \Omega_g,\; C_g(\bm{Z}_g) = c) = \dfrac{\Pr(\bm{Z}_g = \bm{z}_g,\; C_g(\bm{Z}_g) = c \given \bm{Z}_g \in \Omega_g)}{\Pr(C_g(\bm{Z}_g) = c \given \bm{Z}_g \in \Omega_g)}.
\end{align*}
Since $\bm{z}_g \in \Omega_g(c)$ implies $C_g(\bm{z}_g) = c$, the joint event $\{\bm{Z}_g = \bm{z}_g,\; C_g(\bm{Z}_g) = c\}$ reduces to $\{\bm{Z}_g = \bm{z}_g\}$, so the numerator equals $1/\abs{\Omega_g}$ by the uniform distribution on $\Omega_g$. The denominator equals $\abs{\Omega_g(c)}/\abs{\Omega_g}$, since $\{C_g(\bm{Z}_g) = c\}$ corresponds to the $\abs{\Omega_g(c)}$ elements of $\Omega_g$ with $C_g(\bm{z}_g) = c$. Taking the ratio,
\begin{align*}
\Pr(\bm{Z}_g = \bm{z}_g \given \bm{Z}_g \in \Omega_g,\; C_g(\bm{Z}_g) = c) = \dfrac{1/\abs{\Omega_g}}{\abs{\Omega_g(c)}/\abs{\Omega_g}} = \dfrac{1}{\abs{\Omega_g(c)}} \quad \text{for all } \bm{z}_g \in \Omega_g(c).
\end{align*}

Under Assumptions~1 and~4, the potentially stoppable units partition into $n_{g,\mathrm{AS}}$ Always-Stop and $n_{g,\mathrm{OMS}}$ Only-Minority-Stop units. Conditional on $C_g(\bm{Z}_g) = c$, exactly $c$ Always-Stop units are control and $n_{g,\mathrm{AS}} - c$ are treated, while $n_{g,0} - c$ Only-Minority-Stop units are control and $n_{g,1} - (n_{g,\mathrm{AS}} - c)$ are treated.

Fix a unit $i$ with $r_{g,i} = \mathrm{AS}$. If $Z_{g,i} = 1$ and $C_g(\bm{Z}_g) = c$, then exactly $c$ of the remaining $n_{g,\mathrm{AS}} - 1$ Always-Stop units must be assigned to control, and $n_{g,0} - c$ control assignments must come from the $n_{g,\mathrm{OMS}}$ Only-Minority-Stop units. The number of assignments in $\Omega_g(c)$ with $z_{g,i} = 1$ is
\begin{align*}
\abs{\{\bm{z}_g \in \Omega_g(c): z_{g,i} = 1\}} = \binom{n_{g,\mathrm{AS}} - 1}{c}\binom{n_{g,\mathrm{OMS}}}{n_{g,0} - c}.
\end{align*}
The total number of assignments in $\Omega_g(c)$ is
\begin{align*}
\abs{\Omega_g(c)} = \binom{n_{g,\mathrm{AS}}}{c}\binom{n_{g,\mathrm{OMS}}}{n_{g,0} - c}.
\end{align*}
The uniform distribution on $\Omega_g(c)$ then implies
\begin{align*}
\Pr(Z_{g,i} = 1 \given \bm{Z}_g \in \Omega_g,\; C_g(\bm{Z}_g) = c,\; r_{g,i} = \mathrm{AS}) = \dfrac{\binom{n_{g,\mathrm{AS}} - 1}{c}}{\binom{n_{g,\mathrm{AS}}}{c}} = \dfrac{n_{g,\mathrm{AS}} - c}{n_{g,\mathrm{AS}}},
\end{align*}
which establishes \eqref{eq: supp cond prob AS}.

Turning to \eqref{eq: supp cond prob OMS}, fix a unit $i$ with $r_{g,i} = \mathrm{OMS}$. Conditional on $C_g(\bm{Z}_g) = c$, exactly $n_{g,\mathrm{AS}} - c$ Always-Stop units are assigned to treatment. Since $\bm{Z}_g \in \Omega_g$ implies $\sum_{j = 1}^{n_g} Z_{g,j} = n_{g,1}$, the number of treated Only-Minority-Stop units is $n_{g,1} - (n_{g,\mathrm{AS}} - c)$. The uniform distribution on $\Omega_g(c)$ implies that these treatment assignments are selected uniformly at random from the $n_{g,\mathrm{OMS}}$ Only-Minority-Stop units. Consequently,
\begin{align*}
\Pr(Z_{g,i} = 1 \given \bm{Z}_g \in \Omega_g,\; C_g(\bm{Z}_g) = c,\; r_{g,i} = \mathrm{OMS}) = \dfrac{\binom{n_{g,\mathrm{OMS}} - 1}{n_{g,1} - (n_{g,\mathrm{AS}} - c) - 1}}{\binom{n_{g,\mathrm{OMS}}}{n_{g,1} - (n_{g,\mathrm{AS}} - c)}} = \dfrac{n_{g,1} - (n_{g,\mathrm{AS}} - c)}{n_{g,\mathrm{OMS}}},
\end{align*}
which establishes \eqref{eq: supp cond prob OMS}, thereby completing the proof.
\end{proof}

\subsection{Finite-Sample Bias of the Stratum-Specific Estimator}\label{sec: supp bias}

\begin{prop}[Finite-sample bias of $\hat{\tau}_g$]\label{prop: supp bias}
Fix a stratum $g \in \mathcal{G}^{\ast}$ and let $\mathcal{E}_g \coloneqq \{C_g(\bm{Z}_g) \geq 1\}$ denote the event that stratum $g$ contains at least one stopped control unit, so that $\hat{\tau}_g$ in equation~(8) of the manuscript is well defined. Under Assumptions 1--4, together with the No-Bias-in-Encounters condition in equation~(2) of the manuscript, the bias of $\hat{\tau}_g$ for $\tau_g$ among potentially stoppable units, conditional on $\mathcal{E}_g$, is
\footnotesize
\begin{align}\label{eq: supp bias tau hat}
\E\left[\hat{\tau}_g \given \mathcal{E}_g\right] - \tau_g = \sum_{c \geq 1} \left(\dfrac{n_{g,\mathrm{AS}} - c}{n_{g,1}} - (1 - \rho_g)\right)\left[\bar{y}_g^{\mathrm{AS}}(1) - \bar{y}_g^{\mathrm{OMS}}(1)\right] \Pr\left(C_g(\bm{Z}_g) = c \given \mathcal{E}_g\right),
\end{align}
\normalsize
where $\Pr(C_g(\bm{Z}_g) = c \given \mathcal{E}_g)$ is the hypergeometric distribution given in \eqref{eq: supp hypergeom} of Lemma~\ref{lem: supp hypergeom}, restricted to $c \geq 1$.
\end{prop}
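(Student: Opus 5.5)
The plan is to condition on the realized number of stopped controls $C_g(\bm{Z}_g)=c$. On each event $\{C_g(\bm{Z}_g)=c\}$ with $c\geq 1$, both denominators of $\hat{\tau}_g$ become fixed. I then compute the conditional expectations of the two plug-in means in \eqref{eq: supp est bar y1 g}--\eqref{eq: supp est bar y0 g} using Lemma~\ref{lem: supp cond trt prob}, and average over $c$ with the law $\Pr(C_g(\bm{Z}_g)=c\given\mathcal{E}_g)$ from Lemma~\ref{lem: supp hypergeom}. Since $\mathcal{E}_g=\bigcup_{c\geq1}\{C_g(\bm{Z}_g)=c\}$, the tower property gives $\E[\hat{\tau}_g\given\mathcal{E}_g]=\sum_{c\geq1}\E[\hat{\tau}_g\given C_g(\bm{Z}_g)=c]\Pr(C_g(\bm{Z}_g)=c\given\mathcal{E}_g)$. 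Because $\tau_g$ is a constant, it suffices to show that for each $c\geq 1$,
\begin{align*}
\E[\hat{\tau}_g\given C_g(\bm{Z}_g)=c]-\tau_g=\left(\dfrac{n_{g,\mathrm{AS}}-c}{n_{g,1}}-(1-\rho_g)\right)\left[\bar{y}_g^{\mathrm{AS}}(1)-\bar{y}_g^{\mathrm{OMS}}(1)\right].
\end{align*}

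\emph{Treated mean.} As in the proof of Proposition~\ref{prop: supp aug DIM}, every treated unit is stopped under Assumptions~1 and~4. Hence the denominator of $\hat{\bar{y}}_g(1)$ is the constant $n_{g,1}$ and
\begin{align*}
\hat{\bar{y}}_g(1)=n_{g,1}^{-1}\sum_i y_{g,i}(1)Z_{g,i}.
\end{align*}
By linearity and \eqref{eq: supp cond prob AS}--\eqref{eq: supp cond prob OMS},
\begin{align*}
\E[\hat{\bar{y}}_g(1)\given c]=n_{g,1}^{-1}\left[(n_{g,\mathrm{AS}}-c)\,\bar{y}_g^{\mathrm{AS}}(1)+(n_{g,1}-n_{g,\mathrm{AS}}+c)\,\bar{y}_g^{\mathrm{OMS}}(1)\right].
\end{align*}
This can be rewritten as
\begin{align*}
\E[\hat{\bar{y}}_g(1)\given c]=\bar{y}_g^{\mathrm{OMS}}(1)+\dfrac{n_{g,\mathrm{AS}}-c}{n_{g,1}}\left[\bar{y}_g^{\mathrm{AS}}(1)-\bar{y}_g^{\mathrm{OMS}}(1)\right].
\end{align*}

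\emph{Control mean.} The stopped controls are exactly the Always-Stop controls, of which there are $c$. Therefore
\begin{align*}
\hat{\bar{y}}_g(0)=c^{-1}\sum_{i:r_{g,i}=\mathrm{AS}}y_{g,i}(0)(1-Z_{g,i}),
\end{align*}
with a fixed denominator. By \eqref{eq: supp cond prob AS}, each Always-Stop unit is a control with conditional probability $c/n_{g,\mathrm{AS}}$, so $\E[\hat{\bar{y}}_g(0)\given c]=\bar{y}_g^{\mathrm{AS}}(0)$. The uniformity on $\Omega_g(c)$ shown in that lemma's proof is what justifies this step.

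Combining the two pieces with the estimator in equation~(8), $\hat{\tau}_g=\hat{\bar{y}}_g(1)-(1-\rho_g)\hat{\bar{y}}_g(0)$, I compare against Proposition~\ref{prop: supp decomp}. That proposition gives $\tau_g=\bar{y}_g(1)-(1-\rho_g)\bar{y}_g^{\mathrm{AS}}(0)$. Writing $\bar{y}_g(1)=(1-\rho_g)\bar{y}_g^{\mathrm{AS}}(1)+\rho_g\bar{y}_g^{\mathrm{OMS}}(1)$, this becomes
\begin{align*}
\tau_g=\bar{y}_g^{\mathrm{OMS}}(1)+(1-\rho_g)\left[\bar{y}_g^{\mathrm{AS}}(1)-\bar{y}_g^{\mathrm{OMS}}(1)\right]-(1-\rho_g)\bar{y}_g^{\mathrm{AS}}(0).
\end{align*}
The $\bar{y}_g^{\mathrm{OMS}}(1)$ terms cancel, and the $(1-\rho_g)\bar{y}_g^{\mathrm{AS}}(0)$ terms cancel exactly because the estimator carries the same factor $(1-\rho_g)$. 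What remains is the displayed per-$c$ identity, and summing it against $\Pr(C_g(\bm{Z}_g)=c\given\mathcal{E}_g)$ yields \eqref{eq: supp bias tau hat}.

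The main obstacle is bookkeeping rather than depth. Both plug-in means are ratios with random denominators, so the argument needs the denominators to be fixed on each event $\{C_g(\bm{Z}_g)=c\}$ in order to pass expectations through linearly. It also needs the $(1-\rho_g)$ factor on $\bar{y}_g^{\mathrm{AS}}(0)$ to match between estimator and estimand. I would also note the degenerate case $n_{g,\mathrm{OMS}}=0$. There $\bar{y}_g^{\mathrm{OMS}}(1)$ is undefined, but the support forces $c=n_{g,0}$, so $(n_{g,\mathrm{AS}}-c)/n_{g,1}=1=1-\rho_g$ and the bias is zero under any convention for $\bar{y}_g^{\mathrm{OMS}}(1)$.
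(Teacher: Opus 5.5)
Your proposal is correct and follows the paper's proof essentially step for step. You condition on $C_g(\bm{Z}_g)=c$, use Lemma~\ref{lem: supp cond trt prob} to obtain $\E[\hat{\bar{y}}_g(0)\given C_g(\bm{Z}_g)=c]=\bar{y}_g^{\mathrm{AS}}(0)$ and the AS/OMS mixture for the treated mean, compare with Proposition~\ref{prop: supp decomp}, and average over $c\geq 1$. Beyond the paper, you write out the cancellation via $\bar{y}_g(1)=(1-\rho_g)\bar{y}_g^{\mathrm{AS}}(1)+\rho_g\bar{y}_g^{\mathrm{OMS}}(1)$ and handle the degenerate case $n_{g,\mathrm{OMS}}=0$, both of which the paper leaves implicit.
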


\begin{proof}
Fix $c \geq 1$ in the support of $C_g(\bm{Z}_g)$ and condition on $\{C_g(\bm{Z}_g) = c\}$. Under Assumption~4 (No-Only-White-Stops), every potentially stoppable treated unit is stopped, so the denominator of $\hat{\bar{y}}_g(1)$ in \eqref{eq: supp est bar y1 g} equals $n_{g,1}$ over all $\bm{z}_g \in \Omega_g$. By construction of $\hat{\bar{y}}_g(0)$, the stopped control units are exactly the $c$ Always-Stop control units, so
\begin{align*}
\E\left[\hat{\bar{y}}_g(0) \given C_g(\bm{Z}_g) = c\right] = \E\left[\dfrac{1}{c}\sum_{i:\, r_{g,i} = \mathrm{AS}} (1 - Z_{g,i})\,y_{g,i}(0) \given C_g(\bm{Z}_g) = c\right],
\end{align*}
which, by linearity of expectation and Lemma~\ref{lem: supp cond trt prob}, equals $\bar{y}_g^{\mathrm{AS}}(0)$.

Turning to $\hat{\bar{y}}_g(1)$, by linearity of expectation together with Assumption~1 and Lemma~\ref{lem: supp PO reduction},
\begin{align*}
\E\left[\hat{\bar{y}}_g(1) \given C_g(\bm{Z}_g) = c\right] = \dfrac{1}{n_{g,1}}\sum_{i=1}^{n_g} \E\left[Z_{g,i} \given C_g(\bm{Z}_g) = c\right] y_{g,i}(1).
\end{align*}
By Lemma~\ref{lem: supp cond trt prob}, for $i$ with $r_{g,i} = \mathrm{AS}$,
\begin{align*}
\E\left[Z_{g,i} \given C_g(\bm{Z}_g) = c\right] = \dfrac{n_{g,\mathrm{AS}} - c}{n_{g,\mathrm{AS}}},
\end{align*}
and for $i$ with $r_{g,i} = \mathrm{OMS}$,
\begin{align*}
\E\left[Z_{g,i} \given C_g(\bm{Z}_g) = c\right] = \dfrac{n_{g,1} - (n_{g,\mathrm{AS}} - c)}{n_{g,\mathrm{OMS}}}.
\end{align*}
Substituting and partitioning the sum over $\{i: r_{g,i} = \mathrm{AS}\}$ and $\{i: r_{g,i} = \mathrm{OMS}\}$ yields
\begin{align*}
\E\left[\hat{\bar{y}}_g(1) \given C_g(\bm{Z}_g) = c\right] = \dfrac{n_{g,\mathrm{AS}} - c}{n_{g,1}}\,\bar{y}_g^{\mathrm{AS}}(1) + \dfrac{n_{g,1} - (n_{g,\mathrm{AS}} - c)}{n_{g,1}}\,\bar{y}_g^{\mathrm{OMS}}(1).
\end{align*}

Since $\tau_g = \bar{y}_g(1) - (1 - \rho_g)\,\bar{y}_g^{\mathrm{AS}}(0)$ by Proposition~\ref{prop: supp decomp} and $\E[\hat{\bar{y}}_g(0) \given C_g(\bm{Z}_g) = c] = \bar{y}_g^{\mathrm{AS}}(0)$, it follows that
\begin{align*}
\E\left[\hat{\tau}_g \given C_g(\bm{Z}_g) = c\right] - \tau_g = \left(\dfrac{n_{g,\mathrm{AS}} - c}{n_{g,1}} - (1 - \rho_g)\right)\left[\bar{y}_g^{\mathrm{AS}}(1) - \bar{y}_g^{\mathrm{OMS}}(1)\right].
\end{align*}

Applying iterated expectations --- first conditional on $C_g(\bm{Z}_g) = c$ and then averaging over $c \geq 1$ according to the hypergeometric distribution in Lemma~\ref{lem: supp hypergeom} conditional on $\mathcal{E}_g$ --- yields \eqref{eq: supp bias tau hat}.
\end{proof}

\begin{rmk}\label{rmk: supp bias zero rho}
An instructive special case arises when $\rho_g = 0$. Under Assumption~4, $\rho_g = n_{g,\mathrm{OMS}} / n_g$, so $\rho_g = 0$ implies that every potentially stoppable encounter in stratum $g$ is Always-Stop and $n_{g,\mathrm{AS}} = n_g$. In this case, $n_{g,\mathrm{AS}} - C_g(\bm{Z}_g) = n_{g,1}$ deterministically for any $\bm{z}_g \in \Omega_g$. The ratio $(n_{g,\mathrm{AS}} - c)/n_{g,1}$ therefore equals $1 = 1 - \rho_g$ for every $c$, so each summand in \eqref{eq: supp bias tau hat} is zero. Hence $\E[\hat{\tau}_g \given \mathcal{E}_g] = \tau_g$, and $\hat{\tau}_g$ is unbiased for $\tau_g$ in stratum $g \in \mathcal{G}^{\ast}$.
\end{rmk}

\begin{rmk}\label{rmk: supp bias equal PO}
Another case in which the bias is zero arises when $\bar{y}_g^{\mathrm{AS}}(1) = \bar{y}_g^{\mathrm{OMS}}(1)$. Substantively, this requires that the average level of force used in stops that would occur regardless of race equals the average level of force used in stops that would occur only for minority civilians. Under the typology of \citet{knoxetal2020}, Always-Stop encounters correspond to situations in which civilians are stopped due to suspicion of, for example, violent crime, whereas Only-Minority-Stop encounters correspond to situations in which stops arise from lower-threshold cues such as ``furtive movements'' (a checkbox on the NYPD's UF-250 stop report form). Officers are more likely to use force in encounters prompted by violent crime than in encounters prompted by lower-threshold cues, so these two averages are unlikely to be equal.
\end{rmk}

\subsection{Consistency of the Stratum-Specific Estimator}\label{sec: supp consistency}

We study the behavior of $\hat{\tau}_g$ as the stratum size $n_g$ grows, holding the assumptions and conditioning fixed throughout the asymptotic sequence. (All quantities --- $n_g$, $n_{g,\mathrm{AS}}$, $y_{g,i}(z)$, and so on --- carry an implicit sequence index, which we suppress to match the manuscript's notation.) In particular, $n_g$, $n_{g,1}$, $n_{g,0}$, $n_{g,\mathrm{AS}}$, and $n_{g,\mathrm{OMS}}$ are fixed across $\Omega_g$ at every stage, while the cross-tabulations $n_{g,z,r}(\bm{Z}_g)$ --- the numbers of encounters with civilian race $z$ and principal stratum $r$ --- and $C_g(\bm{Z}_g)$ vary across $\Omega_g$.

The proof of consistency draws on two additional lemmas concerning the asymptotic behavior of $C_g(\bm{Z}_g)$. Lemma~\ref{lem: supp Cg diverges} shows that $C_g(\bm{Z}_g) \xrightarrow{p} \infty$ under the regularity conditions, by showing that $C_g(\bm{Z}_g)$ is unlikely to fall far below its expectation and that this expectation diverges. Lemma~\ref{lem: supp EV inv Cg} then uses this divergence, together with Lemma~\ref{lem: supp prob zero denom}, to show that $\E[1/C_g(\bm{Z}_g) \given \mathcal{E}_g] \to 0$, where $\mathcal{E}_g \coloneqq \{C_g(\bm{Z}_g) \geq 1\}$ is the event that stratum $g$ contains at least one stopped white-civilian encounter. The white-civilian variance bound in the proof below uses these two lemmas directly.

The proof proceeds in four steps. First, under Assumptions~1 and~4 and No-Bias-in-Encounters, minority-civilian encounters form a simple random sample within each stratum, so the minority-civilian mean is unbiased and consistent for $\bar{y}_g(1)$ by a standard finite population variance argument. Second, the white-civilian mean is an average over Always-Stop encounters; conditioning on $\mathcal{E}_g$, the conditional treatment probabilities in Lemma~\ref{lem: supp cond trt prob} imply that the $c$ stopped white-civilian encounters are a simple random sample from the Always-Stop subpopulation, from which a standard finite population variance calculation yields the conditional expectation $\bar{y}_g^{\mathrm{AS}}(0)$ and a conditional variance bound proportional to $\E[1/C_g(\bm{Z}_g) \given \mathcal{E}_g]$. Third, Lemma~\ref{lem: supp EV inv Cg} implies that this conditional variance bound converges to $0$. Fourth, Chebyshev's inequality establishes consistency of the white-civilian mean for $\bar{y}_g^{\mathrm{AS}}(0)$ conditional on $\mathcal{E}_g$; the decomposition in Proposition~\ref{prop: supp decomp} together with the continuous mapping theorem yields consistency of $\hat{\tau}_g$ for $\tau_g$ under this conditioning; and Lemma~\ref{lem: supp prob zero denom} removes the conditioning to conclude unconditional consistency.

\begin{prop}[Consistency of $\hat{\tau}_g$ for $\tau_g$]\label{prop: supp consistency}
Suppose Assumptions~1 -- 4 and the No-Bias-in-Encounters condition in equation~(2) of the manuscript hold for each element of the asymptotic sequence. Fix $g \in \mathcal{G}$ and suppose $n_g \to \infty$ subject to:
\begin{enumerate}[label=\textup{(R\arabic*)}, leftmargin=2.5em]
\item \textbf{Bounded potential outcomes.} There exists $M < \infty$ such that $\abs{y_{g,i}(z)} \leq M$ for all $i$ and $z \in \{0,1\}$, uniformly along the asymptotic sequence.
\item \textbf{Nondegenerate racial composition.} $n_{g,1} / n_g \to v_g \in (0, 1)$ as $n_g \to \infty$, so that $n_{g,1} \to \infty$ and $n_{g,0} \to \infty$ as $n_g \to \infty$.
\item \textbf{Nonvanishing Always-Stop share.} $n_{g,\mathrm{AS}} / n_g \to \alpha_g^{\mathrm{AS}} \in (0, 1]$ as $n_g \to \infty$, so that $n_{g,\mathrm{AS}} \to \infty$ as $n_g \to \infty$.
\end{enumerate}
Then $\hat{\tau}_g \xrightarrow{p} \tau_g$.
\end{prop}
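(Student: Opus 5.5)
The plan follows the four-step outline given just before the statement. I treat $\hat{\tau}_g = \hat{\bar{y}}_g(1) - (1-\rho_g)\hat{\bar{y}}_g(0)$ on $\mathcal{E}_g$, and argue separately that $\hat{\bar{y}}_g(1)$ is consistent for $\bar{y}_g(1)$ and that $\hat{\bar{y}}_g(0)$ is consistent for $\bar{y}_g^{\mathrm{AS}}(0)$. The conclusion then follows from Proposition~\ref{prop: supp decomp}, since $\rho_g$ is a fixed bounded number in $[0,1]$ and a linear combination with bounded coefficients preserves convergence in probability.

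\textbf{Treated mean.} Under Assumption~4 every treated unit is stopped, so $\hat{\bar{y}}_g(1) = n_{g,1}^{-1}\sum_i Z_{g,i} y_{g,i}(1)$. This is exactly the full-data treated mean, as in the treated part of the proof of Proposition~\ref{prop: supp aug DIM}. Under No-Bias-in-Encounters the treated units form a simple random sample of size $n_{g,1}$ from the $n_g$ units, by Lemma~\ref{lem: supp gamma one} and the uniformity it implies. Hence $\E[\hat{\bar{y}}_g(1)] = \bar{y}_g(1)$. The standard finite-population variance formula gives
\[
\Var[\hat{\bar{y}}_g(1)] \le \frac{S_g^2(1)}{n_{g,1}} \le \frac{M^2}{n_{g,1}}
\]
by (R1). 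This tends to $0$ by (R2), and Chebyshev's inequality then gives $\hat{\bar{y}}_g(1) - \bar{y}_g(1) \xrightarrow{p} 0$.

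\textbf{Control mean.} Condition on $C_g(\bm{Z}_g) = c \ge 1$. By the uniformity on $\Omega_g(c)$ shown in the proof of Lemma~\ref{lem: supp cond trt prob}, the AS controls and the OMS controls are chosen independently and uniformly. So the $c$ stopped controls are a simple random sample of size $c$ from the $n_{g,\mathrm{AS}}$ Always-Stop units, and $\hat{\bar{y}}_g(0)$ is their mean of $y_{g,i}(0)$. This gives
\[
\E[\hat{\bar{y}}_g(0)\mid C_g = c] = \bar{y}_g^{\mathrm{AS}}(0), \qquad \Var[\hat{\bar{y}}_g(0)\mid C_g = c] \le M^2/c .
\]
By the law of total variance the conditional mean term is constant in $c$, so
\[
\Var[\hat{\bar{y}}_g(0)\mid \mathcal{E}_g] \le M^2\,\E[1/C_g(\bm{Z}_g)\mid\mathcal{E}_g],
\]
which tends to $0$ by Lemma~\ref{lem: supp EV inv Cg}. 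Chebyshev's inequality then gives, for every $\varepsilon > 0$,
\[
\Pr(|\hat{\bar{y}}_g(0) - \bar{y}_g^{\mathrm{AS}}(0)| > \varepsilon \mid \mathcal{E}_g) \to 0 .
\]

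\textbf{Combining and removing the conditioning.} Write $\Pr(A) \le \Pr(A\mid\mathcal{E}_g) + \Pr(\mathcal{E}_g^c)$ for the event $A = \{|\hat{\tau}_g - \tau_g| > \varepsilon\}$. The second term vanishes by Lemma~\ref{lem: supp prob zero denom}; off $\mathcal{E}_g$ the estimator may be given any bounded convention. For the first term, the treated-mean convergence also holds conditionally on $\mathcal{E}_g$, because $\Pr(\cdot\mid\mathcal{E}_g) \le \Pr(\cdot)/\Pr(\mathcal{E}_g)$ and $\Pr(\mathcal{E}_g) \to 1$. A union bound over the two components then finishes the proof.

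\textbf{Main obstacle.} The delicate step is the conditional simple-random-sample claim for the stopped controls together with the total-variance bound. I must check that conditioning on $C_g = c$ induces a uniform AS-control subset, which follows from the product form $\binom{n_{g,\mathrm{AS}}}{c}\binom{n_{g,\mathrm{OMS}}}{n_{g,0}-c}$ of $|\Omega_g(c)|$. I must also make sure that the suppressed sequence index does not let $\bar{y}_g^{\mathrm{AS}}(0)$ and $\bar{y}_g(1)$ drift. They may drift, but since every claim is about the difference from a sequence-indexed target, convergence of the difference suffices.
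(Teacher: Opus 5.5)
Your proposal is correct and follows the paper's proof essentially step for step: Chebyshev for the simple-random-sample treated mean; the $c$ stopped controls as a conditional simple random sample from the Always-Stop units, with the total-variance bound driven by $\E[1/C_g(\bm{Z}_g)\mid\mathcal{E}_g]\to 0$ (Lemma~\ref{lem: supp EV inv Cg}); transfer of the treated-mean convergence to $\Pr(\cdot\mid\mathcal{E}_g)$; and removal of the conditioning via Lemma~\ref{lem: supp prob zero denom}. Two cosmetic points: Lemma~\ref{lem: supp gamma one} goes from $\Gamma=1$ to No-Bias-in-Encounters, so it is not the right source for uniformity on $\Omega_g$ under No-Bias-in-Encounters (the paper simply takes that uniformity as given, as in Lemma~\ref{lem: supp hypergeom}); and your intermediate bound $S_g^2(1)\le M^2$ can fail by a factor $n_g/(n_g-1)$, where the paper uses $4M^2$, which changes nothing.
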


\begin{proof}
Fix a stratum $g$. Under Assumptions~1 and~4, $\hat{\bar{y}}_g(1)$ in \eqref{eq: supp est bar y1 g} is
\begin{align*}
\hat{\bar{y}}_g(1) = \dfrac{1}{n_{g,1}}\sum_{i=1}^{n_g} Z_{g,i}\, y_{g,i}(1),
\end{align*}
which, under No-Bias-in-Encounters, is a sample mean from a simple random sample without replacement of size $n_{g,1}$ from the $n_g$ potentially stoppable units. The expected value and variance are
\begin{align*}
\E\left[\hat{\bar{y}}_g(1)\right] = \bar{y}_g(1)
\end{align*}
and
\begin{align*}
\Var\left[\hat{\bar{y}}_g(1)\right] = \left(1 - \dfrac{n_{g,1}}{n_g}\right)\dfrac{S^2_{g,1}}{n_{g,1}},
\end{align*}
where $S^2_{g,1} \coloneqq (n_g - 1)^{-1}\sum_{i=1}^{n_g}[y_{g,i}(1) - \bar{y}_g(1)]^2$. By~(R1), each squared deviation satisfies $[y_{g,i}(1) - \bar{y}_g(1)]^2 \leq (2M)^2 = 4M^2$, so $S^2_{g,1} \leq 4M^2$. By~(R2), $n_{g,1} \to \infty$, so $\Var[\hat{\bar{y}}_g(1)] \to 0$. Chebyshev's inequality then implies
\begin{align}\label{eq: supp hat y1 cons}
\hat{\bar{y}}_g(1) \xrightarrow{p} \bar{y}_g(1).
\end{align}

Turning to $\hat{\bar{y}}_g(0)$ in \eqref{eq: supp est bar y0 g}, under Assumptions~1 and~4, a unit satisfies $s_{g,i}(0) = 1$ if and only if $r_{g,i} = \mathrm{AS}$, so
\begin{align*}
\hat{\bar{y}}_g(0) = \dfrac{1}{C_g(\bm{Z}_g)}\sum_{i:\, r_{g,i} = \mathrm{AS}} (1 - Z_{g,i})\, y_{g,i}(0),
\end{align*}
which is well defined on $\mathcal{E}_g \coloneqq \{C_g(\bm{Z}_g) \geq 1\}$.

By Lemma~\ref{lem: supp cond trt prob}, for any $c \geq 1$ and any $i$ with $r_{g,i} = \mathrm{AS}$, the conditional probability of assignment to the control condition is $\Pr(Z_{g,i} = 0 \given C_g(\bm{Z}_g) = c) = c / n_{g,\mathrm{AS}}$. Conditional on $\{C_g(\bm{Z}_g) = c\}$, the $c$ Always-Stop control units are therefore a simple random sample without replacement of size $c$ from the $n_{g,\mathrm{AS}}$ Always-Stop units. By the standard finite population formulas for sampling without replacement, this implies
\begin{align*}
\E\left[\hat{\bar{y}}_g(0) \given C_g(\bm{Z}_g) = c\right] = \bar{y}_g^{\mathrm{AS}}(0)
\end{align*}
and
\begin{align}\label{eq: supp Var control mean}
\Var\left[\hat{\bar{y}}_g(0) \given C_g(\bm{Z}_g) = c\right] = \left(1 - \dfrac{c}{n_{g,\mathrm{AS}}}\right)\dfrac{S^2_{g,0,\mathrm{AS}}}{c},
\end{align}
where $S^2_{g,0,\mathrm{AS}} \coloneqq (n_{g,\mathrm{AS}} - 1)^{-1}\sum_{i:\, r_{g,i} = \mathrm{AS}}[y_{g,i}(0) - \bar{y}_g^{\mathrm{AS}}(0)]^2$. Since the finite population correction factor satisfies $(1 - c/n_{g,\mathrm{AS}}) \leq 1$ for any $c \geq 1$, it follows that
\begin{align*}
\Var\left[\hat{\bar{y}}_g(0) \given C_g(\bm{Z}_g) = c\right] \leq \dfrac{S^2_{g,0,\mathrm{AS}}}{c}.
\end{align*}

Applying the law of total expectation conditional on $\mathcal{E}_g$ yields
\begin{align}\label{eq: supp EV est bar y0 g}
\E\left[\hat{\bar{y}}_g(0) \given \mathcal{E}_g\right] = \sum_{c \geq 1} \bar{y}_g^{\mathrm{AS}}(0)\,\Pr(C_g(\bm{Z}_g) = c \given \mathcal{E}_g) = \bar{y}_g^{\mathrm{AS}}(0),
\end{align}
and applying the law of total variance conditional on $\mathcal{E}_g$, and noting that the variance of the conditional expectation is $0$, yields
\begin{align}\label{eq: supp Var bound est bar y0 g}
\Var\left[\hat{\bar{y}}_g(0) \given \mathcal{E}_g\right]
&= \E\left[\Var\left(\hat{\bar{y}}_g(0) \given C_g(\bm{Z}_g), \mathcal{E}_g\right) \given \mathcal{E}_g\right]
+ \Var\left(\E\left[\hat{\bar{y}}_g(0) \given C_g(\bm{Z}_g), \mathcal{E}_g\right] \given \mathcal{E}_g\right) \nonumber \\
&= \E\left[\Var\left(\hat{\bar{y}}_g(0) \given C_g(\bm{Z}_g), \mathcal{E}_g\right) \given \mathcal{E}_g\right] \nonumber \\
&\leq \E\left[\dfrac{S^2_{g,0,\mathrm{AS}}}{C_g(\bm{Z}_g)} \given \mathcal{E}_g\right]
= S^2_{g,0,\mathrm{AS}}\,\E\left[\dfrac{1}{C_g(\bm{Z}_g)} \given \mathcal{E}_g\right].
\end{align}

By Lemma~\ref{lem: supp EV inv Cg}, $\E[1/C_g(\bm{Z}_g) \given \mathcal{E}_g] \to 0$. Since $S^2_{g,0,\mathrm{AS}} \leq 4M^2$ by (R1),
\begin{align*}
\Var\left[\hat{\bar{y}}_g(0) \given \mathcal{E}_g\right] \leq S^2_{g,0,\mathrm{AS}}\,\E\left[\dfrac{1}{C_g(\bm{Z}_g)} \given \mathcal{E}_g\right] \to 0.
\end{align*}
To complete the proof, apply Chebyshev's inequality to $\hat{\bar{y}}_g(0)$ conditional on $\mathcal{E}_g$:
\begin{align*}
\Pr\left(\left|\hat{\bar{y}}_g(0)-\E[\hat{\bar{y}}_g(0)\given\mathcal{E}_g]\right|>\vartheta \given \mathcal{E}_g\right) \leq \dfrac{\Var[\hat{\bar{y}}_g(0)\given\mathcal{E}_g]}{\vartheta^2}.
\end{align*}
Using \eqref{eq: supp EV est bar y0 g} to substitute $\E[\hat{\bar{y}}_g(0)\given\mathcal{E}_g]=\bar{y}_g^{\mathrm{AS}}(0)$ and the fact that $\Var[\hat{\bar{y}}_g(0)\given\mathcal{E}_g]\to0$ yields
\begin{align*}
\Pr\left(\left|\hat{\bar{y}}_g(0)-\bar{y}_g^{\mathrm{AS}}(0)\right|>\vartheta \given \mathcal{E}_g\right)
\to 0.
\end{align*}
Hence,
\begin{align}\label{eq: supp hat y0 cons}
\hat{\bar{y}}_g(0) \xrightarrow{p} \bar{y}_g^{\mathrm{AS}}(0)
\quad\text{under } \Pr(\cdot \given \mathcal{E}_g).
\end{align}

To show that \eqref{eq: supp hat y1 cons} also holds under $\Pr(\cdot \given \mathcal{E}_g)$, fix any $\varepsilon > 0$. By the definition of conditional probability,
\begin{align*}
\Pr\left(\abs{\hat{\bar{y}}_g(1) - \bar{y}_g(1)} > \varepsilon \given \mathcal{E}_g\right)
&= \dfrac{\Pr\left(\abs{\hat{\bar{y}}_g(1) - \bar{y}_g(1)} > \varepsilon,\; \mathcal{E}_g\right)}{\Pr(\mathcal{E}_g)}.
\end{align*}
Since the event
\begin{align*}
\left\{\abs{\hat{\bar{y}}_g(1) - \bar{y}_g(1)} > \varepsilon,\; \mathcal{E}_g\right\}
\end{align*}
is contained in
\begin{align*}
\left\{\abs{\hat{\bar{y}}_g(1) - \bar{y}_g(1)} > \varepsilon\right\},
\end{align*}
monotonicity of probability implies
\begin{align*}
\Pr\left(\abs{\hat{\bar{y}}_g(1) - \bar{y}_g(1)} > \varepsilon \given \mathcal{E}_g\right) \leq \dfrac{\Pr\left(\abs{\hat{\bar{y}}_g(1) - \bar{y}_g(1)} > \varepsilon\right)}{\Pr(\mathcal{E}_g)}.
\end{align*}
By \eqref{eq: supp hat y1 cons}, the numerator converges to $0$, and by \Cref{lem: supp prob zero denom}, $\Pr(\mathcal{E}_g) \to 1$, so
\begin{align}\label{eq: supp hat y1 cons cond}
\hat{\bar{y}}_g(1) \xrightarrow{p} \bar{y}_g(1) \quad\text{under } \Pr(\cdot \given \mathcal{E}_g).
\end{align}

Combining \eqref{eq: supp hat y1 cons cond} and \eqref{eq: supp hat y0 cons}, Proposition~\ref{prop: supp decomp}, and the continuous mapping theorem yields
\begin{align*}
\hat{\tau}_g = \hat{\bar{y}}_g(1) - (1 - \rho_g)\,\hat{\bar{y}}_g(0) \xrightarrow{p}
\tau_g \quad\text{under } \Pr(\cdot \given \mathcal{E}_g).
\end{align*}

Finally, since $\mathcal{E}_g^{\complement} = \{C_g(\bm{Z}_g)=0\}$ and $\Pr(\mathcal{E}_g^{\complement}) \to 0$ by \Cref{lem: supp prob zero denom}, for any $\varepsilon > 0$, partitioning the event $\{\abs{\hat{\tau}_g - \tau_g} > \varepsilon\}$ according to $\mathcal{E}_g$ and $\mathcal{E}_g^{\complement}$, applying the definition of conditional probability, and using monotonicity of probability yields
\begin{align*}
\Pr\left(\abs{\hat{\tau}_g - \tau_g} > \varepsilon\right)
&= \Pr\left(\abs{\hat{\tau}_g - \tau_g} > \varepsilon,\; \mathcal{E}_g\right)
  + \Pr\left(\abs{\hat{\tau}_g - \tau_g} > \varepsilon,\; \mathcal{E}_g^{\complement}\right) \\
&\leq \Pr\left(\abs{\hat{\tau}_g - \tau_g} > \varepsilon \given \mathcal{E}_g\right)\Pr(\mathcal{E}_g)
  + \Pr(\mathcal{E}_g^{\complement}) \\
&\leq \Pr\left(\abs{\hat{\tau}_g - \tau_g} > \varepsilon \given \mathcal{E}_g\right)
  + \Pr(\mathcal{E}_g^{\complement}) \to 0,
\end{align*}
which, by the definition of convergence in probability, establishes $\hat{\tau}_g \xrightarrow{p} \tau_g$.
\end{proof}

\begin{rmk}\label{rmk: supp Gstar equals G}
For strata outside $\mathcal{G}^{\ast}$ --- those containing encounters of only one race --- No-Bias-in-Encounters is trivially satisfied, since all encounters share a degenerate conditional probability of $0$ or $1$; the assumption is substantive only in strata where both races are present. Under regularity condition \textup{(R2)}, $n_{g,1}/n_g \to v_g \in (0,1)$ implies $n_{g,1} \to \infty$ and $n_{g,0} \to \infty$. Consequently, for every $g \in \mathcal{G}$, there exists a stage of the asymptotic sequence beyond which $n_{g,1} \geq 1$ and $n_{g,0} \geq 1$, so that $g \in \mathcal{G}^{\ast}$. Since the number of strata $\abs{\mathcal{G}}$ is held fixed, $\mathcal{G}^{\ast} = \mathcal{G}$ for all sufficiently large elements of the asymptotic sequence. The aggregate estimator, defined over $\mathcal{G}^{\ast}$, therefore asymptotically targets the average causal effect across all strata in $\mathcal{G}$.
\end{rmk}

\begin{rmk}\label{rmk: supp unidentified weights}
The stratum sizes $n_g$ and the population size $n^{\ast} = \sum_{g \in \mathcal{G}^{\ast}} n_g$ depend on the unobserved number of Only-Minority-Stop encounters in each stratum, so the weights $n_g / n^{\ast}$ that define the aggregate estimand $\tau$ in equation~(6) of the manuscript are not identified from the observed data alone. In the sequential sensitivity framework, specifying $\bm{\underline{\rho}}$ determines the augmented stratum sizes $\tilde{n}_g^{\underline{\rho}_g}$ and hence the augmented weights.
\end{rmk}

\begin{rmk}\label{rmk: supp alternative regime}
Even if the population weights were known, the consistency of the aggregate estimator $\sum_{g \in \mathcal{G}^{\ast}} (n_g / n^{\ast})\,\hat{\tau}_g$ for $\tau$ would not necessarily hold in an alternative asymptotic regime in which $\abs{\mathcal{G}^{\ast}}$ grows while stratum sizes $n_g$ remain uniformly bounded. In the standard setting without sample selection, where all potential outcomes are observed, the within-stratum Difference-in-Means $\hat{\bar{y}}_g(1) - \hat{\bar{y}}_g(0)$ is unbiased for $\tau_g$ in every stratum under complete randomization regardless of stratum size. The aggregate estimator is therefore unbiased for $\tau$ at every stage, and as $\abs{\mathcal{G}^{\ast}}$ grows the variance shrinks to zero, yielding consistency.

The estimator $\hat{\tau}_g = \hat{\bar{y}}_g(1) - (1 - \rho_g)\,\hat{\bar{y}}_g(0)$ does not share this property: As Proposition~\ref{prop: supp bias} shows, $\hat{\tau}_g$ is generally biased for $\tau_g$ in finite samples. In the regime of Proposition~\ref{prop: supp consistency}, this bias vanishes as $n_g \to \infty$. When stratum sizes remain bounded, however, the bias term $(n_{g,\mathrm{AS}} - C_g(\bm{Z}_g))/n_{g,1} - (1 - \rho_g)$ need not vanish on average across strata. Consequently, even as $\abs{\mathcal{G}^{\ast}}$ grows and the variance of the aggregate estimator shrinks to zero, the aggregate estimator converges in probability to a quantity that need not equal $\tau$.
\end{rmk}

\section{Formal Results and Proofs for Section 5}\label{sec: supp formal section 5}

\subsection{Proof of Proposition~1}\label{sec: supp proof prop 1}

Proposition 1 of the manuscript establishes a one-to-one correspondence between a researcher's posited value of $\underline{\rho}_g$ in the feasible domain $\mathcal{F}_{\underline{\rho}_g}$ and the implied number of missing Only-Minority-Stop white-civilian encounters $\tilde{n}_{g,0,\mathrm{OMS}}^{\underline{\rho}_g} \in \mathbb{Z}_{\geq 0}$, where $\mathbb{Z}_{\geq 0} \coloneqq \{0, 1, 2, \ldots\}$ denotes the set of nonnegative integers. The result applies to any stratum $g \in \mathcal{G}$ with at least one observed encounter ($n_{g,1} + n_{g,0,\mathrm{AS}} \geq 1$), which is more general than $g \in \mathcal{G}^{\ast}$ since $\mathcal{G}^{\ast}$ requires both $n_{g,1} \geq 1$ and $n_{g,0} \geq 1$.

We restate the proposition for reference.
\begin{prop}[Restatement of Proposition 1 of the manuscript]
\label{prop: bias-in-stops missing OMS control restate}
Under Assumptions 1 -- 4, fix a stratum $g \in \mathcal{G}^{\ast}$ with observed minority-civilian count $n_{g,1}$ and observed Always-Stop control count $n_{g,0,\mathrm{AS}}$. For $w \in \mathbb{Z}_{\geq 0}$ posited missing Only-Minority-Stop control units, the lower bound on racial discrimination in stops is
\begin{align}
\underline{\rho}_g & = \dfrac{w}{n_{g,1} + n_{g,0,\mathrm{AS}} + w},
\end{align}
with inverse
\begin{align}
w & = \dfrac{\underline{\rho}_g}{1 - \underline{\rho}_g} \left(n_{g,1} + n_{g,0,\mathrm{AS}}\right).
\end{align}
Moreover, the map
\begin{align} \label{eq: map OMS control to LB rho}
w \mapsto \dfrac{w}{n_{g,1} + n_{g,0,\mathrm{AS}} + w}
\end{align}
is a bijection from $\mathbb{Z}_{\geq 0}$ to the feasible domain of $\underline{\rho}_g$
\begin{align}
\mathcal{F}_{\underline{\rho}_g} & \coloneqq \left\{\dfrac{w}{n_{g,1} + n_{g,0,\mathrm{AS}} + w}: w \in \mathbb{Z}_{\geq 0} \right\} \subset [0,1).
\end{align}
\end{prop}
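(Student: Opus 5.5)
Write $m \coloneqq n_{g,1} + n_{g,0,\mathrm{AS}}$ for the observed stratum size. Since $g \in \mathcal{G}^{\ast}$ requires $n_{g,1} \geq 1$, we have $m \geq 1$. The whole argument then concerns the single real function $f(w) \coloneqq w/(m+w)$ on $\mathbb{Z}_{\geq 0}$. The identity $\underline{\rho}_g = w/(m+w)$ is simply the definition in \Cref{sec: seq sens}. That definition is motivated by the decomposition $n_{g,\mathrm{OMS}} = n_{g,0,\mathrm{OMS}}(\bm{Z}_g) + n_{g,1,\mathrm{OMS}}(\bm{Z}_g)$, together with the total count $n_g = n_{g,1} + n_{g,0,\mathrm{AS}} + n_{g,0,\mathrm{OMS}}$. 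The total count holds because, under Assumption~4, every control unit is Always-Stop or Only-Minority-Stop, as in the proof of Proposition~\ref{prop: supp aug DIM}. I would note briefly that setting $w$ equal to the realized $n_{g,0,\mathrm{OMS}}(\bm{Z}_g)$ gives $w/n_g \leq n_{g,\mathrm{OMS}}/n_g = \rho_g$, which justifies the "lower bound" reading. This remark is not part of the formal claim.

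First, I would show that the image lies in $[0,1)$. For $w \geq 0$ and $m \geq 1$, we have $0 \leq w < m + w$, so $0 \leq f(w) < 1$. Hence $\mathcal{F}_{\underline{\rho}_g} \subset [0,1)$.

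Second, injectivity. On $[0,\infty)$ the derivative is $f'(w) = m/(m+w)^2 > 0$. A purely algebraic alternative is $f(w) - f(w') = m(w-w')/((m+w)(m+w'))$. Either way $f$ is strictly increasing, hence injective on $\mathbb{Z}_{\geq 0}$. Surjectivity onto $\mathcal{F}_{\underline{\rho}_g}$ holds by the definition of that set as the image of $f$. Together these give the bijection.

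Third, the inverse. Starting from $\underline{\rho}_g = w/(m+w)$, multiply through to get $\underline{\rho}_g(m+w) = w$, so $w(1-\underline{\rho}_g) = \underline{\rho}_g m$. Dividing by $1 - \underline{\rho}_g > 0$, which the first step guarantees, gives $w = \underline{\rho}_g m/(1-\underline{\rho}_g)$. I would then check that composing in both directions yields the identity on $\mathbb{Z}_{\geq 0}$ and on $\mathcal{F}_{\underline{\rho}_g}$.

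There is no real obstacle here. The only point needing care is that $m \geq 1$, so that the denominator is never zero and $\underline{\rho}_g < 1$. This rules out division by zero in both the forward map and the inverse.
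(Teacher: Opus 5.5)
Your proof is correct and follows essentially the same route as the paper: both reduce to the map $w \mapsto w/(m+w)$ with $m \geq 1$, prove injectivity by an algebraic cross-multiplication (your difference formula is the same computation), take surjectivity onto $\mathcal{F}_{\underline{\rho}_g}$ as immediate from its definition, and invert by solving $\underline{\rho}_g(m+w) = w$. The differences are minor: the paper derives $\underline{\rho}_g$ by minimizing $(w + n_{g,1,\mathrm{OMS}})/(m+w)$ over the unknown $n_{g,1,\mathrm{OMS}} \in \{0,\ldots,n_{g,1}\}$ rather than citing it as a definition, and you explicitly verify $\mathcal{F}_{\underline{\rho}_g} \subset [0,1)$ and $1 - \underline{\rho}_g > 0$, which the paper leaves implicit.
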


\begin{proof}
Fix a stratum $g \in \mathcal{G}_{\geq 1} \coloneqq \{g \in \mathcal{G}: n_{g,1} + n_{g,0,\mathrm{AS}} \geq 1\}$ and let $\ell \coloneqq n_{g,1} + n_{g,0,\mathrm{AS}} \geq 1$.

Under Assumptions~1 -- 4, for any posited number $w \in \mathbb{Z}_{\geq 0}$ of missing Only-Minority-Stop control units, the number of Only-Minority-Stop units in stratum $g$ is at least $w$ (from the $w$ posited control units) and at most $w + n_{g,1}$ (if every treated unit is also Only-Minority-Stop). The proportion of Only-Minority-Stop units among all $\ell + w$ potentially stoppable units is therefore
\begin{align*}
\rho_g = \dfrac{w + n_{g,1,\mathrm{OMS}}}{\ell + w},
\end{align*}
where $n_{g,1,\mathrm{OMS}} \in \{0, \ldots, n_{g,1}\}$ is unknown. Since $\rho_g$ is increasing in $n_{g,1,\mathrm{OMS}}$ for fixed $w$ and $\ell$, the minimum is attained at $n_{g,1,\mathrm{OMS}} = 0$, yielding
\begin{align*}
\underline{\rho}_g = \dfrac{w}{\ell + w}.
\end{align*}
Solving for $w$ by cross-multiplying $\underline{\rho}_g(\ell + w) = w$ and rearranging gives
\begin{align*}
w = \dfrac{\underline{\rho}_g}{1 - \underline{\rho}_g}\,\ell.
\end{align*}

Consider the map $w \mapsto w/(\ell + w)$ from $\mathbb{Z}_{\geq 0}$ to $[0, 1)$. Since $\ell \geq 1 > 0$, this map is well defined. We show it is a bijection from $\mathbb{Z}_{\geq 0}$ onto the feasible domain $\mathcal{F}_{\underline{\rho}_g}$ by establishing that the map is injective and that its image coincides with $\mathcal{F}_{\underline{\rho}_g}$.

\medskip
\noindent\textit{Injectivity.} Suppose $w_1/(\ell + w_1) = w_2/(\ell + w_2)$ for $w_1, w_2 \in \mathbb{Z}_{\geq 0}$. Cross-multiplying yields $w_1(\ell + w_2) = w_2(\ell + w_1)$, which simplifies to $w_1 \ell = w_2 \ell$. Since $\ell > 0$, $w_1 = w_2$.

\medskip
\noindent\textit{Image equals $\mathcal{F}_{\underline{\rho}_g}$.} The feasible domain is $\mathcal{F}_{\underline{\rho}_g} \coloneqq \{w / (\ell + w): w \in \mathbb{Z}_{\geq 0}\} \subset [0, 1)$. By construction, the image of the map --- i.e., $\{w/(\ell + w): w \in \mathbb{Z}_{\geq 0}\}$ --- is identical to $\mathcal{F}_{\underline{\rho}_g}$: Every element of $\mathcal{F}_{\underline{\rho}_g}$ is $w/(\ell + w)$ for some $w \in \mathbb{Z}_{\geq 0}$, and every $w/(\ell + w)$ belongs to $\mathcal{F}_{\underline{\rho}_g}$.

\medskip
Since the map is injective and its image equals $\mathcal{F}_{\underline{\rho}_g}$, it is a bijection from $\mathbb{Z}_{\geq 0}$ to $\mathcal{F}_{\underline{\rho}_g}$.
\end{proof}

\subsection{\texorpdfstring{Worked Example: Interaction of $\bm{\underline{\rho}}$ and $\Gamma$ in a Single Stratum}{Worked Example: Interaction of rho and Gamma in a Single Stratum}}\label{sec: supp tilt one minority worked example}

The interaction of $\bm{\underline{\rho}}$ and $\Gamma$ can be seen directly from the tilted statistic. To illustrate, consider a single stratum with one minority-civilian encounter ($n_{g,1} = 1$), an upper-tailed test ($d = +1$) with $\tau_0 = 0$, and suppose $\hat{\tau}_g^{\underline{\rho}_g} - \tau_0 \geq 0$. The tilted statistic simplifies to $\hat{\tau}_g^{\mathrm{tilt}}(\underline{\rho}_g; \Gamma, \tau_0, d = +1) = (\hat{\tau}_g^{\underline{\rho}_g} - \tau_0)(\tilde{n}_g^{\underline{\rho}_g} - 1 + \Gamma)/(\Gamma\, \tilde{n}_g^{\underline{\rho}_g})$. The first factor, $\hat{\tau}_g^{\underline{\rho}_g} - \tau_0$, increases with $\underline{\rho}_g$ because adding zeros to the white-civilian outcomes lowers the white-civilian mean and therefore raises the Difference-in-Means. The second factor applies the probability bound from Lemma~1 of the manuscript, tilting the centered Difference-in-Means toward zero to ensure valid inference under any assignment mechanism consistent with $\Gamma$ over the augmented assignment space. Both parameters appear in the same multiplicative expression: $\underline{\rho}_g$ enters through both $\hat{\tau}_g^{\underline{\rho}_g}$ and $\tilde{n}_g^{\underline{\rho}_g}$, while $\Gamma$ enters directly. The tilted statistic cannot be decomposed into separate, additive components for each.

Because the two parameters appear in the same multiplicative expression, the effect of changing one depends on the value of the other. Consider increasing $\underline{\rho}_g$ while holding $\Gamma$ fixed. The first factor grows --- appending more zeros to the white-civilian outcomes pulls the white-civilian mean toward zero, which raises the Difference-in-Means $\hat{\tau}_g^{\underline{\rho}_g}$ and therefore increases the centered Difference-in-Means $\hat{\tau}_g^{\underline{\rho}_g} - \tau_0$. The second factor shrinks toward $1/\Gamma$ as the augmented stratum size $\tilde{n}_g^{\underline{\rho}_g}$ grows. When $\Gamma$ is small, the second factor shrinks slowly and the first factor dominates, so the tilted statistic rises with $\underline{\rho}_g$. When $\Gamma$ is large, the second factor shrinks rapidly and can eventually outweigh the first factor, so the tilted statistic rises and then falls. A symmetric pattern holds if $\Gamma$ varies with $\underline{\rho}_g$ fixed: The change in $\Gamma$ is muted when $\underline{\rho}_g$ is small but amplified once the augmented stratum has grown. Isolating one parameter and holding the other at a default value therefore misses these joint effects.

\section{Proof of Lemma 1} \label{sec: supp proof lem 1}

The proof of Lemma 1 in the manuscript rests on two preliminary results. The first (\Cref{lem: submodel cond dist}) derives the conditional distribution of assignments within a stratum under the parametric submodel \eqref{eq: submodel} below. The second (\Cref{lem: submodel equiv}) shows that this submodel generates the same class of conditional distributions as the general $\Gamma$-bound in equation (1) of the manuscript and attains the bound sharply. Consequently, optimizing over the submodel suffices to bound probabilities over the full class.

\begin{lem} \label{lem: submodel cond dist}
Fix a stratum $g \in \mathcal{G}^{\ast}$ with augmented stratum size
$\tilde{n}_g^{\underline{\rho}_g}$ and augmented assignment space
$\Omega_g^{\underline{\rho}_g}$. Suppose
$Z_{g,1}, \ldots, Z_{g,\tilde{n}_g^{\underline{\rho}_g}}$ are mutually
independent Bernoulli random variables with
\begin{equation} \label{eq: submodel}
\log\left\{\dfrac{\pi_{g,i}}{1 - \pi_{g,i}}\right\} = \kappa_g + \log(\Gamma)\, u_{g,i}
\end{equation}
for some $\kappa_g \in \R$, $\bm{u}_g \coloneqq \left(u_{g,1}, \ldots,
u_{g,\tilde{n}_g^{\underline{\rho}_g}}\right)^{\top} \in
[0,1]^{\tilde{n}_g^{\underline{\rho}_g}}$, and $\Gamma \geq 1$. Then the
conditional distribution of $\bm{Z}_g$ given the event
$\bm{Z}_g \in \Omega_g^{\underline{\rho}_g}$ is
\begin{equation} \label{eq: submodel cond dist}
\Pr\left(\bm{Z}_g = \bm{z}_g^{\underline{\rho}_g} \given \bm{Z}_g \in
\Omega_g^{\underline{\rho}_g}\right) =
\dfrac{\Gamma^{\bm{z}_g^{\underline{\rho}_g \top}\bm{u}_g}}{\sum_{\bm{a}_g
\in \Omega_g^{\underline{\rho}_g}} \Gamma^{\bm{a}_g^{\top}\bm{u}_g}}
\end{equation}
for every $\bm{z}_g^{\underline{\rho}_g} \in \Omega_g^{\underline{\rho}_g}$.
\end{lem}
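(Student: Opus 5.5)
The plan is a direct computation from independence followed by cancellation of every factor that is common to all elements of $\Omega_g^{\underline{\rho}_g}$.

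\emph{Step 1: write each Bernoulli probability in exponential form.} Under \eqref{eq: submodel}, write $\pi_{g,i} = \exp(\kappa_g)\Gamma^{u_{g,i}}/\{1+\exp(\kappa_g)\Gamma^{u_{g,i}}\}$ and $1-\pi_{g,i} = 1/\{1+\exp(\kappa_g)\Gamma^{u_{g,i}}\}$. Then for any binary vector $\bm{z}_g$ of length $\tilde{n}_g^{\underline{\rho}_g}$, mutual independence gives
\begin{align*}
\Pr(\bm{Z}_g = \bm{z}_g) = \prod_{i} \pi_{g,i}^{z_{g,i}}(1-\pi_{g,i})^{1-z_{g,i}} = \dfrac{\exp\bigl(\kappa_g \sum_i z_{g,i}\bigr)\,\Gamma^{\bm{z}_g^{\top}\bm{u}_g}}{\prod_{i}\bigl\{1+\exp(\kappa_g)\Gamma^{u_{g,i}}\bigr\}}.
\end{align*}

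\emph{Step 2: specialize to the conditioning set.} For $\bm{z}_g \in \Omega_g^{\underline{\rho}_g}$ we have $\sum_i z_{g,i} = n_{g,1}$, so the factor $\exp(\kappa_g n_{g,1})$ is the same for every element of $\Omega_g^{\underline{\rho}_g}$. The denominator product does not depend on $\bm{z}_g$ at all. This cancellation is the key point, since it is what removes the nuisance intercept $\kappa_g$.

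\emph{Step 3: condition.} By the definition of conditional probability,
\begin{align*}
\Pr\bigl(\bm{Z}_g = \bm{z}_g^{\underline{\rho}_g} \given \bm{Z}_g \in \Omega_g^{\underline{\rho}_g}\bigr) = \Pr\bigl(\bm{Z}_g = \bm{z}_g^{\underline{\rho}_g}\bigr) \Big/ \sum_{\bm{a}_g \in \Omega_g^{\underline{\rho}_g}} \Pr(\bm{Z}_g = \bm{a}_g).
\end{align*}
Substituting the Step 1 expression and cancelling the common factors identified in Step 2 yields exactly \eqref{eq: submodel cond dist}.

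\emph{Well-definedness.} The denominator is positive because $\Omega_g^{\underline{\rho}_g}$ is nonempty, which holds since $0 \le n_{g,1} \le \tilde{n}_g^{\underline{\rho}_g}$. It is also positive because every $\pi_{g,i}$ lies in $(0,1)$ for finite $\kappa_g$.

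No step poses a real obstacle. The only care needed is in Step 2: $\Gamma^{\bm{z}_g^{\top}\bm{u}_g}$ varies with $\bm{z}_g$, whereas the $\kappa_g$ term is constant precisely because the treated count is fixed. This is the same calculation as in \citet{rosenbaum1995a}, applied to the augmented space.
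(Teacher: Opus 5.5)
Your proof is correct and follows the paper's argument essentially step for step. You write each $\pi_{g,i}$ in logistic form and use independence to factor the joint mass function into a $\bm{z}_g$-free product, the intercept term $\exp(\kappa_g \sum_i z_{g,i})$, and $\Gamma^{\bm{z}_g^{\top}\bm{u}_g}$; you then cancel the first two factors in the conditional-probability ratio because every element of $\Omega_g^{\underline{\rho}_g}$ has exactly $n_{g,1}$ ones. Your explicit well-definedness check (nonempty $\Omega_g^{\underline{\rho}_g}$ and $\pi_{g,i} \in (0,1)$) is a small addition that the paper leaves implicit.
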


\begin{proof}
From \eqref{eq: submodel}, the odds that unit $i$ in stratum $g$ is
treated are
$\pi_{g,i}/(1 - \pi_{g,i}) = \exp\left(\kappa_g + \log(\Gamma)\,
u_{g,i}\right)$, so
\begin{equation*}
\pi_{g,i} = \dfrac{\exp\left(\kappa_g + \log(\Gamma)\,
u_{g,i}\right)}{1 + \exp\left(\kappa_g + \log(\Gamma)\, u_{g,i}\right)}.
\end{equation*}
For any $\bm{z}_g \in \{0,1\}^{\tilde{n}_g^{\underline{\rho}_g}}$,
independence of the $Z_{g,i}$ across $i = 1, \ldots,
\tilde{n}_g^{\underline{\rho}_g}$ within stratum $g$ gives
\begin{align}
\Pr\left(\bm{Z}_g = \bm{z}_g\right)
& = \prod_{i=1}^{\tilde{n}_g^{\underline{\rho}_g}}
\pi_{g,i}^{z_{g,i}}\left(1 - \pi_{g,i}\right)^{1 - z_{g,i}} \notag \\
& = \prod_{i=1}^{\tilde{n}_g^{\underline{\rho}_g}}
\dfrac{\exp\left(\kappa_g + \log(\Gamma)\, u_{g,i}\right)^{z_{g,i}}}{1 +
\exp\left(\kappa_g + \log(\Gamma)\, u_{g,i}\right)}. \label{eq: joint prob
product}
\end{align}
We now separate each factor in the numerator of \eqref{eq: joint prob
product}. Writing $\exp\left(\kappa_g + \log(\Gamma)\,
u_{g,i}\right)^{z_{g,i}} = \exp\left(\kappa_g\, z_{g,i}\right)
\Gamma^{u_{g,i}\, z_{g,i}}$ and taking the product over
$i = 1, \ldots, \tilde{n}_g^{\underline{\rho}_g}$ yields
\begin{equation} \label{eq: joint prob separated}
\Pr\left(\bm{Z}_g = \bm{z}_g\right) =
\left\{\prod_{i = 1}^{\tilde{n}_g^{\underline{\rho}_g}} \dfrac{1}{1 +
\exp\left(\kappa_g + \log(\Gamma)\, u_{g,i}\right)}\right\}
\exp\left(\kappa_g \sum_{i=1}^{\tilde{n}_g^{\underline{\rho}_g}}
z_{g,i}\right) \Gamma^{\bm{z}_g^{\top}\bm{u}_g}.
\end{equation}
By the definition of conditional probability,
\begin{equation} \label{eq: cond prob ratio}
\Pr\left(\bm{Z}_g = \bm{z}_g^{\underline{\rho}_g} \given \bm{Z}_g \in
\Omega_g^{\underline{\rho}_g}\right) =
\dfrac{\Pr\left(\bm{Z}_g =
\bm{z}_g^{\underline{\rho}_g}\right)}{\sum_{\bm{a}_g \in
\Omega_g^{\underline{\rho}_g}} \Pr\left(\bm{Z}_g = \bm{a}_g\right)}.
\end{equation}
Two factors in \eqref{eq: joint prob separated} are common to every
$\bm{a}_g \in \Omega_g^{\underline{\rho}_g}$. The first is the product
\begin{align*}
\prod_{i = 1}^{\tilde{n}_g^{\underline{\rho}_g}} \dfrac{1}{1 +
\exp\left(\kappa_g + \log(\Gamma)\, u_{g,i}\right)},
\end{align*}
which depends on $\kappa_g$ and $\bm{u}_g$ but not on $\bm{z}_g$. The
second is the intercept term: Every
$\bm{a}_g \in \Omega_g^{\underline{\rho}_g}$ satisfies
$\sum_{i = 1}^{\tilde{n}_g^{\underline{\rho}_g}} a_{g,i} = n_{g,1}$, so
$\exp\left(\kappa_g \sum_i a_{g,i}\right) = \exp\left(\kappa_g\,
n_{g,1}\right)$ takes the same value for every element of
$\Omega_g^{\underline{\rho}_g}$. Because both factors appear identically in
every term of the numerator and denominator of \eqref{eq: cond prob ratio},
they cancel, leaving
\begin{equation*}
\Pr\left(\bm{Z}_g = \bm{z}_g^{\underline{\rho}_g} \given \bm{Z}_g \in
\Omega_g^{\underline{\rho}_g}\right) =
\dfrac{\Gamma^{\bm{z}_g^{\underline{\rho}_g \top}\bm{u}_g}}{\sum_{\bm{a}_g
\in \Omega_g^{\underline{\rho}_g}} \Gamma^{\bm{a}_g^{\top}\bm{u}_g}},
\end{equation*}
which is \eqref{eq: submodel cond dist}.
\end{proof}

\begin{lem} \label{lem: submodel equiv}
The set of conditional distributions on $\Omega_g^{\underline{\rho}_g}$
induced by the parametric submodel \eqref{eq: submodel} as $\bm{u}_g$
ranges over $[0,1]^{\tilde{n}_g^{\underline{\rho}_g}}$ and $\kappa_g$
ranges over $\R$ equals the set of conditional distributions on
$\Omega_g^{\underline{\rho}_g}$ induced by mutually independent Bernoulli
assignments $\{\pi_{g,i}\}_{i=1}^{\tilde{n}_g^{\underline{\rho}_g}}$
satisfying the bound in equation~(1) of the manuscript.
\end{lem}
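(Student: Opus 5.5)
The plan is to prove the two set inclusions separately. In both directions, Lemma~\ref{lem: submodel cond dist} and its elementary analogue for general Bernoulli probabilities are used to reduce each conditional distribution on $\Omega_g^{\underline{\rho}_g}$ to a function of the odds vector alone.

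The first step is the general analogue. For arbitrary independent Bernoulli $\pi_{g,i}\in(0,1)$, write $\theta_{g,i}\coloneqq \pi_{g,i}/(1-\pi_{g,i})$. Then $\Pr(\bm{Z}_g=\bm{z}_g)=\prod_i(1-\pi_{g,i})\prod_i \theta_{g,i}^{z_{g,i}}$. After conditioning on $\Omega_g^{\underline{\rho}_g}$, the first product cancels, leaving
\begin{align*}
\Pr(\bm{Z}_g=\bm{z}_g\given \bm{Z}_g\in\Omega_g^{\underline{\rho}_g})
=\frac{\prod_i\theta_{g,i}^{z_{g,i}}}{\sum_{\bm{a}_g\in\Omega_g^{\underline{\rho}_g}}\prod_i\theta_{g,i}^{a_{g,i}}}.
\end{align*}
Because every element of $\Omega_g^{\underline{\rho}_g}$ has exactly $n_{g,1}$ ones, this expression is invariant to rescaling all $\theta_{g,i}$ by a common constant $c>0$, since a factor $c^{n_{g,1}}$ cancels. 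This scale invariance is the device the whole proof relies on.

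For the inclusion submodel $\subseteq$ general class, I take any $\kappa_g\in\R$ and $\bm{u}_g\in[0,1]^{\tilde n_g^{\underline{\rho}_g}}$. The induced odds are $\theta_{g,i}=e^{\kappa_g}\Gamma^{u_{g,i}}$, so $\theta_{g,i}/\theta_{g,j}=\Gamma^{u_{g,i}-u_{g,j}}$. Since $u_{g,i}-u_{g,j}\in[-1,1]$ and $\Gamma\ge1$, this ratio lies in $[1/\Gamma,\Gamma]$, which is exactly equation~(1). The induced $\pi_{g,i}$ lie in $(0,1)$, so the submodel's conditional distribution, which is \eqref{eq: submodel cond dist} by Lemma~\ref{lem: submodel cond dist}, belongs to the general class.

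For the reverse inclusion, I start from any $\{\pi_{g,i}\}$ satisfying equation~(1) and set $m\coloneqq\min_j\theta_{g,j}>0$. Equation~(1) gives $1\le\theta_{g,i}/m\le\Gamma$ for every $i$. When $\Gamma>1$, I define $u_{g,i}\coloneqq\log(\theta_{g,i}/m)/\log\Gamma\in[0,1]$ and $\kappa_g\coloneqq\log m$. Then $\theta_{g,i}=e^{\kappa_g}\Gamma^{u_{g,i}}$ exactly, so the submodel reproduces the same marginal odds and hence the same conditional distribution. When $\Gamma=1$, equation~(1) forces all odds to be equal, and both classes reduce to the uniform distribution on $\Omega_g^{\underline{\rho}_g}$ (as in Lemma~\ref{lem: supp gamma one}). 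Any $\bm{u}_g$ together with $\kappa_g=\log m$ then works.

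The only step needing care is this degenerate $\Gamma=1$ case, where $\log\Gamma=0$ makes the formula for $u_{g,i}$ undefined, so it has to be handled separately rather than by the general construction. I would close by noting that the statement equates classes of \emph{conditional} distributions, so matching conditional laws suffices. The explicit reparametrization above already matches the marginals too, which makes the argument slightly stronger than needed. The same construction also shows that the bound in equation~(1) is attained sharply by any $\bm{u}_g$ with coordinates in $\{0,1\}$, which is what the proof of Lemma~\ref{lem: prob bounds} will exploit.
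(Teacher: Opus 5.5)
Your proof is correct and follows the paper's argument essentially step for step. The forward inclusion uses $\Gamma^{u_{g,i}-u_{g,j}} \in [1/\Gamma,\Gamma]$; the reverse inclusion anchors $\kappa_g$ at the minimum log-odds, sets $u_{g,i}$ to the rescaled log-odds gap, and treats $\Gamma = 1$ separately, so the marginals (hence the conditional law) match exactly and your scale-invariance preamble is never actually used. One small imprecision in your closing aside: a $\bm{u}_g \in \{0,1\}^{\tilde{n}_g^{\underline{\rho}_g}}$ attains the bound only if it has at least one coordinate equal to $0$ and one equal to $1$, since the constant vectors give uniform assignment.
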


\begin{proof}
We establish both inclusions.

\emph{Submodel $\subseteq$ $\Gamma$-class.} Under \eqref{eq: submodel}, the
odds $\pi_{g,i}/(1-\pi_{g,i}) = \exp(\kappa_g) \Gamma^{u_{g,i}}$, so for any
$i, j \in \{1, \ldots, \tilde{n}_g^{\underline{\rho}_g}\}$,
\begin{equation*}
\dfrac{\pi_{g,i}/(1-\pi_{g,i})}{\pi_{g,j}/(1-\pi_{g,j})} = \Gamma^{u_{g,i} -
u_{g,j}}.
\end{equation*}
Because $u_{g,i}, u_{g,j} \in [0,1]$, the exponent $u_{g,i} - u_{g,j} \in
[-1, 1]$, so the ratio lies in $[1/\Gamma, \Gamma]$. The Bernoulli
assignments $\{\pi_{g,i}\}$ therefore satisfy the bound in equation~(1) of
the manuscript.

\emph{$\Gamma$-class $\subseteq$ Submodel.} Let
$(\pi_{g,1}, \ldots, \pi_{g,\tilde{n}_g^{\underline{\rho}_g}})$ be any
tuple of Bernoulli probabilities satisfying the bound in equation~(1) of
the manuscript. We construct $(\kappa_g, \bm{u}_g)$ with $\bm{u}_g \in
[0,1]^{\tilde{n}_g^{\underline{\rho}_g}}$ such that \eqref{eq: submodel}
recovers every $\pi_{g,i}$ in this tuple, thereby showing that the
conditional distribution on $\Omega_g^{\underline{\rho}_g}$ induced by the
given tuple is also induced by the submodel. If $\Gamma = 1$, the bound
forces all $\pi_{g,i}$ to be equal; set $\kappa_g \coloneqq
\log\left\{\pi_{g,1}/(1-\pi_{g,1})\right\}$ and $u_{g,i} \coloneqq 0$ for
all $i$. For $\Gamma > 1$, let $\pi_{g,\min} \coloneqq \min_{i}
\pi_{g,i}$, set $\kappa_g \coloneqq
\log\left\{\pi_{g,\min}/(1-\pi_{g,\min})\right\}$, and define
\begin{equation*}
u_{g,i} \coloneqq
\dfrac{\log\left\{\pi_{g,i}/(1-\pi_{g,i})\right\} -
\log\left\{\pi_{g,\min}/(1-\pi_{g,\min})\right\}}{\log(\Gamma)}.
\end{equation*}
Since $\log(\Gamma) > 0$ and $\pi_{g,i} \geq \pi_{g,\min}$, the numerator
is non-negative, so $u_{g,i} \geq 0$. The upper bound in equation~(1) of
the manuscript, applied with $j$ equal to the index attaining
$\pi_{g,\min}$, gives
\begin{equation*}
\log\left\{\pi_{g,i}/(1-\pi_{g,i})\right\} -
\log\left\{\pi_{g,\min}/(1-\pi_{g,\min})\right\} \leq \log(\Gamma),
\end{equation*}
so $u_{g,i} \leq 1$. Hence $\bm{u}_g \in
[0,1]^{\tilde{n}_g^{\underline{\rho}_g}}$. By construction, $\kappa_g +
\log(\Gamma)\, u_{g,i} = \log\left\{\pi_{g,i}/(1-\pi_{g,i})\right\}$, so
\eqref{eq: submodel} holds for each $i = 1, \ldots,
\tilde{n}_g^{\underline{\rho}_g}$. Because both parametrizations specify
the same marginal probabilities $\{\pi_{g,i}\}$, they induce the same
joint distribution on $\{0,1\}^{\tilde{n}_g^{\underline{\rho}_g}}$ and
hence the same conditional distribution on
$\Omega_g^{\underline{\rho}_g}$. Since the starting tuple was arbitrary,
every conditional distribution in the $\Gamma$-class belongs to the
submodel class.
\end{proof}

\begin{rmk} \label{rem: sharp attainment}
The submodel attains the $\Gamma$-bound sharply: Setting $u_{g,i} = 1$ and
$u_{g,j} = 0$ yields an odds ratio of exactly $\Gamma$, so the class of
conditional distributions generated by the submodel includes mechanisms at
the boundary of the $\Gamma$-class.
\end{rmk}

We now restate and prove Lemma 1 of the manuscript, drawing on Lemmas \ref{lem: submodel cond dist} and \ref{lem: submodel equiv}.

\begin{lem} \label{lem: prob bounds restate}
Under the restriction on the assignment model in equation~(1) of the manuscript, the lower ($\underline{p}$) and upper ($\overline{p}$) bounds on the conditional probability of any $\bm{z}_g^{\underline{\rho}_g} \in \Omega_g^{\underline{\rho}_g}$ for $\Gamma \geq 1$ are
\begin{align} \label{eq: lb cond z prob}
\underline{p}\left(\bm{z}_g^{\underline{\rho}_g}; \Gamma\right) & =
\dfrac{1}{\sum_{\bm{a}_g \in \Omega_g^{\underline{\rho}_g}}
\Gamma^{\bm{a}_g^{\top}\left(\bm{1}-\bm{z}_g^{\underline{\rho}_g}\right)}},
\\[0.75em]
\overline{p}\left(\bm{z}_g^{\underline{\rho}_g}; \Gamma\right) & =
\dfrac{\Gamma^{n_{g,1}}}{\sum_{\bm{a}_g \in \Omega_g^{\underline{\rho}_g}}
\Gamma^{\bm{a}_g^{\top}\bm{z}_g^{\underline{\rho}_g}}}. \label{eq: ub cond
z prob}
\end{align}
\end{lem}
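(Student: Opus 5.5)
By \Cref{lem: submodel equiv}, the class of conditional distributions on $\Omega_g^{\underline{\rho}_g}$ allowed by equation~(1) of the manuscript is exactly the class produced by the submodel \eqref{eq: submodel} as $\bm{u}_g$ ranges over $[0,1]^{\tilde{n}_g^{\underline{\rho}_g}}$. So we only need to bound the expression in \eqref{eq: submodel cond dist} over that cube. Fix $\bm{z} \coloneqq \bm{z}_g^{\underline{\rho}_g}$. Dividing the numerator and denominator of \eqref{eq: submodel cond dist} by $\Gamma^{\bm{z}^{\top}\bm{u}_g}$ gives
\begin{align*}
\Pr\left(\bm{Z}_g = \bm{z} \given \bm{Z}_g \in \Omega_g^{\underline{\rho}_g}\right) = \left\{\sum_{\bm{a}_g \in \Omega_g^{\underline{\rho}_g}} \Gamma^{(\bm{a}_g - \bm{z})^{\top}\bm{u}_g}\right\}^{-1}.
\end{align*}
Bounding the probability is therefore the same as bounding each exponent $(\bm{a}_g - \bm{z})^{\top}\bm{u}_g$. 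The key point is that a single $\bm{u}_g$ makes every term extreme at once.

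Write $(\bm{a}_g - \bm{z})^{\top}\bm{u}_g = \sum_{i: z_i = 0} a_{g,i}u_{g,i} - \sum_{i: z_i = 1}(1 - a_{g,i})u_{g,i}$. Every coordinate coefficient is nonnegative where $z_i = 0$ and nonpositive where $z_i = 1$. Hence the choice $\bm{u}_g = \bm{1} - \bm{z}$ simultaneously maximizes every exponent over the cube, and the maximum of $(\bm{a}_g-\bm{z})^{\top}\bm{u}_g$ is $\bm{a}_g^{\top}(\bm{1}-\bm{z})$. Substituting this maximizer gives the minimum probability, which is \eqref{eq: lb cond z prob}.

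Likewise, $\bm{u}_g = \bm{z}$ simultaneously minimizes every exponent. The minimum of $(\bm{a}_g-\bm{z})^{\top}\bm{u}_g$ is $-\sum_{i: z_i=1}(1-a_{g,i}) = \bm{a}_g^{\top}\bm{z} - n_{g,1}$, where we use $\bm{z}^{\top}\bm{z} = n_{g,1}$. Substituting gives the maximum probability $\Gamma^{n_{g,1}} / \sum_{\bm{a}_g}\Gamma^{\bm{a}_g^{\top}\bm{z}}$, which is \eqref{eq: ub cond z prob}.

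Both extremes are attained at vertices $\bm{u}_g \in \{0,1\}^{\tilde{n}_g^{\underline{\rho}_g}}$, and these lie in the submodel class by \Cref{rem: sharp attainment}. So the bounds are sharp, not just valid.

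The main obstacle is confirming that no single $\bm{u}_g$ must trade off between terms, since the sum's reciprocal is monotone in each exponent. The coordinatewise sign argument settles this. The only care needed is to check the sign pattern of $a_{g,i} - z_i$ on coordinates with $z_i = 1$, where $a_{g,i} - 1 \le 0$, and to verify the special case $\Gamma = 1$, where both bounds reduce to $1/\abs{\Omega_g^{\underline{\rho}_g}}$.
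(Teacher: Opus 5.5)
Your proposal is correct and follows the paper's proof essentially step for step. Like the paper, you reduce to the submodel via the class-equivalence lemma and rewrite the conditional probability as $\bigl\{\sum_{\bm{a}_g}\Gamma^{(\bm{a}_g-\bm{z})^{\top}\bm{u}_g}\bigr\}^{-1}$; the coordinatewise sign pattern of $a_{g,i}-z_{g,i}$ then shows that $\bm{u}_g=\bm{1}-\bm{z}$ and $\bm{u}_g=\bm{z}$ extremize every term simultaneously. Your remarks on attainment at vertices and the $\Gamma=1$ check are harmless additions.
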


\begin{proof}
By \Cref{lem: submodel equiv}, every conditional distribution on
$\Omega_g^{\underline{\rho}_g}$ that satisfies equation~(1) of the
manuscript can be generated by the parametric submodel \eqref{eq: submodel}
for some $\bm{u}_g \in [0,1]^{\tilde{n}_g^{\underline{\rho}_g}}$.
Bounding the conditional probability over the submodel therefore bounds it
over the entire class. By \Cref{lem: submodel cond dist}, the conditional
probability under the submodel takes the form
\begin{equation} \label{eq: cond prob u form}
\Pr\left(\bm{Z}_g = \bm{z}_g^{\underline{\rho}_g} \given \bm{Z}_g \in
\Omega_g^{\underline{\rho}_g}\right) =
\dfrac{\Gamma^{\bm{z}_g^{\underline{\rho}_g \top}\bm{u}_g}}{\sum_{\bm{a}_g
\in \Omega_g^{\underline{\rho}_g}} \Gamma^{\bm{a}_g^{\top}\bm{u}_g}}.
\end{equation}
It remains to optimize \eqref{eq: cond prob u form} over
$\bm{u}_g \in [0,1]^{\tilde{n}_g^{\underline{\rho}_g}}$.

\emph{Lower bound.} Dividing numerator and denominator of
\eqref{eq: cond prob u form} by
$\Gamma^{\bm{z}_g^{\underline{\rho}_g \top}\bm{u}_g}$ gives
\begin{equation} \label{eq: cond prob rewritten}
\Pr\left(\bm{Z}_g = \bm{z}_g^{\underline{\rho}_g} \given \bm{Z}_g \in
\Omega_g^{\underline{\rho}_g}\right) = \dfrac{1}{\sum_{\bm{a}_g \in
\Omega_g^{\underline{\rho}_g}} \Gamma^{\left(\bm{a}_g -
\bm{z}_g^{\underline{\rho}_g}\right)^{\top}\bm{u}_g}}.
\end{equation}
Minimizing the left side over $\bm{u}_g$ is equivalent to maximizing the
denominator on the right side. Writing the denominator as a product over
$i = 1, \ldots, \tilde{n}_g^{\underline{\rho}_g}$ within stratum $g$,
\begin{equation*}
\sum_{\bm{a}_g \in \Omega_g^{\underline{\rho}_g}}
\Gamma^{\left(\bm{a}_g -
\bm{z}_g^{\underline{\rho}_g}\right)^{\top}\bm{u}_g}
= \sum_{\bm{a}_g \in \Omega_g^{\underline{\rho}_g}}
\prod_{i=1}^{\tilde{n}_g^{\underline{\rho}_g}}
\Gamma^{\left(a_{g,i} - z_{g,i}^{\underline{\rho}_g}\right) u_{g,i}}.
\end{equation*}
For each $\bm{a}_g \in \Omega_g^{\underline{\rho}_g}$ and each
$i = 1, \ldots, \tilde{n}_g^{\underline{\rho}_g}$ within stratum $g$, the
coefficient $a_{g,i} - z_{g,i}^{\underline{\rho}_g} \in \{-1, 0, 1\}$, and
its sign is determined by $z_{g,i}^{\underline{\rho}_g}$ alone:
\begin{itemize}
  \item If $z_{g,i}^{\underline{\rho}_g} = 0$, then $a_{g,i} -
  z_{g,i}^{\underline{\rho}_g} = a_{g,i} \in \{0,1\}$, a non-negative
  coefficient.
  \item If $z_{g,i}^{\underline{\rho}_g} = 1$, then $a_{g,i} -
  z_{g,i}^{\underline{\rho}_g} = a_{g,i} - 1 \in \{-1, 0\}$, a
  non-positive coefficient.
\end{itemize}
Since $\Gamma \geq 1$, the choice $\bm{u}_g = \bm{1} - \bm{z}_g^{\underline{\rho}_g}$ maximizes every factor $\Gamma^{(a_{g,i} - z_{g,i}^{\underline{\rho}_g})u_{g,i}}$ simultaneously. Under this choice, $u_{g,i} = 1$ whenever $z_{g,i}^{\underline{\rho}_g} = 0$ and $u_{g,i} = 0$ whenever $z_{g,i}^{\underline{\rho}_g} = 1$, for all $\bm{a}_g$ and all $i = 1,\ldots,\tilde{n}_g^{\underline{\rho}_g}$ in stratum $g$. A common coordinate-wise maximizer of each summand's factored form is a maximizer of the sum, so $\bm{u}_g = \bm{1} - \bm{z}_g^{\underline{\rho}_g}$ maximizes the denominator of \eqref{eq: cond prob rewritten}.

Substituting $\bm{u}_g = \bm{1} - \bm{z}_g^{\underline{\rho}_g}$ into
\eqref{eq: cond prob u form}, the numerator becomes
\begin{equation*}
\Gamma^{\bm{z}_g^{\underline{\rho}_g \top}\left(\bm{1} -
\bm{z}_g^{\underline{\rho}_g}\right)} = \Gamma^{\sum_{i}
z_{g,i}^{\underline{\rho}_g}\left(1 - z_{g,i}^{\underline{\rho}_g}\right)}
= \Gamma^0 = 1,
\end{equation*}
where the middle equality uses $z_{g,i}^{\underline{\rho}_g}\left(1 -
z_{g,i}^{\underline{\rho}_g}\right) = 0$ because
$z_{g,i}^{\underline{\rho}_g} \in \{0,1\}$. The denominator becomes
$\sum_{\bm{a}_g \in \Omega_g^{\underline{\rho}_g}}
\Gamma^{\bm{a}_g^{\top}\left(\bm{1} -
\bm{z}_g^{\underline{\rho}_g}\right)}$, yielding \eqref{eq: lb cond z prob}.

\emph{Upper bound.} By the parallel argument, maximizing
\eqref{eq: cond prob u form} over $\bm{u}_g$ is equivalent to minimizing
the denominator of \eqref{eq: cond prob rewritten}. The choice $\bm{u}_g = \bm{z}_g^{\underline{\rho}_g}$ minimizes every factor $\Gamma^{(a_{g,i} - z_{g,i}^{\underline{\rho}_g})u_{g,i}}$ simultaneously. Under this choice, $u_{g,i} = 1$ whenever $z_{g,i}^{\underline{\rho}_g} = 1$ and $u_{g,i} = 0$ whenever $z_{g,i}^{\underline{\rho}_g} = 0$, for all $\bm{a}_g$ and all $i = 1,\ldots,\tilde{n}_g^{\underline{\rho}_g}$ in stratum $g$. Indices with coefficient $a_{g,i}-1 \leq 0$ therefore receive $u_{g,i} = 1$, which makes the exponent as negative as possible. Indices with coefficient $a_{g,i} \geq 0$ receive $u_{g,i} = 0$, so their exponent contribution is zero. A common coordinate-wise minimizer of each summand's factored form is a minimizer of the sum, so $\bm{u}_g = \bm{z}_g^{\underline{\rho}_g}$ minimizes the denominator of \eqref{eq: cond prob rewritten}.

Substituting $\bm{u}_g = \bm{z}_g^{\underline{\rho}_g}$ into
\eqref{eq: cond prob u form}, the numerator becomes
\begin{equation*}
\Gamma^{\bm{z}_g^{\underline{\rho}_g \top}
\bm{z}_g^{\underline{\rho}_g}} = \Gamma^{\sum_{i}
\left(z_{g,i}^{\underline{\rho}_g}\right)^2} = \Gamma^{n_{g,1}},
\end{equation*}
using $\left(z_{g,i}^{\underline{\rho}_g}\right)^2 =
z_{g,i}^{\underline{\rho}_g}$ and
$\sum_{i = 1}^{\tilde{n}_g^{\underline{\rho}_g}}
z_{g,i}^{\underline{\rho}_g} = n_{g,1}$. The denominator becomes
$\sum_{\bm{a}_g \in \Omega_g^{\underline{\rho}_g}}
\Gamma^{\bm{a}_g^{\top} \bm{z}_g^{\underline{\rho}_g}}$, yielding
\eqref{eq: ub cond z prob}.
\end{proof}

\paragraph{Reduction to Fogarty's one-per-stratum case.} \Cref{lem: prob bounds restate} generalizes the probability bounds from the case of one minority-civilian encounter per stratum considered by \citet{fogarty2023} to arbitrary post-stratified designs. We verify the reduction. In Fogarty's special case, $\abs{\Omega_g^{\underline{\rho}_g}} = \tilde{n}_g^{\underline{\rho}_g}$, and the inner products $\bm{a}_g^{\top}(\bm{1} - \bm{z}_g^{\underline{\rho}_g})$ and $\bm{a}_g^{\top}\bm{z}_g^{\underline{\rho}_g}$ take only the values $0$ and $1$. If $\bm{a}_g \neq \bm{z}_g^{\underline{\rho}_g}$, then $\bm{a}_g^{\top}(\bm{1} - \bm{z}_g^{\underline{\rho}_g}) = 1$, since $\bm{z}_g^{\underline{\rho}_g}$ places its single $1$ on exactly one unit; for the unique $\bm{a}_g = \bm{z}_g^{\underline{\rho}_g}$, we instead have $\bm{a}_g^{\top}(\bm{1} - \bm{z}_g^{\underline{\rho}_g}) = 0$. Substituting into the lower bound from \Cref{lem: prob bounds restate} yields
\begin{align}\label{eq: supp lb cond z prob fogarty}
\underline{p}\left(\bm{z}_g^{\underline{\rho}_g}; \Gamma\right) & = \dfrac{1}{\sum_{\bm{a}_g \in \Omega_g^{\underline{\rho}_g}} \Gamma^{\bm{a}_g^{\top}\left(\bm{1}-\bm{z}_g^{\underline{\rho}_g}\right)}} = \dfrac{1}{\Gamma (\tilde{n}_g^{\underline{\rho}_g} - 1) + 1},
\end{align}
which matches the left-hand side of the bound in (5) of \citet[][p.~2201]{fogarty2023}. A parallel simplification occurs for the upper bound. With one minority-civilian encounter, $\bm{a}_g^{\top}\bm{z}_g^{\underline{\rho}_g} = 0$ for all $\bm{a}_g \neq \bm{z}_g^{\underline{\rho}_g}$ and equals $1$ only for $\bm{a}_g = \bm{z}_g^{\underline{\rho}_g}$. Substituting yields
\begin{align}\label{eq: supp ub cond z prob fogarty}
\overline{p}\left(\bm{z}_g^{\underline{\rho}_g}; \Gamma\right) & = \dfrac{\Gamma^{n_{g,1}}}{\sum_{\bm{a}_g \in \Omega_g^{\underline{\rho}_g}} \Gamma^{\bm{a}_g^{\top}\bm{z}_g^{\underline{\rho}_g}}} = \dfrac{\Gamma}{(\tilde{n}_g^{\underline{\rho}_g} - 1) + \Gamma},
\end{align}
matching the right-hand side of the bound in (5) of \citet[][p.~2201]{fogarty2023}.

\subsection{Efficient Computation of Probability Bounds}\label{sec: supp efficient computation}

The lower and upper probability bounds in equations~(10) and~(11) of the manuscript are sums over all assignment vectors in $\Omega_g^{\underline{\rho}_g}$. Because $\abs{\Omega_g^{\underline{\rho}_g}} = \binom{\tilde{n}_g}{n_{g,1}}$ is large, direct enumeration is computationally prohibitive. We now show that both denominators can be computed in closed form.

\begin{prop}\label{prop: supp efficient denom}
For any $\bm{z}_g^{\underline{\rho}_g} \in \Omega_g^{\underline{\rho}_g}$ with $n_{g,1}$ treated units and $\tilde{n}_g - n_{g,1}$ control units, the denominators of the lower and upper probability bounds are
\begin{align}
\sum_{\bm{a}_g \in \Omega_g^{\underline{\rho}_g}} \Gamma^{\bm{a}_g^{\top}(\bm{1} - \bm{z}_g^{\underline{\rho}_g})} &= \sum_{j=0}^{n_{g,1}} \binom{n_{g,1}}{j}\binom{\tilde{n}_g - n_{g,1}}{n_{g,1} - j} \Gamma^{n_{g,1} - j}, \label{eq: supp denom lb} \\[0.5em]
\sum_{\bm{a}_g \in \Omega_g^{\underline{\rho}_g}} \Gamma^{\bm{a}_g^{\top}\bm{z}_g^{\underline{\rho}_g}} &= \sum_{j=0}^{n_{g,1}} \binom{n_{g,1}}{j}\binom{\tilde{n}_g - n_{g,1}}{n_{g,1} - j} \Gamma^{j}. \label{eq: supp denom ub}
\end{align}
\end{prop}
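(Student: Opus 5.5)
Both identities follow from grouping the assignments in $\Omega_g^{\underline{\rho}_g}$ by a single integer, namely how many of the observed treated positions they reuse. Fix $\bm{z}_g^{\underline{\rho}_g}$ and let $T \coloneqq \{i : z_{g,i}^{\underline{\rho}_g} = 1\}$ and $C \coloneqq \{i : z_{g,i}^{\underline{\rho}_g} = 0\}$. By construction of $\Omega_g^{\underline{\rho}_g}$, $\abs{T} = n_{g,1}$ and $\abs{C} = \tilde{n}_g - n_{g,1}$. Every $\bm{a}_g \in \Omega_g^{\underline{\rho}_g}$ is the indicator of an $n_{g,1}$-subset $A$ of the $\tilde{n}_g$ units. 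Its two inner products are then
\begin{align*}
\bm{a}_g^{\top}\bm{z}_g^{\underline{\rho}_g} = \abs{A \cap T}, \qquad \bm{a}_g^{\top}(\bm{1} - \bm{z}_g^{\underline{\rho}_g}) = \abs{A \cap C} = n_{g,1} - \abs{A \cap T}.
\end{align*}
Both summands therefore depend on $\bm{a}_g$ only through $j \coloneqq \abs{A \cap T}$.

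The main step is to partition $\Omega_g^{\underline{\rho}_g}$ into the classes $\{\bm{a}_g : \abs{A \cap T} = j\}$ for $j = 0, \ldots, n_{g,1}$ and count each class. An assignment with $\abs{A \cap T} = j$ is determined by choosing $j$ of the $n_{g,1}$ units in $T$ and $n_{g,1} - j$ of the $\tilde{n}_g - n_{g,1}$ units in $C$. The two choices are independent, so the class has exactly $\binom{n_{g,1}}{j}\binom{\tilde{n}_g - n_{g,1}}{n_{g,1} - j}$ elements. We use the convention $\binom{m}{k} = 0$ for $k > m$ or $k < 0$, which automatically removes infeasible $j$ when $\tilde{n}_g - n_{g,1} < n_{g,1}$. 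Because the classes are disjoint and exhaust $\Omega_g^{\underline{\rho}_g}$, summing the constant summand over each class gives the two identities. The summand is $\Gamma^{n_{g,1} - j}$ for \eqref{eq: supp denom lb} and $\Gamma^{j}$ for \eqref{eq: supp denom ub}.

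As a sanity check, setting $\Gamma = 1$ should recover $\sum_j \binom{n_{g,1}}{j}\binom{\tilde{n}_g - n_{g,1}}{n_{g,1} - j} = \binom{\tilde{n}_g}{n_{g,1}} = \abs{\Omega_g^{\underline{\rho}_g}}$, which is Vandermonde's identity. Setting $n_{g,1} = 1$ should reproduce the denominators in \eqref{eq: supp lb cond z prob fogarty} and \eqref{eq: supp ub cond z prob fogarty}.

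The only step needing care is the bijective count of each class, in particular confirming that the boundary cases contribute zero rather than being double-counted. Apart from that, the argument is a direct regrouping of a finite sum.
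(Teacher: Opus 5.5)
Your proof is correct and is essentially the paper's argument: both group the assignments by the overlap $j = \bm{a}_g^{\top}\bm{z}_g^{\underline{\rho}_g}$, count each class as $\binom{n_{g,1}}{j}\binom{\tilde{n}_g - n_{g,1}}{n_{g,1} - j}$, and use $\bm{a}_g^{\top}(\bm{1} - \bm{z}_g^{\underline{\rho}_g}) = n_{g,1} - j$ to read off the exponent. Your explicit convention that out-of-range binomial coefficients vanish, and your Vandermonde and $n_{g,1} = 1$ sanity checks, are small additions that the paper leaves implicit.
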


\begin{proof}
For the lower bound denominator, observe that $\bm{a}_g^{\top}(\bm{1} - \bm{z}_g^{\underline{\rho}_g})$ counts the number of units assigned to the treatment condition in $\bm{a}_g$ but not in $\bm{z}_g^{\underline{\rho}_g}$ --- i.e., the number of positions at which $\bm{a}_g$ places a $1$ where $\bm{z}_g^{\underline{\rho}_g}$ has a $0$. Let $j$ denote the number of positions at which both $\bm{a}_g$ and $\bm{z}_g^{\underline{\rho}_g}$ have a $1$ (i.e., the overlap). Then $\bm{a}_g^{\top}(\bm{1} - \bm{z}_g^{\underline{\rho}_g}) = n_{g,1} - j$, since $\bm{a}_g$ has $n_{g,1}$ ones in total and $j$ of them coincide with those of $\bm{z}_g^{\underline{\rho}_g}$.

Since both $\bm{a}_g$ and $\bm{z}_g^{\underline{\rho}_g}$ contain exactly $n_{g,1}$ ones, the number of overlapping ones $j = \bm{a}_g^\top \bm{z}_g^{\underline{\rho}_g}$ must lie between $0$ and $n_{g,1}$. For a given overlap $j \in \{0,\ldots,n_{g,1}\}$, consider assignment vectors $\bm{a}_g \in \Omega_g^{\underline{\rho}_g}$ with exactly $j$ positions at which both $\bm a_g$ and $\bm z_g^{\underline{\rho}_g}$ equal $1$. Since $\bm z_g^{\underline{\rho}_g}$ has $n_{g,1}$ ones, there are $n_{g,1}$ positions at which an overlap can occur, and choosing the $j$ overlapping positions can be done in $\binom{n_{g,1}}{j}$ ways. The vector $\bm a_g$ must contain $n_{g,1}$ ones in total, so the remaining $n_{g,1}-j$ ones must occur among the $\tilde n_g - n_{g,1}$ positions at which $\bm{z}_g^{\underline{\rho}_g}$ has a $0$, which can be chosen in $\binom{\tilde{n}_g - n_{g,1}}{n_{g,1} - j}$ ways. Thus the number of assignment vectors with overlap $j$ equals
$\binom{n_{g,1}}{j}\binom{\tilde{n}_g - n_{g,1}}{n_{g,1} - j}$, and grouping the sum
by overlap $j$ yields \eqref{eq: supp denom lb}.

The upper bound denominator follows identically: $\bm{a}_g^{\top}\bm{z}_g^{\underline{\rho}_g} = j$, and the same combinatorial argument yields \eqref{eq: supp denom ub}.
\end{proof}

Each sum in Proposition~\ref{prop: supp efficient denom} has at most $n_{g,1} + 1$ terms, so both probability bounds can be computed in time proportional to $n_{g,1}$.

\subsection{Proof of Proposition 2}\label{sec: supp proof prop 2}

Proposition~2 of the manuscript establishes that the tilted IPW Difference-in-Means is conservative under the null. We restate the proposition for reference.

\begin{prop}[Restatement of Proposition~2 of the manuscript]\label{prop: supp tilt bound restate}
Under Assumptions~1--4 of the manuscript, the tilted statistic in equation~(13) of the manuscript has nonpositive expectation under the null when the alternative is upper-tailed and nonnegative expectation under the null when the alternative is lower-tailed. Specifically, for any $\bm{\underline{\rho}} \in [0,1)^{\abs{\mathcal{G}^{\ast}}}$, with each stratum augmented by $\tilde{n}_{g,0,\mathrm{OMS}}^{\underline{\rho}_g}$ missing units as in equation~(9) of the manuscript, and any $\Gamma \geq 1$,
\begin{align}
\E\left[\hat{\tau}^{\mathrm{tilt}}\left(\bm{\underline{\rho}};\Gamma,\tau_0,d=+1\right)\right] &\leq 0 \quad\text{(upper-tailed alternative)},\label{eq: supp tilt upper}\\
\E\left[\hat{\tau}^{\mathrm{tilt}}\left(\bm{\underline{\rho}};\Gamma,\tau_0,d=-1\right)\right] &\geq 0 \quad\text{(lower-tailed alternative)}.\label{eq: supp tilt lower}
\end{align}
\end{prop}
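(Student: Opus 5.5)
The plan follows \citet{fogarty2023}: compare the tilted statistic, realization by realization, with an untilted inverse-probability-weighted statistic whose expectation equals $\tau - \tau_0$ exactly. I would argue stratum by stratum and then aggregate. Throughout, I fix $\bm{\underline{\rho}}$ and hence the augmented strata. The potential outcome schedule $\{y_{g,i}(1), y_{g,i}(0)\}_{i=1}^{\tilde{n}_g^{\underline{\rho}_g}}$ is treated as fixed, with the appended units carrying $y_{g,i}(0)=0$ by Assumption~\ref{assm: no-force-without-stop}. The null $\tau=\tau_0$ is read as a statement about this augmented population, with weights $\tilde{n}_g^{\underline{\rho}_g}/\tilde{n}^\ast$. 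By \Cref{prop: supp aug DIM}, $\hat{\tau}_g^{\underline{\rho}_g}$ is then the full-data Difference-in-Means $\hat{\tau}_g(\bm{z}_g)$ on the augmented stratum, viewed as a function of the assignment $\bm{z}_g \in \Omega_g^{\underline{\rho}_g}$.

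\emph{Step 1 (the untilted IPW statistic is unbiased).} Let $p(\cdot)$ denote the true conditional assignment distribution on $\Omega_g^{\underline{\rho}_g}$; it is unknown but satisfies \eqref{eq: assign-model}. Define
\begin{align*}
T_g(\bm{Z}_g) \coloneqq \dfrac{\hat{\tau}_g(\bm{Z}_g)-\tau_0}{\abs{\Omega_g^{\underline{\rho}_g}}\, p(\bm{Z}_g)} .
\end{align*}
Computing the expectation under $p$ cancels the $p(\bm{z}_g)$ factor:
\begin{align*}
\E[T_g] = \abs{\Omega_g^{\underline{\rho}_g}}^{-1}\sum_{\bm{z}_g\in\Omega_g^{\underline{\rho}_g}} \bigl(\hat{\tau}_g(\bm{z}_g)-\tau_0\bigr),
\end{align*}
which is the expectation of the Difference-in-Means under complete randomization. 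The standard finite-population argument, as in the proof of \Cref{lem: supp gamma one}, gives each unit treated with probability $n_{g,1}/\tilde{n}_g^{\underline{\rho}_g}$. Hence $\E[T_g]=\tau_g-\tau_0$, where $\tau_g$ is the average effect over the augmented stratum. One point needs care: every $\bm{z}_g$ must have $p(\bm{z}_g)>0$. This holds because the Bernoulli probabilities lie in $(0,1)$.

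\emph{Step 2 (pointwise tilting inequality, for $d=+1$).} By \Cref{lem: prob bounds}, $\underline{p}(\bm{z}_g;\Gamma)\le p(\bm{z}_g)\le \overline{p}(\bm{z}_g;\Gamma)$ for every $\bm{z}_g$. I split into the two cases that define \eqref{eq: tilted IPW diff-in-means g}.
\begin{itemize}
\item If $\hat{\tau}_g(\bm{z}_g)-\tau_0\ge 0$, the tilt multiplies a nonnegative number by $\overline{p}^{-1}\le p^{-1}$.
\item If $\hat{\tau}_g(\bm{z}_g)-\tau_0<0$, it multiplies a negative number by $\underline{p}^{-1}\ge p^{-1}$.
\end{itemize}
In both cases $\hat{\tau}_g^{\mathrm{tilt}}(\bm{z}_g)\le T_g(\bm{z}_g)$ for every $\bm{z}_g\in\Omega_g^{\underline{\rho}_g}$. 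Taking expectations gives $\E[\hat{\tau}_g^{\mathrm{tilt}}]\le \tau_g-\tau_0$.

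For $d=-1$ the case assignment flips. A nonpositive centered value is multiplied by $\overline{p}^{-1}$ and a positive one by $\underline{p}^{-1}$. The same argument then gives $\hat{\tau}_g^{\mathrm{tilt}}\ge T_g$ pointwise, so $\E[\hat{\tau}_g^{\mathrm{tilt}}]\ge \tau_g-\tau_0$.

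\emph{Step 3 (aggregation).} The weights $\tilde{n}_g^{\underline{\rho}_g}/\tilde{n}^\ast$ are fixed across $\Omega^{\bm{\underline{\rho}}}$. Assignments are independent across strata, so each stratum's expectation can be taken separately. Summing the Step~2 bounds gives $\E[\hat{\tau}^{\mathrm{tilt}}]\le \sum_g (\tilde{n}_g^{\underline{\rho}_g}/\tilde{n}^\ast)(\tau_g-\tau_0)=\tau-\tau_0$ for $d=+1$. This equals $0$ under the null. The reverse inequality holds for $d=-1$.

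I expect the main obstacle to be Step~1 rather than the tilting inequality. The difficulty is justifying that the expectation is over a fixed, well-defined finite population on $\Omega_g^{\underline{\rho}_g}$. This requires the posited $\tilde{n}_{g,0,\mathrm{OMS}}^{\underline{\rho}_g}$ to define the population to which $\tau$ and the null refer. It also requires the appended units' unobserved $y_{g,i}(1)$ values to be absorbed into the fixed schedule that the null constrains. If this identification of the null's population with the augmented one is not accepted, the argument needs an additional conditioning step, on the event of \Cref{prop: supp aug DIM} that the posited count equals the realized one.
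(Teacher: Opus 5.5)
Your proposal is correct and is essentially the paper's proof. The pointwise comparison $\hat{\tau}_g^{\mathrm{tilt}} \le T_g$ is exactly the supplement's summand-by-summand bound of $p(\bm{z}_g)/\overline{p}$ and $p(\bm{z}_g)/\underline{p}$ against $1$; both arguments then use the same uniform-average identity to obtain $\tau_g - \tau_0$ and aggregate with the augmented weights $\tilde{n}_g^{\underline{\rho}_g}/\tilde{n}^{\ast}$. For $d=-1$ you pair nonpositive centered values with $\overline{p}^{-1}$ and positive ones with $\underline{p}^{-1}$, which is what the definition of the stratum-level tilted statistic dictates and what makes the lower bound hold; the supplement's lower-tailed paragraph states the pairing the other way round, and as written that would not yield the bound.
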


\begin{proof}
We prove the upper-tailed case $d = +1$; the lower-tailed case follows by a symmetric argument.

\medskip
\noindent\textit{Step 1: Decompose the stratum-level expectation.}\quad Fix a stratum $g \in \mathcal{G}^{\ast}$. The stratum-level tilted statistic \eqref{eq: tilted IPW diff-in-means g} with $d = +1$ applies $\overline{p}(\bm{z}_g^{\underline{\rho}_g};\Gamma)^{-1}$ when $\hat{\tau}_g^{\underline{\rho}_g} - \tau_0 \geq 0$ and $\underline{p}(\bm{z}_g^{\underline{\rho}_g};\Gamma)^{-1}$ when $\hat{\tau}_g^{\underline{\rho}_g} - \tau_0 < 0$. Writing $p(\bm{z}_g^{\underline{\rho}_g}) \coloneqq \Pr(\bm{Z}_g = \bm{z}_g^{\underline{\rho}_g})$ for the true (unknown) assignment probability, the expectation over the assignment mechanism is
\scriptsize
\begin{equation}\label{eq: supp tilt expectation decomp}
\begin{split}
\E\left[\hat{\tau}_g^{\mathrm{tilt}}\left(\underline{\rho}_g;\Gamma,\tau_0,+1\right)\right]
= \dfrac{1}{\abs{\Omega_g^{\underline{\rho}_g}}}
&\sum_{\bm{z}_g^{\underline{\rho}_g} \in \Omega_g^{\underline{\rho}_g}}
p\left(\bm{z}_g^{\underline{\rho}_g}\right)
\left(\hat{\tau}_g^{\underline{\rho}_g} - \tau_0\right) \left[
  \dfrac{\mathbbm{1}\left\{\hat{\tau}_g^{\underline{\rho}_g} - \tau_0 \geq 0\right\}}
       {\overline{p}\left(\bm{z}_g^{\underline{\rho}_g};\Gamma\right)}
  \;+\;
  \dfrac{\mathbbm{1}\left\{\hat{\tau}_g^{\underline{\rho}_g} - \tau_0 < 0\right\}}
       {\underline{p}\left(\bm{z}_g^{\underline{\rho}_g};\Gamma\right)}
\right].
\end{split}
\end{equation}
\normalsize
\medskip
\noindent\textit{Step 2: Upper-bound each summand.}\quad We show that every summand in \eqref{eq: supp tilt expectation decomp} is bounded above by $(\hat{\tau}_g^{\underline{\rho}_g} - \tau_0)$. This yields the largest possible value of the right-hand side of \eqref{eq: supp tilt expectation decomp}: if even this upper bound is nonpositive under the null, the true expectation must be as well, regardless of the assignment mechanism.

By Lemma~1 of the manuscript, any assignment mechanism satisfying the model in equation~(1) of the manuscript satisfies $\underline{p}(\bm{z}_g^{\underline{\rho}_g};\Gamma) \leq p(\bm{z}_g^{\underline{\rho}_g}) \leq \overline{p}(\bm{z}_g^{\underline{\rho}_g};\Gamma)$ for all $\bm{z}_g^{\underline{\rho}_g} \in \Omega_g^{\underline{\rho}_g}$. For each assignment, exactly one indicator in \eqref{eq: supp tilt expectation decomp} is nonzero.

\begin{enumerate}
\item[\textit{(i)}] When $\hat{\tau}_g^{\underline{\rho}_g} - \tau_0 \geq 0$, the active indicator selects $\overline{p}^{-1}$. Because $p(\bm{z}_g^{\underline{\rho}_g}) \leq \overline{p}(\bm{z}_g^{\underline{\rho}_g};\Gamma)$, the ratio $p/\overline{p} \leq 1$. Multiplying a nonnegative quantity by a ratio at most one yields a weakly smaller value:
\begin{align}\label{eq: supp bound first sum}
\dfrac{p\left(\bm{z}_g^{\underline{\rho}_g}\right)}
     {\overline{p}\left(\bm{z}_g^{\underline{\rho}_g};\Gamma\right)}\,
\left(\hat{\tau}_g^{\underline{\rho}_g} - \tau_0\right)
\;\leq\;
\left(\hat{\tau}_g^{\underline{\rho}_g} - \tau_0\right).
\end{align}

\item[\textit{(ii)}] When $\hat{\tau}_g^{\underline{\rho}_g} - \tau_0 < 0$, the active indicator selects $\underline{p}^{-1}$. Because $p(\bm{z}_g^{\underline{\rho}_g}) \geq \underline{p}(\bm{z}_g^{\underline{\rho}_g};\Gamma)$, the ratio $p/\underline{p} \geq 1$. Multiplying a negative quantity by a ratio at least one makes the product more negative --- further from zero and further in the direction opposite the upper-tailed alternative --- yielding a weakly smaller value:
\begin{align}\label{eq: supp bound second sum}
\dfrac{p\left(\bm{z}_g^{\underline{\rho}_g}\right)}
     {\underline{p}\left(\bm{z}_g^{\underline{\rho}_g};\Gamma\right)}\,
\left(\hat{\tau}_g^{\underline{\rho}_g} - \tau_0\right)
\;\leq\;
\left(\hat{\tau}_g^{\underline{\rho}_g} - \tau_0\right).
\end{align}
\end{enumerate}

\noindent In both cases the summand is bounded above by $(\hat{\tau}_g^{\underline{\rho}_g} - \tau_0)$. The bound holds because the quantity being scaled by the probability ratio --- namely $(\hat{\tau}_g^{\underline{\rho}_g} - \tau_0)$ --- has the same sign as the condition that selects the ratio: nonnegative quantities are paired with $p/\overline{p} \leq 1$, and negative quantities are paired with $p/\underline{p} \geq 1$. Centering at $\tau_0$ is what ensures this alignment. Without centering, the tilting would operate on $\hat{\tau}_g^{\underline{\rho}_g}$ directly. In the region $0 \leq \hat{\tau}_g^{\underline{\rho}_g} < \tau_0$, the condition $\hat{\tau}_g^{\underline{\rho}_g} - \tau_0 < 0$ would select the ratio $p/\underline{p} \geq 1$, but $\hat{\tau}_g^{\underline{\rho}_g}$ itself is nonnegative, so multiplying by $p/\underline{p} \geq 1$ would \emph{increase} the product, violating the required upper bound.

\medskip
\noindent\textit{Step 3: Recombine.}\quad Substituting \eqref{eq: supp bound first sum} and \eqref{eq: supp bound second sum} into \eqref{eq: supp tilt expectation decomp} and using $\mathbbm{1}\{\hat{\tau}_g^{\underline{\rho}_g} - \tau_0 \geq 0\} + \mathbbm{1}\{\hat{\tau}_g^{\underline{\rho}_g} - \tau_0 < 0\} = 1$ to recombine the two cases into a single sum over all assignments:
\begin{align}\label{eq: supp tilt ub stratum}
\E\left[\hat{\tau}_g^{\mathrm{tilt}}\left(\underline{\rho}_g;\Gamma,\tau_0,+1\right)\right]
\;\leq\;
\dfrac{1}{\abs{\Omega_g^{\underline{\rho}_g}}}
\sum_{\bm{z}_g^{\underline{\rho}_g} \in \Omega_g^{\underline{\rho}_g}}
\left(\hat{\tau}_g^{\underline{\rho}_g} - \tau_0\right).
\end{align}
The right-hand side of \eqref{eq: supp tilt ub stratum} is a deterministic quantity: it averages the augmented Difference-in-Means $\hat{\tau}_g^{\underline{\rho}_g}$ over all $\abs{\Omega_g^{\underline{\rho}_g}}$ elements of $\Omega_g^{\underline{\rho}_g}$ with equal weight, then subtracts $\tau_0$. Because each unit appears as treated in the same number of assignment vectors, this uniform average equals the stratum-level average treatment effect $\tau_g$ by a combinatorial identity:
\begin{align}\label{eq: supp combinatorial identity}
\dfrac{1}{\abs{\Omega_g^{\underline{\rho}_g}}}
\sum_{\bm{z}_g^{\underline{\rho}_g} \in \Omega_g^{\underline{\rho}_g}}
\hat{\tau}_g^{\underline{\rho}_g}
\;=\; \tau_g.
\end{align}
Substituting \eqref{eq: supp combinatorial identity} into \eqref{eq: supp tilt ub stratum} gives
\begin{align}\label{eq: supp tilt stratum centered}
\E\left[\hat{\tau}_g^{\mathrm{tilt}}\left(\underline{\rho}_g;\Gamma,\tau_0,+1\right)\right]
\;\leq\;
\tau_g - \tau_0.
\end{align}

\medskip
\noindent\textit{Step 4: Aggregate over strata.}\quad The aggregate tilted statistic in equation~(13) of the manuscript is $\hat{\tau}^{\mathrm{tilt}} = \sum_{g \in \mathcal{G}^{\ast}} (\tilde{n}_g^{\underline{\rho}_g}/\tilde{n}^{\ast})\,\hat{\tau}_g^{\mathrm{tilt}}$, where $\tilde{n}^{\ast} \coloneqq \sum_{g \in \mathcal{G}^{\ast}} \tilde{n}_g^{\underline{\rho}_g}$. Applying \eqref{eq: supp tilt stratum centered} to each term,
\begin{align*}
\E\left[\hat{\tau}^{\mathrm{tilt}}\left(\bm{\underline{\rho}};\Gamma,\tau_0,+1\right)\right]
&\leq \sum_{g \in \mathcal{G}^{\ast}}
  \dfrac{\tilde{n}_g^{\underline{\rho}_g}}{\tilde{n}^{\ast}}\left(\tau_g - \tau_0\right)
= \sum_{g \in \mathcal{G}^{\ast}}
  \dfrac{\tilde{n}_g^{\underline{\rho}_g}}{\tilde{n}^{\ast}}\,\tau_g
  \;-\; \tau_0
= \tau - \tau_0,
\end{align*}
where the second equality uses $\sum_{g \in \mathcal{G}^{\ast}} \tilde{n}_g^{\underline{\rho}_g}/\tilde{n}^{\ast} = 1$ and the definition $\tau \coloneqq \sum_{g \in \mathcal{G}^{\ast}} (\tilde{n}_g^{\underline{\rho}_g}/\tilde{n}^{\ast})\,\tau_g$. Under the null $\tau \leq \tau_0$, this gives $\tau - \tau_0 \leq 0$, establishing \eqref{eq: supp tilt upper}.

\medskip
\noindent\textit{Lower-tailed case.}\quad When $d = -1$, the tilting applies $\underline{p}^{-1}$ when $\hat{\tau}_g^{\underline{\rho}_g} - \tau_0 \leq 0$ and $\overline{p}^{-1}$ when $\hat{\tau}_g^{\underline{\rho}_g} - \tau_0 > 0$. The analogous argument --- $p/\underline{p} \geq 1$ on nonpositive quantities leaves them unchanged or pushes them closer to zero; $p/\overline{p} \leq 1$ on positive quantities makes them less positive --- shows each summand is bounded \emph{below} by $(\hat{\tau}_g^{\underline{\rho}_g} - \tau_0)$. Recombining via \eqref{eq: supp combinatorial identity} gives $\E[\hat{\tau}_g^{\mathrm{tilt}}(\underline{\rho}_g;\Gamma,\tau_0,-1)] \geq \tau_g - \tau_0$, and aggregation yields \eqref{eq: supp tilt lower} under the null $\tau \geq \tau_0$.
\end{proof}

\subsection{Corollary for Stratum-Specific Sensitivity Parameters}
\label{sec: supp corollary}

Proposition~\ref{prop: supp tilt bound restate} is stated for a uniform sensitivity parameter $\Gamma$ that applies identically to all strata. As described in Section~6.2 of the manuscript, the geographic calibration analysis replaces $\Gamma$ with stratum-specific bounds $\bm{\Gamma} \coloneqq (\Gamma_g)_{g \in \mathcal{G}^{\ast}}$, where the operative bound for stratum $g$ is $\min(\Gamma, \Gamma_g^{\mathrm{geo}})$. The following corollary extends Proposition~\ref{prop: supp tilt bound restate} to this setting.

\begin{cor}[Extension to stratum-specific sensitivity parameters]
\label{cor: supp stratum Gamma}
Under Assumptions 1--4 of the manuscript, the conclusion of Proposition~\ref{prop: supp tilt bound restate} continues to hold when the uniform sensitivity parameter $\Gamma$ is replaced by stratum-specific parameters $\bm{\Gamma} = (\Gamma_g)_{g \in \mathcal{G}^{\ast}}$, with each $\Gamma_g \geq 1$. Specifically, define the stratum-level tilted statistic with stratum-specific bounds as
\begin{align}\label{eq: supp tilt stratum Gamma}
\hat{\tau}_g^{\mathrm{tilt}}\left(\underline{\rho}_g;\Gamma_g,\tau_0,d\right)
= \dfrac{1}{\abs{\Omega_g^{\underline{\rho}_g}}}
\left(\hat{\tau}_g^{\underline{\rho}_g} - \tau_0\right)
\begin{cases}
\overline{p}\left(\bm{z}_g^{\underline{\rho}_g};\Gamma_g\right)^{-1},
& \text{if } d\left(\hat{\tau}_g^{\underline{\rho}_g} - \tau_0\right) \geq 0,\\[0.5em]
\underline{p}\left(\bm{z}_g^{\underline{\rho}_g};\Gamma_g\right)^{-1},
& \text{if } d\left(\hat{\tau}_g^{\underline{\rho}_g} - \tau_0\right) < 0,
\end{cases}
\end{align}
and the aggregate tilted statistic as
\begin{align}\label{eq: supp tilt agg stratum Gamma}
\hat{\tau}^{\mathrm{tilt}}\left(\bm{\underline{\rho}};\bm{\Gamma},\tau_0,d\right)
\coloneqq \sum_{g \in \mathcal{G}^{\ast}}
\left(\tilde{n}_g^{\underline{\rho}_g} / \tilde{n}^{\ast}\right)\,
\hat{\tau}_g^{\mathrm{tilt}}\left(\underline{\rho}_g;\Gamma_g,\tau_0,d\right).
\end{align}
Then, for any $\bm{\underline{\rho}} \in [0,1)^{\abs{\mathcal{G}^{\ast}}}$ and any $\bm{\Gamma} \in [1,\infty)^{\abs{\mathcal{G}^{\ast}}}$,
\begin{align}
\E\left[\hat{\tau}^{\mathrm{tilt}}\left(\bm{\underline{\rho}};\bm{\Gamma},\tau_0,d=+1\right)\right]
&\leq 0 \quad\text{(upper-tailed alternative)},\label{eq: supp cor tilt upper}\\
\E\left[\hat{\tau}^{\mathrm{tilt}}\left(\bm{\underline{\rho}};\bm{\Gamma},\tau_0,d=-1\right)\right]
&\geq 0 \quad\text{(lower-tailed alternative)}.\label{eq: supp cor tilt lower}
\end{align}
\end{cor}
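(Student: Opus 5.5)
The plan is to rerun the proof of Proposition~\ref{prop: supp tilt bound restate} stratum by stratum, observing that every step is local to a single stratum $g$. The only input that ties a stratum to its sensitivity parameter is Lemma~\ref{lem: prob bounds restate}, and that lemma is itself a within-stratum statement. I would first make the assumption explicit: the assignment mechanism satisfies the odds-ratio restriction in equation~(1) of the manuscript with $\Gamma$ replaced by $\Gamma_g$ inside stratum $g$. Assignments remain independent across strata and we condition on $\bm{Z}_g \in \Omega_g^{\underline{\rho}_g}$. Under this assumption, Lemmas~\ref{lem: submodel cond dist}, \ref{lem: submodel equiv} and \ref{lem: prob bounds restate} apply verbatim in each stratum with $\Gamma_g$ in place of $\Gamma$. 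This gives
\[
\underline{p}(\bm{z}_g^{\underline{\rho}_g};\Gamma_g) \le p(\bm{z}_g^{\underline{\rho}_g}) \le \overline{p}(\bm{z}_g^{\underline{\rho}_g};\Gamma_g)
\]
for every $\bm{z}_g^{\underline{\rho}_g} \in \Omega_g^{\underline{\rho}_g}$.

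Next, for $d=+1$, I would repeat Steps~1--3 of the proof of Proposition~\ref{prop: supp tilt bound restate} within each stratum, using the statistic defined in the corollary's display for the stratum-level tilted statistic. Expand the expectation over $\Omega_g^{\underline{\rho}_g}$ as in the decomposition of the stratum-level expectation. Then bound each summand by $\hat{\tau}_g^{\underline{\rho}_g}-\tau_0$ using the sign-alignment argument: nonnegative centered values are multiplied by $p/\overline{p}\le 1$, and negative ones by $p/\underline{p}\ge 1$. Finally, invoke the combinatorial identity that the uniform average of $\hat{\tau}_g^{\underline{\rho}_g}$ over $\Omega_g^{\underline{\rho}_g}$ equals $\tau_g$. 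Together these give $\E[\hat{\tau}_g^{\mathrm{tilt}}(\underline{\rho}_g;\Gamma_g,\tau_0,+1)] \le \tau_g-\tau_0$. None of these steps uses that the $\Gamma_g$ coincide across strata.

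Finally, I would aggregate with the weights $\tilde{n}_g^{\underline{\rho}_g}/\tilde{n}^{\ast}$. These weights sum to one and do not depend on $\bm{\Gamma}$, so linearity of expectation yields $\E[\hat{\tau}^{\mathrm{tilt}}(\bm{\underline{\rho}};\bm{\Gamma},\tau_0,+1)] \le \tau-\tau_0 \le 0$ under the null. The lower-tailed case follows from the symmetric argument with the inequalities reversed. One could also deduce the result from Proposition~\ref{prop: supp tilt bound restate} by taking $\Gamma=\max_g\Gamma_g$, but that would not recover the tighter stratum-specific tilting, so the direct rerun is the right route.

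The main obstacle I expect is conceptual rather than computational. I need to make sure the stratum-specific restriction is what licenses Lemma~\ref{lem: prob bounds restate} with $\Gamma_g$; in the application this restriction is $\min(\Gamma,\Gamma_g^{\mathrm{geo}})$. This is valid because equation~(1) constrains odds ratios only for pairs $i,j$ in the same stratum, and the probability bounds involve only $\Omega_g^{\underline{\rho}_g}$. No cross-stratum comparison enters, so heterogeneous $\Gamma_g$ are harmless. I would state this point explicitly before invoking the per-stratum bounds.
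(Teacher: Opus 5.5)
Your proposal is correct and follows the paper's proof essentially step for step. Because the odds-ratio restriction and Lemma~1 are purely within-stratum, rerunning Steps~1--3 of the proof of Proposition~2 with $\Gamma_g$ gives $\E[\hat{\tau}_g^{\mathrm{tilt}}(\underline{\rho}_g;\Gamma_g,\tau_0,+1)] \le \tau_g-\tau_0$, and linearity of expectation with the $\bm{\Gamma}$-free weights $\tilde{n}_g^{\underline{\rho}_g}/\tilde{n}^{\ast}$ finishes the argument under the null. You are also right that plugging $\Gamma=\max_g\Gamma_g$ into Proposition~2 would only control a different, more heavily tilted statistic.

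When you write out the lower-tailed case, keep the pairing from the displayed definition. For $d=-1$, nonpositive centered values get $\overline{p}^{-1}$ (factor $p/\overline{p}\le 1$, which pulls them up toward zero), and positive ones get $\underline{p}^{-1}$ (factor $p/\underline{p}\ge 1$). Each summand is then bounded below by $\hat{\tau}_g^{\underline{\rho}_g}-\tau_0$. The paper's own brief sketch of that case describes the reversed pairing, which would only yield an upper bound.
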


\begin{proof}
The proof of Proposition~\ref{prop: supp tilt bound restate} establishes \eqref{eq: supp cor tilt upper} and \eqref{eq: supp cor tilt lower} stratum-by-stratum, and each step uses only the within-stratum probability bounds from Lemma~1 of the manuscript. We sketch the modifications needed for stratum-specific $\Gamma_g$ and refer to the proof of Proposition~\ref{prop: supp tilt bound restate} for details.

The restriction on the assignment model in equation~(1) of the manuscript constrains odds ratios among pairs of units \emph{within the same stratum}. Replacing the uniform $\Gamma$ with stratum-specific $\Gamma_g$ gives, for each $g \in \mathcal{G}^{\ast}$,
\begin{align*}
\dfrac{1}{\Gamma_g}
\;\leq\;
\dfrac{\pi_{g,i}(1-\pi_{g,j})}{\pi_{g,j}(1-\pi_{g,i})}
\;\leq\;
\Gamma_g
\qquad \text{for all } i, j \in \{1,\ldots,n_g\}.
\end{align*}
This is a strictly weaker restriction than requiring a uniform $\Gamma = \max_{g} \Gamma_g$. Lemma~1 of the manuscript, whose proof operates entirely within a single stratum, therefore holds with $\Gamma_g$ in place of $\Gamma$, yielding stratum-specific probability bounds:
\begin{align*}
\underline{p}\left(\bm{z}_g^{\underline{\rho}_g};\Gamma_g\right)
\;\leq\; p\left(\bm{z}_g^{\underline{\rho}_g}\right)
\;\leq\; \overline{p}\left(\bm{z}_g^{\underline{\rho}_g};\Gamma_g\right)
\qquad\text{for all } \bm{z}_g^{\underline{\rho}_g} \in \Omega_g^{\underline{\rho}_g}.
\end{align*}

Steps~1--3 of the proof of Proposition~\ref{prop: supp tilt bound restate} use only the within-stratum probability bounds in stratum $g$. Substituting $\Gamma_g$ for $\Gamma$ throughout yields the stratum-level bound
\begin{align}\label{eq: supp cor stratum bound}
\E\left[\hat{\tau}_g^{\mathrm{tilt}}\left(\underline{\rho}_g;\Gamma_g,\tau_0,+1\right)\right]
\;\leq\; \tau_g - \tau_0,
\end{align}
the stratum-specific analog of equation~\eqref{eq: supp tilt stratum centered}. Step~4 then aggregates via linearity of expectation --- which requires no assumption about cross-stratum dependence --- and does not depend on any sensitivity parameter:
\begin{align*}
\E\left[\hat{\tau}^{\mathrm{tilt}}\left(\bm{\underline{\rho}};\bm{\Gamma},\tau_0,+1\right)\right]
&\leq \sum_{g \in \mathcal{G}^{\ast}}
  \dfrac{\tilde{n}_g^{\underline{\rho}_g}}{\tilde{n}^{\ast}}\left(\tau_g - \tau_0\right)
= \tau - \tau_0
\;\leq\; 0,
\end{align*}
where the last inequality holds under the null $\tau \leq \tau_0$. This establishes \eqref{eq: supp cor tilt upper}. The argument for \eqref{eq: supp cor tilt lower} is symmetric.
\end{proof}

\subsection{Conservative variance estimation} \label{sec: consv var}

For reference, the tilted statistic from equation~(13) of the manuscript is
\begin{equation} \label{eq: tilted stat restate}
\hat{\tau}^{\mathrm{tilt}}\left(\bm{\underline{\rho}}; \Gamma, \tau_0,
d\right) = \sum_{g \in \mathcal{G}^{\ast}}
\left(\tilde{n}_g^{\underline{\rho}_g}/\tilde{n}^{\ast}\right)
\hat{\tau}_g^{\mathrm{tilt}}\left(\underline{\rho}_g; \Gamma, \tau_0,
d\right),
\end{equation}
where $\tilde{n}^{\ast} \coloneqq \sum_{g \in \mathcal{G}^{\ast}}
\tilde{n}_g^{\underline{\rho}_g}$ and each stratum-level tilted statistic
$\hat{\tau}_g^{\mathrm{tilt}}$ is defined in equation~(12) of the
manuscript. Because the civilian-race indicators are mutually independent
Bernoulli random variables under the assignment model in Section~3.3 of
the manuscript, assignments in distinct strata are independent and the
variance of the tilted statistic decomposes as
\begin{equation} \label{eq: true var}
\Var\left[\hat{\tau}^{\mathrm{tilt}}\right] = \sum_{g \in
\mathcal{G}^{\ast}}
\left(\tilde{n}_g^{\underline{\rho}_g}/\tilde{n}^{\ast}\right)^{2}
\Var\left[\hat{\tau}_g^{\mathrm{tilt}}\right].
\end{equation}

Following \citet{fogarty2018a,fogarty2023}, we estimate this variance with
an HC2-style estimator. Define the scaled stratum weight $w_g \coloneqq
\abs{\mathcal{G}^{\ast}}\,
\tilde{n}_g^{\underline{\rho}_g}/\tilde{n}^{\ast}$, the weight vector
$\bm{Q} \coloneqq \left(w_g\right)_{g \in \mathcal{G}^{\ast}}$, the hat
matrix $\bm{H}_{\bm{Q}} \coloneqq
\bm{Q}\left(\bm{Q}^{\top}\bm{Q}\right)^{-1}\bm{Q}^{\top}$, and the sum
of squared weights $S_w \coloneqq \sum_{g \in \mathcal{G}^{\ast}} w_g^2$.
Let $h_g \coloneqq w_g^2/S_w$ denote the $g$th diagonal entry of
$\bm{H}_{\bm{Q}}$, and define the leverage-adjusted vector
$\tilde{\bm{y}} \coloneqq \left(w_g\,
\hat{\tau}_g^{\mathrm{tilt}}/\sqrt{1 - h_g}\right)_{g \in
\mathcal{G}^{\ast}}$. The variance estimator is
\begin{equation} \label{eq: var est}
\widehat{\mathrm{se}}^2 \coloneqq
\dfrac{1}{\abs{\mathcal{G}^{\ast}}^2}\, \tilde{\bm{y}}^{\top}\left(\bm{I}
- \bm{H}_{\bm{Q}}\right)\tilde{\bm{y}}.
\end{equation}

Assumptions~1--4 of the manuscript justify the construction of the
augmented data and the tilted statistic in \eqref{eq: tilted stat restate},
but the conservativeness of the variance estimator itself relies only on
cross-stratum independence and on having at least two informative strata.

\begin{prop} \label{prop: consv var}
Consider the tilted statistic in equation~(13) of the manuscript, restated
in \eqref{eq: tilted stat restate}. Suppose the treatment indicators
$\{Z_{g,i} : i = 1, \ldots, \tilde{n}_g^{\underline{\rho}_g},\; g \in
\mathcal{G}^{\ast}\}$ are mutually independent Bernoulli random variables,
as specified by the assignment model in Section~3.3 of the manuscript, and
that $\abs{\mathcal{G}^{\ast}} \geq 2$. Then the variance estimator in
\eqref{eq: var est} satisfies
\begin{equation*}
\E\left[\widehat{\mathrm{se}}^2\right] \geq
\Var\left[\hat{\tau}^{\mathrm{tilt}}\left(\bm{\underline{\rho}}; \Gamma,
\tau_0, d\right)\right]
\end{equation*}
for all $\bm{\underline{\rho}} \in [0,1)^{\abs{\mathcal{G}^{\ast}}}$, all
$\Gamma \geq 1$, all potential outcome configurations, and all assignment
mechanisms consistent with the $\Gamma$-bound in equation~(1) of the
manuscript.
\end{prop}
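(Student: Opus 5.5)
The plan is the standard regression-residual argument of \citet{fogarty2018a}. Write $G \coloneqq \abs{\mathcal{G}^{\ast}}$ and $T_g \coloneqq \hat{\tau}_g^{\mathrm{tilt}}$. Let $\mu_g \coloneqq \E[T_g]$ and $\sigma_g^2 \coloneqq \Var[T_g]$, both computed under the true (unknown) assignment mechanism. Because the $Z_{g,i}$ are mutually independent Bernoulli variables, the conditioned vectors $\bm{Z}_g$ are independent across strata. Hence the $T_g$ are independent, and \eqref{eq: true var} gives $\Var[\hat{\tau}^{\mathrm{tilt}}] = G^{-2}\sum_g w_g^2 \sigma_g^2$, using $\tilde n_g/\tilde n^\ast = w_g/G$. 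The weights $w_g$, the hat matrix $\bm{H}_{\bm{Q}}$, and the leverages $h_g$ are all fixed once $\bm{\underline{\rho}}$ is fixed. The estimator is therefore a fixed quadratic form in the independent random vector $\bm{W} \coloneqq (w_g T_g)_g$, namely $\tilde{\bm{y}} = \bm{D}\bm{W}$ with $\bm{D} = \mathrm{diag}\{(1-h_g)^{-1/2}\}$.

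The first step is the expectation of a quadratic form, $\E[\tilde{\bm{y}}^\top \bm{A}\tilde{\bm{y}}] = \mathrm{tr}(\bm{A}\,\mathrm{Cov}(\tilde{\bm y})) + \E[\tilde{\bm y}]^\top \bm{A}\,\E[\tilde{\bm y}]$ with $\bm{A} = \bm{I} - \bm{H}_{\bm Q}$. Independence makes $\mathrm{Cov}(\tilde{\bm y})$ diagonal with entries $w_g^2\sigma_g^2/(1-h_g)$. The diagonal of $\bm{I}-\bm{H}_{\bm Q}$ is $1-h_g$, so the leverage adjustment cancels it exactly, and the trace term equals $\sum_g w_g^2\sigma_g^2$. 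This already equals $G^2\Var[\hat{\tau}^{\mathrm{tilt}}]$.

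The second term is nonnegative because $\bm{I}-\bm{H}_{\bm Q}$ is an orthogonal projection and hence positive semidefinite. Dividing by $G^2$ gives $\E[\widehat{\mathrm{se}}^2] = \Var[\hat{\tau}^{\mathrm{tilt}}] + G^{-2}\|(\bm{I}-\bm{H}_{\bm Q})\bm{D}\bm{\mu}^w\|^2 \ge \Var[\hat{\tau}^{\mathrm{tilt}}]$, where $\bm{\mu}^w \coloneqq (w_g\mu_g)_g$. The bias term is the price of not knowing the heterogeneous stratum means.

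None of this uses the form of the assignment mechanism beyond cross-stratum independence. It also uses nothing about $\Gamma$, $\tau_0$, $d$, or the potential outcomes, so the inequality holds uniformly over all the configurations listed in the statement.

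The main obstacle is well-definedness. I would check that $h_g < 1$ for every $g$, which is exactly where $G \ge 2$ enters. Since $h_g = w_g^2/S_w$ and every $w_g > 0$ (each $\tilde n_g \ge 2$), having $G\ge2$ gives $S_w > w_g^2$. I would also check that $\bm{Q}^\top\bm{Q} = S_w > 0$, so that $\bm{H}_{\bm Q}$ exists. The trace identity also needs some care, since the tilted $T_g$ is a discontinuous function of $\bm{Z}_g$ through the sign switch. Because each $\Omega_g^{\underline{\rho}_g}$ is finite, $T_g$ is bounded and has finite second moments, so the moment calculation is legitimate.
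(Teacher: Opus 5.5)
Your proof is correct, and it departs from the paper's at the step that actually settles the result. Both arguments use cross-stratum independence to split the expected quadratic form into variance terms plus squared-mean terms. Both also note that the leverage factor $(1-h_g)^{-1/2}$ cancels the diagonal entry $1-h_g$, so the variance terms reproduce $\Var[\hat{\tau}^{\mathrm{tilt}}]$ exactly. The paper gets there by expanding $\tilde{\bm y}^{\top}(\bm I-\bm H_{\bm Q})\tilde{\bm y}$ into diagonal and pairwise sums; you get there through the trace identity.

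The two proofs differ in how they show the excess is nonnegative. You write it as $\abs{\mathcal G^{\ast}}^{-2}\,\E[\tilde{\bm y}]^{\top}(\bm I-\bm H_{\bm Q})\E[\tilde{\bm y}]$, which is nonnegative because $\bm I-\bm H_{\bm Q}$ is an orthogonal projection. The paper instead bounds each cross product by the unweighted AM--GM inequality, with $a_g = w_g^2\mu_g/\sqrt{S_w-w_g^2}$ and $\mu_g \coloneqq \E[\hat\lambda_g]$. It then asserts that \eqref{eq: sufficient ineq} reduces to \eqref{eq: reduced ineq}.

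That reduction drops a term. Multiplied out, \eqref{eq: sufficient ineq} is equivalent to $\sum_g w_g^2\mu_g^2\,(S_w-\abs{\mathcal G^{\ast}}\,w_g^2)/(S_w-w_g^2)\ge 0$. Since $\sum_g (S_w-\abs{\mathcal G^{\ast}}\,w_g^2)=0$, this inequality fails whenever the stratum sizes are unequal and the means are concentrated on the largest stratum. For example, take two strata with $w_1>w_2$ and $\mu_1\neq 0=\mu_2$. There the true excess is still $\abs{\mathcal G^{\ast}}^{-2}(w_1\mu_1-w_2\mu_2)^2\ge 0$, so the conclusion holds even though the paper's sufficient condition does not.

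So the AM--GM route, as written, only covers the case of equal $\tilde n_g^{\underline{\rho}_g}$. Your projection argument handles arbitrary weights and gives the exact amount of conservativeness, $\abs{\mathcal G^{\ast}}^{-2}\|(\bm I-\bm H_{\bm Q})\bm D\bm\mu^{w}\|^2$. It also makes clear that nothing beyond cross-stratum independence and $h_g<1$ (that is, $\abs{\mathcal G^{\ast}}\ge 2$) is used.
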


The variance estimator is conservative for the following reason. The
estimator is built from the squared stratum-level tilted statistics
$\hat{\lambda}_g^2$. The expectation of each squared statistic decomposes
as
\begin{equation*}
\E[\hat{\lambda}_g^2] = \Var[\hat{\lambda}_g] +
\E[\hat{\lambda}_g]^2,
\end{equation*}
so a weighted sum of $\hat{\lambda}_g^2$ terms has expectation equal to
the true variance plus a non-negative excess from the squared
stratum-level expectations $\E[\hat{\lambda}_g]^2$. If these expectations
were known, one could eliminate the excess by centering each statistic
before squaring --- using $(\hat{\lambda}_g - \E[\hat{\lambda}_g])^2$ in
place of $\hat{\lambda}_g^2$. Under the composite null $\tau = \tau_0$,
however, many configurations of individual potential outcomes are
consistent with the same overall average $\tau_0$, and different
configurations produce different stratum-level expectations. Because these
expectations depend on unknowable potential outcomes, the centering cannot
be performed, and the excess remains.

The variance estimator in \eqref{eq: var est} partially corrects for this
excess. The estimator computes
$\tilde{\bm{y}}^{\top}(\bm{I} - \bm{H}_{\bm{Q}})\tilde{\bm{y}}$, where
$\tilde{\bm{y}}$ is a vector of leverage-adjusted stratum-level statistics
and $\bm{H}_{\bm{Q}}$ is the hat matrix formed from the stratum weight
vector $\bm{Q}$. The residual-maker $\bm{I} - \bm{H}_{\bm{Q}}$
has diagonal entries $1 - w_g^2/S_w$, which shrink each
stratum's squared contribution, and off-diagonal entries
$-w_g w_{g^{\prime}}/S_w$, which subtract pairwise cross-products of the
stratum-level statistics.

The consequence of premultiplying and postmultiplying
$\tilde{\bm{y}}$ by this matrix is to project out the overall weighted
average: The diagonal entries of $\bm{I} - \bm{H}_{\bm{Q}}$, each equal to
$1 - w_g^2/S_w$, scale down each stratum's squared contribution but do
not on their own remove the squared-expectation excess. The off-diagonal
entries, each equal to $-w_g w_{g^{\prime}}/S_w$, are negative, and when
the vector $\tilde{\bm{y}}$ is multiplied on both sides of the matrix ---
that is, in the product
$\tilde{\bm{y}}^{\top}(\bm{I} - \bm{H}_{\bm{Q}})\tilde{\bm{y}}$ --- each
off-diagonal entry gets multiplied by $\tilde{y}_g$ from the left and
$\tilde{y}_{g^{\prime}}$ from the right, producing terms proportional to
$-\hat{\lambda}_g \hat{\lambda}_{g^{\prime}}$. Taking expectations of these
$-\hat{\lambda}_g \hat{\lambda}_{g^{\prime}}$ terms, cross-stratum
independence implies
$\E[\hat{\lambda}_g \hat{\lambda}_{g^{\prime}}] =
\E[\hat{\lambda}_g]\,\E[\hat{\lambda}_{g^{\prime}}]$, so the
off-diagonal subtractions remove quantities built from the same
stratum-level expectations whose squares constitute the diagonal excess.

These subtractions, however, never fully absorb the excess. By the AM-GM
inequality, the sum of the cross-products is always strictly smaller in
magnitude than the sum of the diagonal squared-expectation terms whenever
$\abs{\mathcal{G}^{\ast}} \geq 2$, so a non-negative remainder is left
over. The proof below makes this argument precise.

\begin{proof}
For brevity, write $\hat{\lambda}_g \coloneqq
\hat{\tau}_g^{\mathrm{tilt}}\left(\underline{\rho}_g; \Gamma, \tau_0,
d\right)$ for each $g \in \mathcal{G}^{\ast}$. Each $\hat{\lambda}_g$ is
a function of the assignment vector $\bm{Z}_g = (Z_{g,1}, \ldots,
Z_{g,\tilde{n}_g^{\underline{\rho}_g}})$ and the fixed potential outcomes
within stratum $g$. Because the $Z_{g,i}$ are mutually independent across
all strata and all units by assumption, the vectors $\bm{Z}_g$ and
$\bm{Z}_{g^{\prime}}$ are independent for $g \neq g^{\prime}$, and
therefore $\hat{\lambda}_g$ and $\hat{\lambda}_{g^{\prime}}$ are
independent for $g \neq g^{\prime}$.

The hat matrix $\bm{H}_{\bm{Q}}$ has entries $w_g w_{g^{\prime}}/S_w$, so
the residual matrix $\bm{I} - \bm{H}_{\bm{Q}}$ has diagonal entries
$1 - w_g^2/S_w = (S_w - w_g^2)/S_w$ and off-diagonal entries
$-w_g w_{g^{\prime}}/S_w$. Substituting
$\tilde{y}_g = w_g \hat{\lambda}_g/\sqrt{1 - h_g} = w_g
\hat{\lambda}_g \sqrt{S_w}/\sqrt{S_w - w_g^2}$ and expanding
$\tilde{\bm{y}}^{\top}(\bm{I} - \bm{H}_{\bm{Q}})\tilde{\bm{y}}$ yields
\begin{equation} \label{eq: quadratic expanded}
\tilde{\bm{y}}^{\top}\left(\bm{I} -
\bm{H}_{\bm{Q}}\right)\tilde{\bm{y}} = \sum_{g \in \mathcal{G}^{\ast}}
w_g^2 \hat{\lambda}_g^2 - 2 \sum_{g \in \mathcal{G}^{\ast}}
\sum_{\substack{g^{\prime} \in \mathcal{G}^{\ast} \\ g^{\prime} > g}}
\dfrac{w_g^2 w_{g^{\prime}}^2 \hat{\lambda}_g
\hat{\lambda}_{g^{\prime}}}{\sqrt{S_w - w_g^2}\sqrt{S_w -
w_{g^{\prime}}^2}}.
\end{equation}
Dividing by $\abs{\mathcal{G}^{\ast}}^2$ and recalling that $w_g =
\abs{\mathcal{G}^{\ast}}\,
\tilde{n}_g^{\underline{\rho}_g}/\tilde{n}^{\ast}$, the variance
estimator becomes
\begin{equation} \label{eq: var est expanded}
\widehat{\mathrm{se}}^2 = \sum_{g \in \mathcal{G}^{\ast}}
\left(\dfrac{\tilde{n}_g^{\underline{\rho}_g}}{\tilde{n}^{\ast}}\right)^{2}
\hat{\lambda}_g^2 - \dfrac{2}{\abs{\mathcal{G}^{\ast}}^2} \sum_{g \in
\mathcal{G}^{\ast}} \sum_{\substack{g^{\prime} \in \mathcal{G}^{\ast} \\
g^{\prime} > g}} \dfrac{w_g^2 w_{g^{\prime}}^2 \hat{\lambda}_g
\hat{\lambda}_{g^{\prime}}}{\sqrt{S_w - w_g^2}\sqrt{S_w -
w_{g^{\prime}}^2}}.
\end{equation}

Taking expectations of \eqref{eq: var est expanded}, the cross-stratum
independence established above implies
$\E[\hat{\lambda}_g \hat{\lambda}_{g^{\prime}}] =
\E[\hat{\lambda}_g]\,\E[\hat{\lambda}_{g^{\prime}}]$ for $g \neq
g^{\prime}$, and the standard second-moment decomposition gives
$\E[\hat{\lambda}_g^2] = \Var[\hat{\lambda}_g] +
\E[\hat{\lambda}_g]^2$. Applying both identities yields
\footnotesize
\begin{equation} \label{eq: E var est}
\E\!\left[\widehat{\mathrm{se}}^2\right] = \sum_{g \in
\mathcal{G}^{\ast}}
\left(\dfrac{\tilde{n}_g^{\underline{\rho}_g}}{\tilde{n}^{\ast}}\right)^{2}
\!\left(\Var\!\left[\hat{\lambda}_g\right] +
\E\!\left[\hat{\lambda}_g\right]^2\right) -
\dfrac{2}{\abs{\mathcal{G}^{\ast}}^2} \sum_{g \in \mathcal{G}^{\ast}}
\sum_{\substack{g^{\prime} \in \mathcal{G}^{\ast} \\ g^{\prime} > g}}
\dfrac{w_g^2 w_{g^{\prime}}^2 \E[\hat{\lambda}_g]\,
\E[\hat{\lambda}_{g^{\prime}}]}{\sqrt{S_w - w_g^2}\sqrt{S_w -
w_{g^{\prime}}^2}}.
\end{equation}
\normalsize

Subtracting the true variance in \eqref{eq: true var} from \eqref{eq: E
var est}, the $\Var[\hat{\lambda}_g]$ terms cancel, leaving the remainder
\footnotesize
\begin{equation} \label{eq: remainder}
R \coloneqq \E\!\left[\widehat{\mathrm{se}}^2\right] -
\Var\!\left[\hat{\tau}^{\mathrm{tilt}}\right] = \sum_{g \in
\mathcal{G}^{\ast}}
\left(\dfrac{\tilde{n}_g^{\underline{\rho}_g}}{\tilde{n}^{\ast}}\right)^{2}
\E\!\left[\hat{\lambda}_g\right]^2 - \dfrac{2}{\abs{\mathcal{G}^{\ast}}^2}
\sum_{g \in \mathcal{G}^{\ast}} \sum_{\substack{g^{\prime} \in
\mathcal{G}^{\ast} \\ g^{\prime} > g}} \dfrac{w_g^2 w_{g^{\prime}}^2
\E[\hat{\lambda}_g]\,
\E[\hat{\lambda}_{g^{\prime}}]}{\sqrt{S_w - w_g^2}\sqrt{S_w -
w_{g^{\prime}}^2}}.
\end{equation}
\normalsize

It remains to show that $R \geq 0$. By the AM-GM inequality, for any
$g \neq g^{\prime}$ in $\mathcal{G}^{\ast}$,
\begin{equation*}
\dfrac{w_g^2\, \E[\hat{\lambda}_g]}{\sqrt{S_w - w_g^2}} \cdot
\dfrac{w_{g^{\prime}}^2\,
\E[\hat{\lambda}_{g^{\prime}}]}{\sqrt{S_w - w_{g^{\prime}}^2}} \leq
\dfrac{1}{2}\left[\dfrac{w_g^4\,
\E[\hat{\lambda}_g]^2}{S_w - w_g^2} + \dfrac{w_{g^{\prime}}^4\,
\E[\hat{\lambda}_{g^{\prime}}]^2}{S_w - w_{g^{\prime}}^2}\right].
\end{equation*}
Summing over all $\abs{\mathcal{G}^{\ast}}(\abs{\mathcal{G}^{\ast}} -
1)/2$ distinct pairs and noting that each index $g$ appears in
$\abs{\mathcal{G}^{\ast}} - 1$ pairs, the double sum in \eqref{eq:
remainder} is bounded above:
\begin{equation} \label{eq: AM-GM bound}
\sum_{g \in \mathcal{G}^{\ast}} \sum_{g^{\prime} \in \mathcal{G}^{\ast} :\, g^{\prime} > g} \dfrac{w_g^2 w_{g^{\prime}}^2
\E[\hat{\lambda}_g]\,
\E[\hat{\lambda}_{g^{\prime}}]}{\sqrt{S_w - w_g^2}\sqrt{S_w -
w_{g^{\prime}}^2}} \leq \dfrac{\abs{\mathcal{G}^{\ast}} - 1}{2} \sum_{g
\in \mathcal{G}^{\ast}} \dfrac{w_g^4\,
\E[\hat{\lambda}_g]^2}{S_w - w_g^2}.
\end{equation}

It therefore suffices to show that
\begin{equation} \label{eq: sufficient ineq}
\sum_{g \in \mathcal{G}^{\ast}}
\left(\dfrac{\tilde{n}_g^{\underline{\rho}_g}}{\tilde{n}^{\ast}}\right)^{2}
\E\!\left[\hat{\lambda}_g\right]^2 \geq
\dfrac{\abs{\mathcal{G}^{\ast}} - 1}{\abs{\mathcal{G}^{\ast}}^2} \sum_{g
\in \mathcal{G}^{\ast}} \dfrac{w_g^4\,
\E[\hat{\lambda}_g]^2}{S_w - w_g^2}.
\end{equation}
To simplify the right side, note that $S_w - w_g^2 = \sum_{g^{\prime} \neq
g} w_{g^{\prime}}^2$, and substituting $w_g =
\abs{\mathcal{G}^{\ast}}\,\tilde{n}_g^{\underline{\rho}_g}/\tilde{n}^{\ast}$
throughout gives $w_g^4/\abs{\mathcal{G}^{\ast}}^2 =
\abs{\mathcal{G}^{\ast}}^2
(\tilde{n}_g^{\underline{\rho}_g})^4/(\tilde{n}^{\ast})^4$ and $S_w -
w_g^2 = \abs{\mathcal{G}^{\ast}}^2 \sum_{g^{\prime} \neq g}
(\tilde{n}_{g^{\prime}}^{\underline{\rho}_{g^{\prime}}})^2/(\tilde{n}^{\ast})^2$.
After cancellation, \eqref{eq: sufficient ineq} reduces to
\begin{equation} \label{eq: reduced ineq}
\sum_{g \in \mathcal{G}^{\ast}}
\left(\tilde{n}_g^{\underline{\rho}_g}\right)^{2}
\E\!\left[\hat{\lambda}_g\right]^2 \left(\sum_{g^{\prime} \neq g}
\left(\tilde{n}_{g^{\prime}}^{\underline{\rho}_{g^{\prime}}}\right)^{2}\right)
\geq 0.
\end{equation}
Every factor on the left side of \eqref{eq: reduced ineq} is non-negative:
$(\tilde{n}_g^{\underline{\rho}_g})^2 \geq 0$ and
$\E[\hat{\lambda}_g]^2 \geq 0$ hold trivially, and
$\sum_{g^{\prime} \neq g}
(\tilde{n}_{g^{\prime}}^{\underline{\rho}_{g^{\prime}}})^2 > 0$ because
$\abs{\mathcal{G}^{\ast}} \geq 2$ ensures that at least one other stratum
exists. The inequality therefore holds, and $R \geq 0$.
\end{proof}

\begin{rmk}[Consistent upper bound on the standard error]
\label{rem: consistent se}
\Cref{prop: consv var} is a finite-sample result: For any fixed
$\abs{\mathcal{G}^{\ast}} \geq 2$, the expected value of
$\widehat{\mathrm{se}}^2$ is at least as large as the true variance,
regardless of stratum sizes or the number of strata. The standard error
estimator $\widehat{\mathrm{se}} \coloneqq
(\widehat{\mathrm{se}}^2)^{1/2}$ --- the square root of the variance
estimator --- enters the denominator of the test statistic
in Section~5.2 of the manuscript. For the asymptotic validity of
this test, we require that the ratio
$\widehat{\mathrm{se}}^2 / \Var[\hat{\tau}^{\mathrm{tilt}}]$ converges in
probability to a limit that is at least~1, which in turn implies that
$\widehat{\mathrm{se}}$ consistently upper-bounds the true standard
error $\Var[\hat{\tau}^{\mathrm{tilt}}]^{1/2}$. The relevant
asymptotic regime has the number of informative strata
$\abs{\mathcal{G}^{\ast}} \to \infty$ while stratum sizes
$\tilde{n}_g^{\underline{\rho}_g}$ remain bounded --- reflecting the fine
exact stratification in Section~6 of the manuscript, in which many small
strata are formed from combinations of spatial, temporal, and contextual
covariates. This regime contrasts with the one invoked for consistency of
the augmented Difference-in-Means
$\hat{\tau}_g^{\underline{\rho}_g}$ in Section~4 of the manuscript, where
the number of strata is held fixed and the number of encounters within
each stratum grows. \citet{fogarty2018a} establishes the asymptotic
consistency of the HC2 leverage adjustment for the
special case in which each stratum contains exactly one treated unit. Our
setting involves post-stratified designs with arbitrary numbers of treated
and control units per stratum, but the same argument applies: The leverage
adjustment ensures that the ratio
$\widehat{\mathrm{se}}^2 / \Var[\hat{\tau}^{\mathrm{tilt}}]$ converges in
probability to a limit at least as large as 1 under the many-strata
regime, so that $\widehat{\mathrm{se}}$ consistently upper-bounds the true
standard deviation.
\end{rmk}

\begin{rmk}[Asymptotically valid inference] \label{rem: clt inference}
The test statistic described in Section~5.2 of the manuscript is
$\hat{\tau}^{\mathrm{tilt}} / \widehat{\mathrm{se}}$. Under the
many-strata regime of \Cref{rem: consistent se}, its null distribution is
stochastically dominated by the standard normal. The argument proceeds in
three steps.

First, a Lindeberg-type central limit theorem ensures that centering the
tilted statistic at its expectation and dividing by its true standard
error yields a quantity that converges in distribution to a standard
normal:
\begin{equation*}
\dfrac{\hat{\tau}^{\mathrm{tilt}} -
\E[\hat{\tau}^{\mathrm{tilt}}]}{\Var[\hat{\tau}^{\mathrm{tilt}}]^{1/2}}
\xrightarrow{d} N(0,1).
\end{equation*}

Second, Proposition~2 of the manuscript establishes that
$\E[\hat{\tau}^{\mathrm{tilt}}] \leq 0$ under the null for an
upper-tailed test. Since subtracting a nonpositive number adds a
nonnegative quantity, the centered numerator
$\hat{\tau}^{\mathrm{tilt}} - \E[\hat{\tau}^{\mathrm{tilt}}]$ is weakly
larger than $\hat{\tau}^{\mathrm{tilt}}$ alone. The uncentered ratio
$\hat{\tau}^{\mathrm{tilt}} /
\Var[\hat{\tau}^{\mathrm{tilt}}]^{1/2}$ is therefore weakly smaller than
the centered version, and hence stochastically dominated by the standard
normal.

Third, \Cref{prop: consv var} and the consistency result in \Cref{rem:
consistent se} together imply that $\widehat{\mathrm{se}} \geq
\Var[\hat{\tau}^{\mathrm{tilt}}]^{1/2}$ in the limit. Replacing the true
standard deviation with the larger $\widehat{\mathrm{se}}$ in the
denominator makes the ratio weakly smaller still, preserving the
stochastic domination.

Combining all three steps, the $p$-value computed from the standard normal
reference distribution is conservative, and the hypothesis test controls
the Type~I error rate at the nominal level asymptotically, for all
potential outcome configurations consistent with the null and all
assignment mechanisms consistent with the $\Gamma$-bound.
\end{rmk}

\section{The NYPD UF-250 Stop, Question, and Frisk Report Worksheet} \label{supp: uf-250}

The NYPD's UF-250 form, which is reproduced in \Cref{fig: UF250}, records administrative data for each stop conducted under the SQF program. Information on the form includes the officer's description of the encounter circumstances, the civilian's demographic characteristics, and whether force was used.

\begin{figure}[H]
\centering
\includegraphics[width=0.731\linewidth]{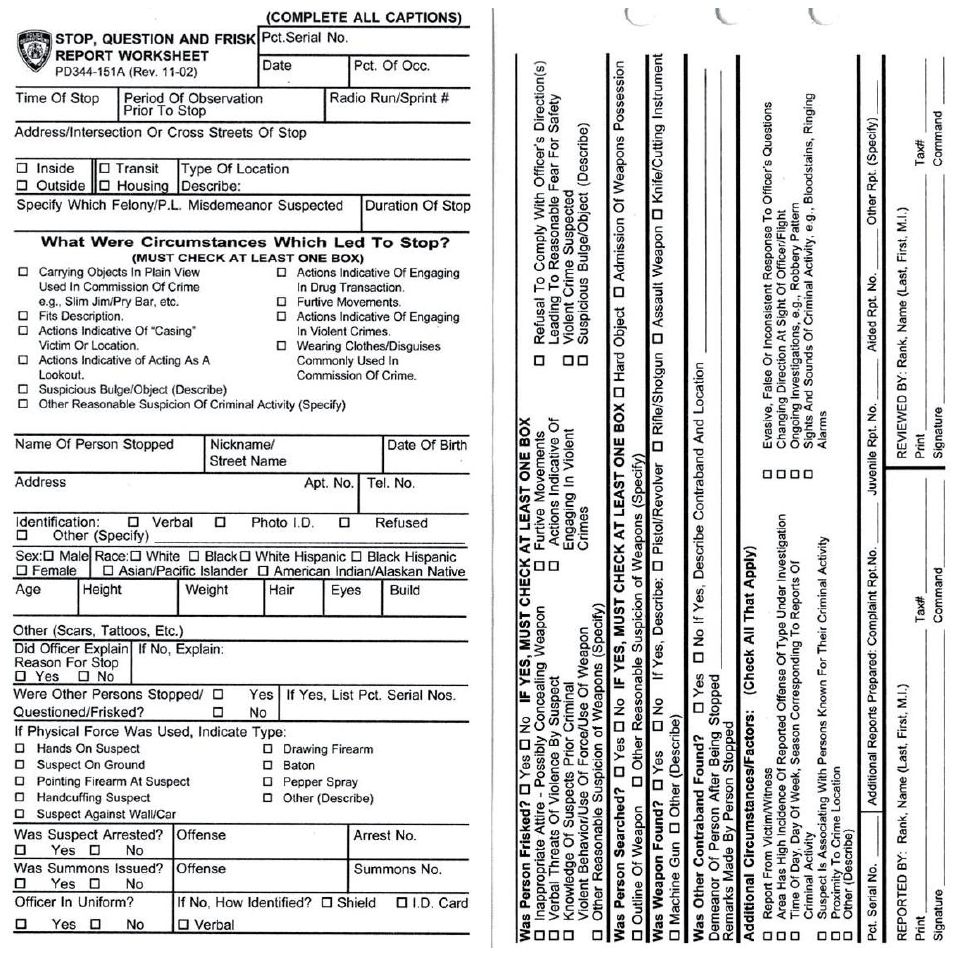}
\caption{The NYPD UF-250 Stop, Question, and Frisk Report Worksheet (form PD344-151A, Rev.\ 11-02), the administrative worksheet officers complete for each stop conducted under the SQF program. Blank copies of the form were entered into evidence as exhibits in \textit{Floyd v.\ City of New York}, No.\ 08 Civ.\ 01034 (S.D.N.Y.), including as part of the October 2010 expert report of Jeffrey Fagan.}
\label{fig: UF250}
\end{figure}

\section{Construction of Impact Zone Geographic Boundaries} \label{supp: impact zone georef}

Beginning in January 2003, the NYPD's Operation Impact concentrated large numbers of officers in small, high-crime areas designated as Impact Zones \citep{macdonaldetal2016}. Impact Zones were located in predominantly minority neighborhoods and had substantially higher stop rates than surrounding areas, so an encounter inside a zone was both more likely to be with a minority civilian and subject to a different policing context than an encounter outside one. To ensure that encounters inside and outside these zones are not compared directly, we include impact zone membership as a post-stratification variable (\Cref{sec: application}).

No official GIS boundaries for these zones have been released publicly. The replication archive that accompanies \citet{macdonaldetal2016}, hosted at \url{https://github.com/macdonaldjohn/Impact-Zone-Data}, contains Stata do-files for the published models together with block-group-month-year panels of crime and arrest counts for 2004--2012; zone membership appears in these panels only as precomputed binary indicators keyed to anonymized block-group identifiers, with no spatial coordinates. The only published spatial representation of the zones is Figure~1 of \citet{macdonaldetal2016}: fifteen small panels, one per zone (Impact Zones III--XVII, activated between January 2004 and January 2012), each showing dark polygons on a schematic map of the five New York City boroughs. We reconstructed zone boundaries from this figure using image processing and georeferencing.

The remainder of this section describes the pipeline and the accuracy of the resulting classification. To reproduce the entire pipeline, run \texttt{make impact-zones} from the replication archive; to classify SQF encounters by zone membership, run \texttt{make classify-zones}; to run the test suite, run \texttt{make impact-zones-test}. The pipeline requires a Python environment (dependencies in \texttt{Impact\_Zones/pyproject.toml}) and R (packages managed via \texttt{renv}). The individual script names mentioned below serve as a guide to the archive for readers who want to inspect specific steps.

\subsection{Image segmentation and panel alignment}

The source figure is a $2{,}250 \times 1{,}586$ pixel raster image arranged as a $5 \times 3$ grid of panels. We split it into fifteen individual panels using ImageMagick (\texttt{split\_fig1.sh} in the replication archive). Because text labels below each panel differ in height across rows, the NYC map sits at different vertical positions across panels---up to 57 pixels of shift between rows, with an additional horizontal drift of roughly 0.4 pixels per column. We corrected these offsets using phase-correlation image registration (OpenCV), registering every panel to the pixel grid of panel~00 (\texttt{align\_panels.py}). After alignment, a single set of ground control points placed on one panel applies to all fifteen.

\subsection{Ground control points and thin-plate spline georeferencing}

We placed 43 ground control points (GCPs) by hand using the QGIS Georeferencer, matching distinctive NYC shoreline landmarks --- Battery Park, Inwood, Throgs Neck, Red Hook, and others --- visible in the raster image to corresponding points on the NYC borough boundary shapefile. The shapefile uses the EPSG:2263 coordinate reference system (NAD83 / NY Long Island, US survey feet). A script (\texttt{candidate\_landmarks.py}) generates a point layer of recommended shoreline targets snapped to the borough boundary, making the GCP selection reproducible.

We then fitted a thin-plate spline (TPS) transform from pixel coordinates to geographic coordinates. The TPS is the standard method for mapping between coordinate systems using landmark data, and handles the local distortions present in schematic maps without imposing a rigid global model such as an affine or polynomial transform \citep[for technical details, see][]{bookstein1989}.

We selected the TPS smoothing parameter via leave-one-out cross-validation (LOOCV). Without smoothing, the LOOCV median error is 471~feet with a maximum of 1{,}824~feet and a systematic eastward displacement in upper Manhattan caused by overfitting to GCP placement noise. At the chosen smoothing parameter value of 5{,}000, the median drops to 351~feet (approximately 1.1 pixels), the maximum drops to approximately 1{,}350~feet (4.4 pixels), and the systematic displacement disappears.

\subsection{Polygon extraction}

We extracted the dark polygons representing impact zones from each aligned panel using a two-layer intensity thresholding strategy (\texttt{georef\_extract.py}). The first layer (threshold at pixel intensity 200) captures the main zone polygons, whose intensities range from roughly 86 to 170. The second layer (threshold at intensity 220) captures thinner features that trace street boundaries at intensities between 200 and 220, while excluding pixels near the coastline (intensity above 245), pixels near NYC borough boundary lines (identified by projecting the borough shapefile into pixel space via the inverse TPS and dilating the resulting mask), and a small noise region in the northeast corner of the image.

The main pipeline script (\texttt{georef\_pipeline.py}) orchestrates the full workflow: loading each aligned panel, calling the extraction and TPS modules, transforming pixel contours to geographic coordinates, clipping all polygons to the NYC land boundary to remove artifacts in water, and dissolving polygons by zone. The output is a GeoJSON file containing 346 polygons in WGS84 (EPSG:4326), each attributed with zone name, activation date, and area.

\subsection{Classification of SQF encounters}

Each SQF encounter record includes $x$/$y$ coordinates in EPSG:2263 for years 2006 onward; coordinates are unavailable for 2003--2005. We classify encounters into three tiers (\texttt{classify\_impact\_zones.R}). First, we spatially join encounters with valid coordinates against the georeferenced zone polygons and classify an encounter as ``inside'' an impact zone only if it falls within a zone that had been activated on or before the date of the encounter. Second, we classify encounters without coordinates as ``outside'' when their precincts have zero spatial overlap with any zone polygon. Third, remaining encounters---those with missing coordinates in precincts that partially overlap a zone---receive a missing value and enter the post-stratification through a missingness indicator.

The same script assigns 2010 Census block, block group, and tract identifiers to all encounters with valid coordinates by spatial join against Census block shapefiles (TIGER/Line). Block group and tract identifiers are derived from the 15-digit block GEOID. The geographic calibration of $\Gamma$ described in \Cref{sec: interpretation} uses these census variables.

\subsection{Sources of positional uncertainty} \label{supp: georef accuracy}

Two sources of error contribute to the positional uncertainty of the reconstructed zone boundaries. Because these errors compound, we assess each separately and then jointly.

\paragraph{Source image resolution (${\sim}307$ feet per pixel).} This is the dominant constraint. The source figure was designed as a schematic illustration, not a GIS product. Each pixel spans approximately 307~feet (94~meters). For context, a typical Manhattan cross-avenue block is roughly 250--300~feet, so one pixel corresponds to approximately one short city block. A Manhattan cross-street block (the longer dimension) is roughly 750--900~feet, or about 2.5--3 pixels. Zone boundaries in the source image are drawn at a width of one to two pixels, so the true boundary could lie anywhere within a band roughly 300--600~feet wide. No downstream processing can recover geographic detail that the source image does not contain.

\paragraph{TPS transform error (median 351 feet, maximum ${\sim}1{,}350$ feet).} The LOOCV measures how accurately the transform maps pixel locations to geographic coordinates. The median error (351~feet, approximately 1.1 pixels) is comparable to the pixel resolution itself, meaning the transform adds little additional uncertainty beyond what the source image already imposes. The maximum error (approximately 1{,}350~feet, or 4.4 pixels) occurs at a small number of GCPs in areas where the schematic map is most distorted relative to true geography. In NYC terms, 1{,}350~feet is roughly five short blocks or 1.5 long blocks.

\paragraph{Combined positional uncertainty.} In the worst case, boundary resolution and transform error compound: a boundary pixel could be mislocated by up to approximately 1{,}650~feet (roughly 500~meters, or about six short Manhattan blocks). In the typical case---median TPS error plus half a pixel of boundary ambiguity---positional uncertainty is roughly 500~feet (about 150~meters, or two short blocks). Encounters well inside a zone, many blocks from any boundary, are classified correctly regardless of this uncertainty. Encounters well outside all zones are likewise unaffected.

\paragraph{Where the error matters and where it does not.} The classification is binary---inside versus outside a zone at the time of the encounter---so error matters only for encounters near a zone boundary. Within the roughly 500-foot band around each boundary, the classification is genuinely ambiguous: an encounter one block inside the drawn boundary might in truth lie one block outside it, or vice versa. This ambiguity is an irreducible consequence of reconstructing boundaries from a schematic figure rather than from an official GIS layer that does not exist.

The analysis handles this uncertainty in two ways. First, encounters with missing coordinates (primarily 2003--2005) are never classified by spatial join; those in precincts that partially overlap a zone receive a missing value and enter the post-stratification through a separate missingness indicator, so they do not contaminate the inside/outside distinction. Second, impact zone membership enters the analysis only as a post-stratification covariate---it defines strata, not the civilian-race indicator or the outcome. A small rate of misclassification near zone boundaries produces slightly coarser strata, merging a few near-boundary encounters into a neighboring stratum, rather than biasing the estimand. The sensitivity analysis then operates on top of these strata.

\subsection{Visual verification and independent quality check}

We verified the pipeline's output by two complementary methods. First, for each of the fifteen panels, we generated side-by-side comparison images (\texttt{compare\_zones.py}) showing the original panel alongside the extracted polygons rendered on a synthetic NYC background (\Cref{fig: compare panel tps}). Second, we produced a diagnostic overlay (\texttt{check\_gcps.py}) that projects the NYC borough boundary shapefile back onto the raster image via the inverse TPS and marks the GCP locations (\Cref{fig: diagnostic tps overlay}). The red boundary lines trace the gray coastline in the image to within one to two pixels across the five boroughs.

\begin{figure}[H]
\centering
\includegraphics[width=0.85\linewidth]{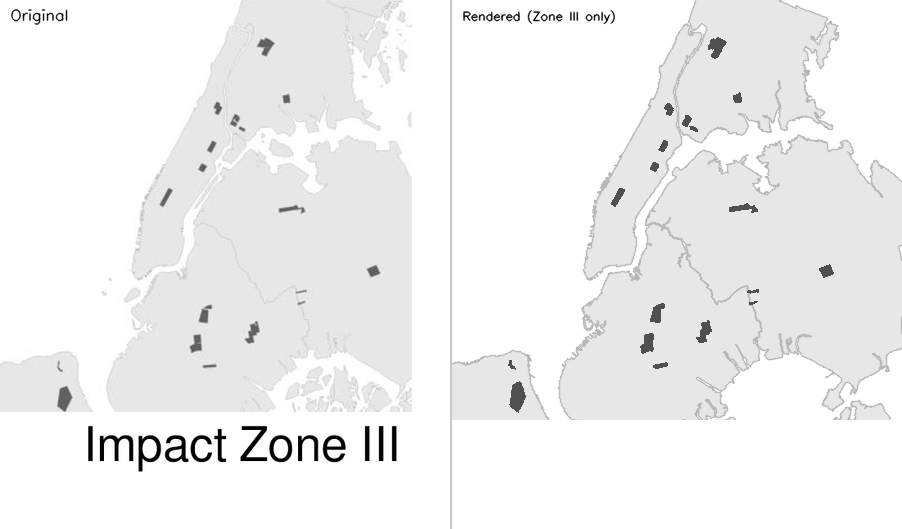}
\caption{Side-by-side comparison for Impact Zone III (panel 00). Left: original panel from \citet{macdonaldetal2016}, Figure~1. Right: extracted polygons rendered on the NYC borough outline using the thin-plate spline transform. Dark regions in the right panel correspond to the georeferenced zone boundaries used in the analysis.}
\label{fig: compare panel tps}
\end{figure}

\begin{figure}[H]
\centering
\includegraphics[width=0.5\linewidth]{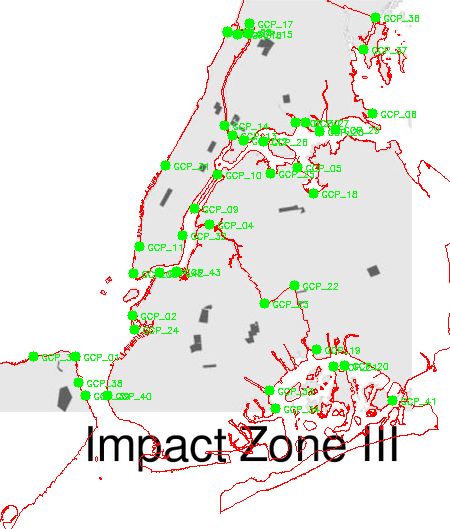}
\caption{Diagnostic overlay for the thin-plate spline transform. Red lines show the NYC borough boundary shapefile projected back onto the raster image via the inverse TPS. Green dots mark the 43 ground control points. The close agreement between the red lines and the gray coastline in the image confirms that the transform aligns geographic and pixel coordinates to within one to two pixels across the five boroughs.}
\label{fig: diagnostic tps overlay}
\end{figure}

As an independent quality check, we also reconstructed impact zone boundaries from 2010 Census block groups using the precinct and neighborhood identifiers in the replication data of \citet{macdonaldetal2016data}. We do not use this block-group reconstruction in the main analysis because the direct georeferencing approach---manual GCP placement with leave-one-out cross-validation---provides transparent, quantified error at each step. The block-group reconstruction serves as a consistency check: the two approaches produce zone boundaries that agree to within the resolution of the source image.

\subsection{Zone activation dates}

\Cref{tab: zone activation} reports the activation date for each of the fifteen impact zones depicted in Figure~1 of \citet{macdonaldetal2016}. Dates are taken from the Stata replication code of \citet{macdonaldetal2016data}; they are not printed on the figure itself. Once a zone is activated, we treat it as an impact zone for all subsequent encounters in the SQF data window, so each encounter is classified as inside a zone whenever it falls within a polygon that had been activated on or before the encounter date.

One caveat applies at the early end of the SQF data window. Impact Zones~I and~II, which were active in 2003, are not depicted in Figure~1 of \citet{macdonaldetal2016} and are not reconstructed here; the MacDonald replication panels begin in 2004 and do not cover 2003. SQF encounters in 2003 therefore receive no zone classification from this pipeline and enter the post-stratification through a missingness indicator.

\begin{table}[H]
\centering
\begin{tabular}{lll}
\toprule
Panel & Zone & Active from \\
\midrule
00 & Impact Zone III   & January 2004          \\
01 & Impact Zone IV    & January 2005          \\
02 & Impact Zone V     & July 2005             \\
03 & Impact Zone VI    & January 2006          \\
04 & Impact Zone VII   & June 2006             \\
05 & Impact Zone VIII  & January 2007          \\
06 & Impact Zone IX    & July 2007             \\
07 & Impact Zone X     & January 2008          \\
08 & Impact Zone XI    & July 2008             \\
09 & Impact Zone XII   & January 2009          \\
10 & Impact Zone XIII  & July 2009             \\
11 & Impact Zone XIV   & January 2010          \\
12 & Impact Zone XV    & August 2010           \\
13 & Impact Zone XVI   & January / August 2011 \\
14 & Impact Zone XVII  & January 2012          \\
\bottomrule
\end{tabular}
\caption{NYPD Impact Zone activation dates. Activation dates are taken from the Stata replication code of \citet{macdonaldetal2016data}. Once a zone is activated, we treat it as an impact zone for all subsequent encounters in the SQF data window.}
\label{tab: zone activation}
\end{table}

\section{Construction of the Geographic Ceiling}\label{sec: supp geo ceiling}

To construct $\Gamma_g^{\mathrm{geo}}$, we follow \citet{zhaoetal2022} in using 2010 Census data to characterize the racial composition of the areas where each stratum's encounters occur. We assign each encounter in the post-stratified data to the Census block group containing its geographic coordinates. For each block group $b$, let $f_b$ denote the minority fraction (Black and Hispanic residents as a share of the total population) and define the minority-to-white population odds $\eta_b \coloneqq f_b / (1 - f_b)$. If patrol were concentrated in block group $b$, the minority encounter odds would roughly reflect $\eta_b$. The residential population of a block group need not match the population an officer actually encounters on patrol --- who is outdoors, at what time, and in what context all matter --- but $\eta_b$ provides a rough benchmark for the demographic structure of the area through which patrols move.

For each stratum $g$, let $\mathcal{B}_g$ denote the set of block groups containing its encounters. To summarize the distribution of $\{\eta_b : b \in \mathcal{B}_g\}$, we compare upper and lower population-weighted quantiles after trimming a fraction $\xi \in [0,0.5)$ from each tail. Let $\eta_g^{(\xi)}$ denote the $\xi$-th population-weighted quantile of $\{\eta_b : b \in \mathcal{B}_g\}$. The geographic ceiling is then defined as $\Gamma_g^{\mathrm{geo}}(\xi) \coloneqq \eta_g^{(1 - \xi)}/\eta_g^{(\xi)}$. The parameter $\xi$ determines how strongly the bound relies on extreme demographic contrasts. When $\xi = 0$, $\Gamma_g^{\mathrm{geo}}(0)$ equals the ratio of the largest and smallest values of $\eta_b$ across block groups in $\mathcal{B}_g$, corresponding to the most extreme scenario in which patrol concentrates entirely in one block group versus another. Without trimming, a single block group with an unusual demographic composition can drive the ceiling even if few encounters occur there. Larger values of $\xi$ exclude these outliers and instead compare more typical locations --- e.g., the $10$th and $90$th percentiles when $\xi = 0.1$. Because these quantiles lie closer together, the resulting ratio is smaller. Thus, larger values of $\xi$ impose tighter ceilings on $\Gamma_g$, reflecting the possibility that patrol spans several locations rather than concentrating exclusively in the most demographically extreme areas.

Encounters within each stratum are already geographically concentrated because the strata condition on precinct and, when available, sector and beat, along with Impact Zone indicators and other spatial covariates. In the data, the median stratum contains encounters in only two block groups. Strata whose encounters all occur in a single block group therefore have $\Gamma_g^{\mathrm{geo}}(\xi) = 1$ by construction, since no within-stratum demographic variation exists. Encounter coordinates are unavailable for 2003--2005, so $\Gamma_g^{\mathrm{geo}}$ cannot be computed directly for strata in those years. Instead, we assign each such stratum the ceiling computed from later-year strata sharing the same geographic identifiers --- precinct, sector, and beat when observed --- so that the bound reflects the demographic variation within the same patrol area.

%

\lstdefinelanguage{Rlang}{
  keywords={library,data,function,if,else,for,in,return,TRUE,FALSE,NULL,
            NA,seq,c,head,min,max,any,sum,filter,group_by,summarise,
            mutate,cat,sprintf,round,sqrt},
  sensitive=true,
  morecomment=[l]{\#},
  morestring=[b]",
  morestring=[b]'
}

\lstset{
  language=Rlang,
  basicstyle=\small\ttfamily\singlespacing,
  keywordstyle=\bfseries,
  commentstyle=\itshape\color{gray},
  stringstyle=\color{darkgray},
  showstringspaces=false,
  breaklines=true,
  breakatwhitespace=true,
  frame=single,
  framesep=3pt,
  xleftmargin=3pt,
  xrightmargin=3pt,
  aboveskip=\medskipamount,
  belowskip=\medskipamount,
  columns=fullflexible,
  keepspaces=true,
  literate={<-}{{$\leftarrow$}}2 {>=}{{$\geq$}}2 {<=}{{$\leq$}}2
}

\section{Replication with the \texttt{jointsens} R Package}
\label{supp: jointsens replication}

The \texttt{jointsens} R package implements the joint sensitivity
analysis developed in this paper. It is available at
\url{https://github.com/XXXX/jointsens} and can be
installed with:

\begin{lstlisting}
# install.packages("remotes")
remotes::install_github("XXXX/jointsens")
\end{lstlisting}

\noindent Below we demonstrate how to replicate the key findings from
Section~6 using the package functions. The analysis requires the
post-stratified SQF dataset (\texttt{poststrat\_sqf\_data.rda}),
which can be reproduced from the replication archive by running
\texttt{make poststrat} in the paper repository. The variable
\texttt{minority} is the civilian-race indicator (1 = Black or Latino,
0 = white), \texttt{force\_any} is the binary outcome (any police
force used), and \texttt{poststratum\_id} identifies the post-strata.

\subsection{Setup and baseline estimate}

\begin{lstlisting}
library(jointsens)
library(dplyr)

load("poststrat_sqf_data.rda")

# Pre-compute per-stratum summary (once, reused for all analyses)
strat_summ <- precompute_strat_summary(
  poststrat_sqf_data,
  treat_var   = minority,
  outcome_var = force_any,
  stratum_var = poststratum_id
)
\end{lstlisting}

\noindent At the baseline ($\underline{\rho} = 0$, $\Gamma = 1$),
the analysis assumes no discrimination in stops and no bias in
encounters. The tilted estimate reduces to the ordinary stratified
difference-in-means:

\begin{lstlisting}
n_oms_baseline <- compute_n_oms_from_rho(
  lb_rho = 0, n0_obs = strat_summ$n0_obs, n1 = strat_summ$n1
)
baseline <- fast_tilted_estimate(
  strat_summ, n_oms = n_oms_baseline,
  Gamma = 1.0, alternative = "greater", tau0 = 0
)

cat("Baseline estimate:", round(baseline$tau_hat * 100, 2), "pp\n")
cat("SE:", round(sqrt(baseline$var_hat), 4), "\n")
# Baseline estimate: 2.30 pp
# SE: 0.0026
\end{lstlisting}

\noindent This matches the baseline result reported in Section~6:
the weighted difference-in-means is approximately
$\BaselineTauPP$ percentage points with a standard error of
$\BaselineSE$.

\subsection{Robustness frontier: the p-value surface}

The central analysis sweeps over a grid of $(\underline{\rho}, \Gamma)$
values, testing $H_0\colon \tau = 0$ against the one-sided alternative
$H_1\colon \tau > 0$ at each point:

\begin{lstlisting}
pval_grid <- fast_grid_sweep(
  strat_summ,
  lb_rho_grid = seq(0, 1, by = 0.05),
  Gamma_grid  = seq(1.0, 1.5, by = 0.001),
  alternative = "greater",
  tau0        = 0
)
head(pval_grid)
\end{lstlisting}

\noindent The output is a data frame with one row per
$(\underline{\rho}, \Gamma)$ combination and columns for the tilted
estimate, variance, test statistic, and upper-tail p-value. This
data frame produces the heatmap in Figure~2 of the main text.

We can verify the key thresholds reported in Section~6. With no
discrimination in stops ($\underline{\rho} = 0$), the finding first
becomes insignificant at $\Gamma = \GammaRhoZeroInsig$:

\begin{lstlisting}
# Smallest Gamma where p > 0.05 at rho = 0
pval_grid |>
  filter(rho_lb == 0, p_upper > 0.05) |>
  summarise(Gamma_star = min(Gamma))
# Gamma_star = 1.06
\end{lstlisting}

\noindent At the other extreme, the smallest $\Gamma$ at which the
test is insignificant for \emph{all} values of $\underline{\rho}$ is
$\Gamma = \GammaAllInsigMin$:

\begin{lstlisting}
# Smallest Gamma where p > 0.05 for ALL rho values
pval_grid |>
  group_by(Gamma) |>
  summarise(all_insig = all(p_upper > 0.05)) |>
  filter(all_insig) |>
  summarise(Gamma_all_insig = min(Gamma))
# Gamma_all_insig = 1.33
\end{lstlisting}

\subsection{Confidence sets at plausible $\underline{\rho}$}

For the plausible range $\underline{\rho} \in \{0.32, 0.34\}$ (derived
from \citealt{knoxetal2020}), we construct 95\% confidence
sets by inverting the two-sided test across a grid of null values
$\tau_0$:

\begin{lstlisting}
cs_result <- fast_conf_set_sweep(
  strat_summ,
  lb_rho_grid = c(0.32, 0.34),
  Gamma_grid  = seq(1.0, 1.5, by = 0.0001),
  tau_grid    = seq(-0.2, 0.4, by = 0.0001),
  alpha       = 0.05,
  validate    = TRUE
)
\end{lstlisting}

\noindent The confidence set at each $(\underline{\rho}, \Gamma)$ is
the set of $\tau_0$ values not rejected at level $\alpha/2$ in either
tail:

\begin{lstlisting}
alpha <- 0.05

conf_sets <- cs_result |>
  group_by(rho_lb, Gamma) |>
  summarise(
    ci_low  = min(tau0[p_upper >= alpha/2 & p_lower >= alpha/2]),
    ci_high = max(tau0[p_upper >= alpha/2 & p_lower >= alpha/2]),
    tau_HL  = median(tau0[p_upper >= alpha/2 & p_lower >= alpha/2]),
    .groups = "drop"
  )
\end{lstlisting}

\noindent The $\Gamma$ at which the confidence interval first includes
zero---the ``changepoint''---can be extracted directly:

\begin{lstlisting}
conf_sets |>
  group_by(rho_lb) |>
  filter(ci_low <= 0, ci_high >= 0) |>
  summarise(changepoint = min(Gamma))
# rho_lb = 0.32: changepoint = 1.32
# rho_lb = 0.34: changepoint = 1.33
\end{lstlisting}

\noindent These match the changepoint values reported in Section~6:
at $\underline{\rho} = 0.32$, the confidence interval first contains
zero at $\Gamma^* = \ChangePointLower$; at $\underline{\rho} = 0.34$,
the changepoint is $\Gamma^* = \ChangePointUpper$. In other words, the
finding of discrimination in force use is robust to encounter-bias odds
ratios up to approximately $1.3$ under plausible levels of
discrimination in stops.

\section*{Disclosure of AI Use}

In accordance with the Taylor \& Francis AI Policy, we disclose the following use of generative AI tools in the preparation of this manuscript. We used Claude (Anthropic; Claude Sonnet 4.5 and Claude Opus 4.7, accessed via the Claude.ai web interface and the Claude Code command-line tool, 2026) and Codex (OpenAI; GPT 5.5 using the Codex command-line tool) for two purposes.

First, we used Claude Code to assist with the reproducible analysis pipeline. The core statistical code --- including the tilted IPW test statistic, the conservative HC2-style variance estimator, and the sequential sensitivity analysis procedure --- was originally written by the authors without AI assistance. Claude Code was subsequently used to improve computational efficiency and code organization including the generation of unit tests. For the reconstruction of NYPD Impact Zone boundaries described in Supplement Section~S.8,  Claude Code assisted with implementation of the image processing, thin-plate spline georeferencing, polygon extraction, and spatial classification of SQF encounters. All code was reviewed, tested, and validated by the authors, and numerical results reported in the manuscript were verified against independent implementations and visual diagnostics.

Second, we used Claude for editing and revision of manuscript prose, including the main text, the abstract, and the presentation of formal results in the supplement. This use included suggestions on sentence-level clarity, word choice, and structural organization such as suggestions about how to condense the writing to fit into the page limits of the journal. The mathematical content of the framework, including the derivation of all formal results, was developed by the authors without AI assistance. The authors reviewed and edited all AI-suggested revisions and remain fully accountable for the content of the manuscript. No AI tool was used to generate or manipulate research data, analytical results, or figures; to derive the formal framework; or to select the estimand or assumptions that organize the paper.

\end{document}